\documentclass[journal,twoside,web]{IEEEtran}

\usepackage{textcomp}
\usepackage{graphicx}
\usepackage{xcolor}

\usepackage{bbm}
\usepackage{mathrsfs}
\usepackage{verbatim}
\usepackage{amsmath}
\usepackage{amsfonts}
\usepackage{bm}
\usepackage{caption}
\usepackage{float}

\usepackage{amsthm}
\usepackage{mathtools}
\usepackage{cite}
\usepackage{hyperref}
\let\inf\relax 
\DeclareMathOperator*\inf{\vphantom{p}inf}
\usepackage{subfigure}
\allowdisplaybreaks

\theoremstyle{definition}
\newtheorem{assumption}{Assumption}
\newtheorem{definition}{Definition}
\newtheorem{remark}{Remark}
\newtheorem{problem}{Problem}

\theoremstyle{plain}

\newtheorem{theorem}{Theorem}
\newtheorem{lemma}{Lemma}


\begin{document}

\title{Approximate Information States for Worst-Case Control and Learning in Uncertain Systems}

\author{Aditya Dave, {\IEEEmembership{Member, IEEE,}} Nishanth Venkatesh, {\IEEEmembership{Student Member, IEEE,}} and Andreas A. Malikopoulos, {\IEEEmembership{Senior Member, IEEE}} 
	\thanks{This research was supported by NSF under Grants CNS-2149520 and CMMI-2219761.}
	\thanks{The authors are with the School of Civil and Environmental Engineering, Cornell University, Ithaca, NY (email: \texttt{a.dave@cornell.edu; ns924@cornell.edu; amaliko@cornell.edu).}} }

\maketitle

\begin{abstract}
In this paper, we investigate discrete-time decision-making problems in uncertain systems with partially observed states. We consider a non-stochastic model, where uncontrolled disturbances acting on the system take values in bounded sets with unknown distributions. 
We present a general framework for decision-making in such problems by using the notion of the information state and approximate information state, \textcolor{black}{and introduce conditions to identify an uncertain variable that can be used to compute an optimal strategy through a dynamic program (DP). 
Next, we relax these conditions and define \textit{approximate information states} that 
can be learned from output data without knowledge of system dynamics. We use approximate information states to formulate a DP that yields a strategy with a bounded performance loss.} 
Finally, we illustrate the application of our results in control and reinforcement learning using numerical examples.
\end{abstract}

\begin{IEEEkeywords}
Uncertain systems, worst-case control, approximate dynamic programming, robust reinforcement learning 
\end{IEEEkeywords}

\IEEEpeerreviewmaketitle

\section{Introduction}
\label{section:Introduction}

Decision-making under incomplete information is a fundamental problem in modern engineering applications involving cyber-physical systems \cite{kim2012cyber}, e.g., connected and automated vehicles \cite{Malikopoulos2020} and
social media platforms \cite{Dave2020SocialMedia}. In such applications, an agent must sequentially select control inputs to a dynamic system using only partial observations while accounting for uncontrolled disturbances that interfere with the system's evolution.
The most common modeling paradigm for such decision-making problems is \textit{the stochastic approach,} where all disturbances to the system are considered as random variables with known distributions, and the agent selects a decision-making strategy to minimize the \textit{expected incurred cost} \cite{varaiya_book}. Stochastic models have been utilized for problems in both 
control theory \cite{Malikopoulos2016b, ahmadi2020control, mahajan2012, dave2019decentralized, Dave2021a, Dave2021nestedaccess, Malikopoulos2021} and reinforcement learning \cite{ mishra2021decentralized, zhang2021multi,Malikopoulos2022a,kao2022common,Malikopoulos2024}. 
A strategy derived using the stochastic approach performs optimally on average 
across numerous operations of the system.
However, this performance degrades with any mismatch between the distributions assumed in modeling and the frequency of realizations encountered in implementation \cite{mannor2007bias}. Furthermore, many safety-critical applications require guarantees on the agent's performance during each operation \cite{brunke2022safe}, making expected cost an inadequate measure.

\textit{The non-stochastic approach} is an alternate modeling paradigm for safety-critical systems, where all disturbances are considered to belong to known sets with unknown distributions. The agent aims to select a decision-making strategy that minimizes the \textit{worst-case incurred cost} across a time horizon \cite{bacsar2008h}. Because this approach focuses on robustness against worst-case disturbances, the resulting strategy yields more conservative decisions than the stochastic approach. At the expense of average performance, we receive guarantees for each system operation. Thus, this approach has been applied to adversarial settings, e.g., cyber-security \cite{rasouli2018scalable} or cyber-physical systems \cite{shoukry2013minimax}, and failure-critical settings, e.g., water reservoirs \cite{giuliani2021state} or power systems \cite{zhu2011robust}.

In this paper, we propose a framework for non-stochastic decision-making using only partial observations in a dynamic system. When the system's dynamics are known to the agent, this problem falls under the purview of control theory \cite{aastrom2014control}. However, many applications involve decision-making with incomplete knowledge of the dynamics as, e.g., mixed traffic driving \cite{ venkatesh2023stochastic} and human-robot coordination \cite{hadfield2016cooperative, faros2023adherence}, or decision-making without a reliable state-space model, e.g., healthcare \cite{fatemi2021medical}. These restrictions yield a reinforcement learning problem \cite{kiumarsi2017optimal, recht2019tour}. 
To account for both cases, we formulate our problem using only output variables without assuming a known state-space model. In our exposition, we present rigorous definitions for the notions of \textit{information states} and \textit{approximate information states}. 
Using these notions, a surrogate state-space model can be constructed from output variables.   
This surrogate model can be used to formulate a control problem with full-state observation, whose solution yields either an optimal or an approximate strategy of the original problem. In reinforcement learning problems, the surrogate model can be learned from output data. Given a surrogate model, the agent can derive a decision-making strategy using standard techniques \cite{bertsekas2005dynamic}.  

\vspace{-6pt}

\subsection{Related Work}

\textit{1) Control theory:} There have been numerous research efforts in control theory to study dynamic decision-making problems given the system dynamics. For both stochastic and non-stochastic models, an agent can derive an optimal decision-making strategy offline using a dynamic programming (DP) decomposition \cite{bertsekas1971control}. 
For systems with perfectly observed states, the agent's optimal action at each instance of time is simply a function of the state. Using this property in a DP facilitates the efficient computation of an optimal control strategy \cite{hernandez2012discrete, moon2015minimaxcontrol}. In contrast, for partially observed systems, any optimal action is a function of the agent's entire memory of past observations and actions, which grows in size with time \cite{bertsekas2012dynamic}. Subsequently, the domain of the optimal strategy grows with time, and the DP for the problem requires a large number of computations \cite{papadimitriou1987complexity}.
This concern is alleviated using an \textit{information state} instead of the memory \cite{bernhard1996separation}. 

The most commonly used information state in stochastic control is the \textit{belief state}, i.e., a distribution on the state space conditioned on the agent's memory \cite{puterman2014markov, Dave2020a}. A general notion of information states for stochastic control was recently defined in \cite{subramanian2019approximate}. 
For non-stochastic control problems, the DP decomposition has been simplified using two well-known information states: (1) the \textit{conditional range}, which is the set of feasible states at any time consistent with the agent's memory \cite{cong2021rethinking} and can be used in both terminal cost \cite{ bertsekas1973sufficiently, gagrani2017decentralized, Dave2021minimax, moitie2002optimal} and instantaneous cost problems \cite{piccardi1993infinite, bernhard2003minimax, gagrani2020worst}; and (2) the \textit{maximum cost-to-come}, which is the maximum accrued cost functional \cite{james1994risk} used in additive cost problems \cite{bernhard2000max, coraluppi1999risk, Dave2023infhorizon}. 

A general notion of an information state for non-stochastic terminal cost problems was presented in \cite{Dave2022approx}. Information states have also been derived for risk-sensitive formulations \cite{baauerle2017partially} and distributionally robust formulations \cite{osogami2015robust}. \textcolor{black}{A related notion of symbolic abstractions has been developed in non-stochastic settings by drawing upon system approximation using bisimulation \cite{girard2012controller}. Typically, symbolic abstractions discretize the state and action spaces of continuous systems \cite{tazaki2008finite}. Such abstractions and their worst-case performance were studied for perfectly observed control problems in \cite{de2013symbolic, reissig2018symbolic}.}

The advantage of using information states (or other state abstractions) is that they result in a smaller approximate space. Thus, they generally yield a more efficient DP decomposition than the entire memory. However, in problems with large state spaces, utilizing information states may still not be practical \cite{gallestey1999max, saldi2014asymptotic}, creating a need for principled approximations. 

\textit{2) Reinforcement learning:} The literature on reinforcement learning is concerned with decision-making when the agent does not have prior knowledge of the system's dynamics \cite{meyn2022control}. For systems with perfectly observed states, these problems have been addressed using a variety of approaches \cite{levine2020offline}. 
In the stochastic formulation, both model-based \cite{malikopoulos2009real, moerland2020model} and model-free approaches \cite{ramirez2022model} have been utilized. 
In the non-stochastic formulation, the worst-case reinforcement learning problem was formulated and analyzed in \cite{morimoto2005robust}. Worst-case Q-learning was proposed for reinforcement learning problems in \cite{heger1994consideration,jiang1998minimax, garcia2015comprehensive}
and extended to problems with output feedback and known dynamics in \cite{valadbeigi2019h}. Actor-critic methods \cite{chakravorty2003minimax} and model-based off-policy learning approaches \cite{kiumarsi2017h} have also been developed for robust control. Alternate approaches using online adaptive algorithms were proposed in \cite{agarwal2019online, gradu2020non}. 
However, in general, reinforcement learning with partial observations remains challenging, especially for non-stochastic formulations. 

For stochastic formulations, the notion of \textit{approximate information states} was presented in \cite{subramanian2022approximate} to address the challenges of control and learning with partial observations. Approximate information states can construct a system approximation to improve the computational tractability of control problems at the cost of a bounded loss in performance. 
In reinforcement learning, approximate information states can be learned from output data to facilitate standard learning algorithms. Their performance has been empirically validated in robotics \cite{yang2022discrete} and healthcare \cite{killian2020empirical}.
No general theory of approximate information states exists yet for non-stochastic formulations.

\vspace{-6pt}

\subsection{Contributions and Organization}

In this paper, we develop a non-stochastic theory of approximate information states for both instantaneous and terminal cost problems, which can facilitate computationally efficient control and provide a principled approach to reinforcement learning using partial observations. 
The contributions of this paper are: (1) the introduction of a general notion of \textit{information states} (Definition \ref{def_info_state}) which yields an optimal DP decomposition for worst-case control (Theorem \ref{thm_opt_dp}); 
(2) the introduction of the notion of \textit{approximate information states} (Definition \ref{def_approx}) that can either be constructed from output variables or learned from output data (Subsection \ref{subsection:learning_algorithm}); (3) the formulation of an approximate DP (Theorem \ref{thm_approx_term_dp}) that computes a control strategy with a bounded loss of optimality (Theorem \ref{thm_approx_term_policy}); and (4) the exposition of examples of approximate information states (Subsection \ref{subsection:approx_examples}) with corresponding theoretical guarantees (Theorems \ref{thm_main_ap_a} - \ref{thm_main_ap_b}). We also illustrate our results 
using numerical examples (Subsection \ref{section:example}).

Note that while our theory shares conceptual similarities to the theory of approximate information states for stochastic problems in \cite{subramanian2022approximate}, \textcolor{black}{our focus on non-stochastic problems necessitates the use of a distinct formulation with uncertain variables \cite{nair2013nonstochastic} and a different mathematical approach. Our results bound the worst-case approximation loss rather than the expected loss.} We reported preliminary results for terminal-cost control problems in \cite{Dave2022approx}. \textcolor{black}{This paper extends the preliminary work as follows: (1) we consider worst-case instantaneous cost problems which subsume terminal cost problems; (2) we allow all variables to take values in continuous spaces; (3) we derive explicit bounds using state-discretization in both perfectly and partially observed systems;
and (4) we illustrate the application of our results to a reinforcement learning problem.}

The remainder of the paper proceeds as follows. In Section \ref{section:problem}, we present our problem formulation. In Section \ref{section:info_state}, we define the notion of information states and prove the optimality of the corresponding DP decomposition. In Section \ref{section:approx}, we present the notion of approximate information states, a resulting approximate DP, and theoretical bounds on the approximation loss. 
In Section \ref{section:example}, we present a numerical example to illustrate the application of our results. 
In Section \ref{section:conclusion}, we draw concluding remarks and discuss future work.

\vspace{-6pt}

\section{Modeling Framework}
\label{section:problem}

\subsection{Preliminaries}

\textbf{1) Uncertain Variables:}
In this paper, we utilize the mathematical framework for \textit{uncertain variables} from \cite{nair2013nonstochastic}. 
An uncertain variable is a non-stochastic analogue of a random variable with set-valued uncertainty. 
For a sample space $\Omega$ and a set $\mathcal{X}$, an uncertain variable is a mapping $X: \Omega \to \mathcal{X}$. For any $\omega \in \Omega$, it has the realization $X(\omega) = x \in \mathcal{X}$. 
The \textit{marginal range} of $X$ is the set $[[X]] \hspace{-1pt} := \hspace{-1pt} \{X(\omega) \; | \; \omega \in \Omega\}$. For two uncertain variables $X \in \mathcal{X}$ and $Y \in \mathcal{Y}$, their \textit{joint range} is $[[X,Y]] \hspace{-1pt} := \hspace{-1pt} \{ \big(X(\omega), Y(\omega) \big) \; | \; \omega \in \Omega \}$. For a given realization $y$ of $Y$, the \textit{conditional range} of $X$ is $[[X|y]] \hspace{-1pt} := \hspace{-1pt} \{ X(\omega) \; | \; Y(\omega) $ $= y, \; \omega \in \Omega \}$ \textcolor{black}{and, generally,
$[[X|Y]] \hspace{-1pt} := \hspace{-1pt} \cup_{y \in [[Y]]} [[X|y]]$.}

\textbf{2) Hausdorff Distance:} Consider that the $\mathcal{X}, \mathcal{Y}$ are nonempty subsets of a metric space $(\mathcal{S},\eta)$, where $\eta(\cdot, \cdot)$ is the metric. 
Then, the \textit{Hausdorff distance} between $\mathcal{X}$ and $\mathcal{Y}$ is
\begin{align} \label{H_met_def}
    \mathcal{H}(\mathcal{X}, \mathcal{Y}) := \max \Big\{ \sup_{x \in \mathcal{X}} \inf_{y \in \mathcal{Y}} \eta(x, y), \sup_{y \in \mathcal{Y}} \inf_{x \in \mathcal{X}} \eta(x, y) \Big\}.
\end{align}
When the two sets $\mathcal{X}, \mathcal{Y}$ are bounded, the Hausdorff distance in \eqref{H_met_def} constitutes a pseudo-metric, i.e., $\mathcal{H}(\mathcal{X}, \mathcal{Y}) = 0$ if and only if ${closure}(\mathcal{X}) = {closure}(\mathcal{Y})$ \cite[Appendix]{girard2007approximation}. When both $\mathcal{X}, \mathcal{Y}$ are compact, the Hausdorff distance is a metric, i.e., $\mathcal{H}(\mathcal{X}, \mathcal{Y}) = 0$ if and only if $\mathcal{X} = \mathcal{Y}$ \cite[Chapter 1.12]{barnsley2006superfractals}. 

\textbf{3) L-invertible Functions:} Consider a function $f: \mathcal{X} \to \mathcal{Y}$. For any $y \in \mathcal{Y}$, the pre-image of the function is $f^{-1}(y) = \big\{ x \in \mathcal{X} ~|~ f(x) = y\big\}$. Then, the function $f$ 
is called \emph{$L$-invertible} if there exists a constant $L_{f^{-1}} \hspace{-2pt} \in \hspace{-2pt} \mathbb{R}_{\geq0}$ such that
\begin{gather} \label{L_inv_def}
    \mathcal{H}\big(f^{-1}(y^1), f^{-1}(y^2)\big) \hspace{-2pt} \leq \hspace{-2pt} L_{f^{-1}} {\cdot} \eta(y^1,y^2), \; \; \forall y^1, y^2 \in \mathcal{Y}.
\end{gather}
For uncertain variables $X \in \mathcal{X}$ and $Y \in \mathcal{Y}$ such that $Y=f(X)$, the pre-image of $f$ given a realization $y \in [[Y]]$ is $[[X|y]]$, i.e., $f^{-1}(y) = [[X|y]]$. Thus, if $f$ is \textit{$L$-invertible}, we equivalently state that for all $y^1, y^2 \in [[Y]]$ and $L_{X|Y} = L_{f^{-1}}$:
\begin{gather} \label{L_inv_2}
    \mathcal{H}\big([[X|y^1]], [[X|y^2]]\big) \leq L_{X|Y} \cdot \eta(y^1,y^2).
\end{gather}

\textcolor{black}{\textbf{4) Notation:} Uncertain variables will be denoted by uppercase alphabets and their realizations by lower-case alphabets. Variables and functions with a bar on them, such as $\bar{V}(\cdot)$, pertain to information states, and those with a hat on them, such as $\hat{V}(\cdot)$, pertain to approximate information states.}

\vspace{-6pt}

\subsection{Problem Formulation}

We consider an agent that seeks to control an uncertain system over $T \in \mathbb{N}$ discrete time steps. At each time $t=0,\dots,T$, the agent receives an observation from the system, denoted by the uncertain variable $Y_t \in \mathcal{Y}_t$, and generates a control action denoted by the uncertain variable $U_t \in \mathcal{U}_t$. After generating the action at each $t$, the agent incurs a cost denoted by the uncertain variable $C_t \in \mathcal{C}_t \subset \mathbb{R}_{\geq0}$. 
To account for a possibly unknown state-space model, we describe the dynamics using an \textit{input-output model}, as follows. At each $t$, the system receives two inputs: the control action $U_t$, and an uncontrolled disturbance $W_t \in \mathcal{W}_t$. 
\textcolor{black}{The uncontrolled disturbances $\{W_t\,|\,t=0,\dots,T\}$ constitute a sequence of independent uncertain variables that enter the system through a channel separate from the control action.}
After receiving the inputs at each $t$, the system generates two outputs:
\begin{align}
    Y_{t+1} &= h_{t+1}(W_{0:t}, U_{0:t}), \\
    C_t &= d_t(W_{0:t}, U_{0:t}),
\end{align}
using some observation function $h_{t+1}$ 
and cost function $d_t$. 
The initial observation is generated as $Y_0 = h_0(W_0)$.

The agent has a perfect recall of the history of observations and control actions. The memory of the agent at each $t$ is denoted by the uncertain variable $M_t := (Y_{0:t}, U_{0:t-1})$, which takes values in the set $\mathcal{M}_t := \prod_{\ell=0}^t \mathcal{Y}_{\ell} \times \prod_{\ell=0}^{t-1} \mathcal{U}_{\ell}$. The agent uses the memory $M_t$ and a control law $g_t: \mathcal{M}_t \to \mathcal{U}_t$ at each $t$ to generate the action $U_t = g_t(M_t)$. We denote the control strategy by $\boldsymbol{g} := (g_0,\dots,g_T)$ and the set of all feasible control strategies by $\mathcal{G}$. The performance of a strategy $\boldsymbol{g} \in \mathcal{G}$ is measured by the \textit{worst-case or maximum instantaneous cost}
    \begin{align} \label{eq_instantaneous_criterion}
    \mathcal{J}(\boldsymbol{g}) := \max_{t=0,\dots,T} \sup_{w_{0:t} \in [[W_{0:t}]]} C_t.
    \end{align}

\begin{problem} \label{problem_1}
\textcolor{black}{The agent's robust control problem is to derive a control strategy $\boldsymbol{g}^* \in \mathcal{G}$, such that $\mathcal{J}(\boldsymbol{g}^*) \leq \mathcal{J}(\boldsymbol{g})$ for all $\boldsymbol{g} \in \mathcal{G}$,
given the marginal ranges $\{[[U_t]],[[W_t]],[[C_t]],[[Y_t]] ~|~ t=0,\dots,T\}$ and the functions $\{h_t, d_t ~|~ t=0,\dots,T\}$.}
\end{problem}

We seek to tractably compute an optimal strategy if one exists. In our framework, we impose the following assumptions:

\begin{assumption} \label{assumption_1}
We consider that the sets $\{\mathcal{U}_t,\mathcal{W}_t,\mathcal{Y}_t~|~t=0,\dots,T\}$ and $\{\mathcal{C}_t~|~t=0,\dots,T\}$ are all bounded subsets of a metric space $(\mathcal{S}, \eta)$ and $\mathbb{R}_{\geq0}$, respectively.
\end{assumption}

Assumption \ref{assumption_1} allows for both continuous and finite valued feasible sets while ensuring that the problem is well-posed. 

\begin{assumption} \label{assumption_2}
The observation functions $\{h_t~|~ t=0,\dots,T\}$ of the system are both Lipschitz and $L$-invertible, whereas the cost functions $\{d_t ~|~ t=0,\dots,T\}$ are Lipschitz continuous.
\end{assumption}

Assumption \ref{assumption_2} is satisfied by a large class of observation functions, including (1) all functions with compact domains and finite co-domains and (2) bi-Lipschitz functions 
with compact domains and compact co-domains (see Appendix A). We require both assumptions 
our main results.

\begin{remark}
In our exposition, we also consider a special case of \eqref{eq_instantaneous_criterion}, called the \textit{maximum terminal cost criterion}, given by
\begin{align} \label{eq_terminal_criterion}
    \mathcal{J}^\text{tm}(\boldsymbol{g}) := \sup_{w_{0:T} \in [[W_{0:T}]]} C_T.
\end{align}
In addition to results for Problem \ref{problem_1}, we often present specializations to terminal cost problems that are easier to interpret. 
\end{remark}

\begin{remark}
\textcolor{black}{Problem \ref{problem_1} is a robust control problem with known dynamics. Our results are derived primarily within this setting. However, our main results in Section \ref{section:approx} can also be used in learning problems where $h_t$ and $d_t$ are unknown for all $t$. We present a learning approach in Subsections \ref{subsection:learning_algorithm} and \ref{subsection:pursuit}.}
\end{remark}


\vspace{-6pt}

\section{Dynamic Programs and Information States} \label{section:info_state}

\textcolor{black}{In this section, we first present a memory-based DP to compute the optimal solution to Problem \ref{problem_1}.
This will serve as a reference for analyzing subsequent DPs. 
Then, we highlight the DP's computational challenges and introduce information states in Subsections \ref{subsection:info_state} and \ref{subsection:alt_info} to alleviate them. Subsection \ref{subsection:info_examples} presents examples of information states.}

 To arrive at the memory-based DP, we construct a ``new'' perfectly observed system whose state at each $t$ is the memory $M_t$, evolving as $M_{t+1} = (M_t, U_t, Y_{t+1})$. 
 For realizations $m_t \hspace{-1pt} \in \hspace{-1pt} [[M_t]]$ and $u_t \hspace{-1pt} \in \hspace{-1pt} [[U_t]]$, the maximum cost at time $t$ is
\begin{align*}
    \sup_{w_{0:t} \in [[W_{0:t}]]} \hspace{-4pt} C_t = \hspace{-2pt} \sup_{c_t \in [[C_t]]^{\boldsymbol{g}}} \hspace{-4pt} c_t = \hspace{-4pt} \sup_{m_t, u_t \in [[M_t, U_t]]^{\boldsymbol{g}}} \sup_{c_t \in [[C_t|m_t, u_t]]^{\boldsymbol{g}}} \hspace{-4pt} c_t, 
\end{align*}
for all $t=0,\dots,T$, where $[[C_t]]^{\boldsymbol{g}}$, $[[M_t, U_t]]^{\boldsymbol{g}}$ and $[[C_t|m_t, u_t]]^{\boldsymbol{g}}$ are the respective marginal ranges and the conditional range induced by strategy $\boldsymbol{g}$. \textcolor{black}{Recall that $m_t = (y_{0:t},$ $u_{0:t-1})$ and thus, we can expand the conditional range as
\begin{align} \label{new_state_independence_g}
    &[[C_t|m_t, u_t]]^{\boldsymbol{g}}  = \big\{ c_t \in \mathcal{C}_t ~\big|~ \exists \; w_{0:t} \in [[W_{0:t}]] \text{ such that } \nonumber \\
    &c_t = d_t(w_{0:t}, u_{0:t}), \; y_0 = h_0(w_0), \;y_\ell = h_\ell(w_{0:\ell}, u_{0:\ell-1}), \nonumber \\
    &\forall \ell = 1,\dots,t \big\} 
    = [[C_t|m_t,u_t]],
\end{align}
where, the set of feasible disturbances $[[W_{0:t}]]$, cost function $d_t(\cdot)$, and observation functions $h_{\ell}(\cdot)$ for all $\ell = 0,\dots,t$ are independent of the choice of strategy $\boldsymbol{g}$. Thus, $[[C_t|m_t, u_t]]^{\boldsymbol{g}}$ is independent of the choice of strategy $\boldsymbol{g}$ and we can drop $\boldsymbol{g}$.} Next, we define $e_t(m_t, u_t) : = \sup_{c_t \in [[C_t|m_t, u_t]]} c_t$, independent of $\boldsymbol{g}$, and 
state that
\begin{align}
    \hspace{-2pt} \sup_{m_t, u_t \in [[M_t, U_t]]^{\boldsymbol{g}}}  \sup_{c_t \in [[C_t|m_t, u_t]]^{\boldsymbol{g}}} \hspace{-4pt} c_t &= \hspace{-4pt} \sup_{m_t, u_t \in [[M_t, U_t]]^{\boldsymbol{g}}} e_t(m_t, u_t) \nonumber \\
    &= \hspace{-4pt} \sup_{w_{0:t} \in [[W_{0:t}]]} \hspace{-4pt} e_t(M_t, U_t). \label{new_state_2}
\end{align}
Since $e_t(M_t, U_t)$ is a function of the new state $M_t$ and control action $U_t$, it serves as an incurred cost at each $t=0,\dots,T$ in our new perfectly observed system \cite{bernhard2003minimax}. The new instantaneous performance criterion is $\mathcal{E}(\boldsymbol{g}) := \max_{t=0,\dots,T} \sup_{w_{0:t} \in [[W_{0:t}]]}e_t(M_t, $ $U_t)$ and from \eqref{new_state_2}, 
$\mathcal{E}(\boldsymbol{g}) = \mathcal{J}(\boldsymbol{g})$ for any $\boldsymbol{g}$. 
Subsequently, any optimal strategy in the new system is optimal for Problem \ref{problem_1}. If such an optimal strategy exists, we can compute it using a standard DP for perfectly observed systems, as follows.
For all $t$, for each $m_t \in [[M_t]]$ and $u_t \in [[U_t]]$ we  recursively define the \textit{value functions}
\begin{align}
    Q_t(m_t, u_t) := &\max \Big \{ \sup_{c_t \in [[C_t|m_t, u_t]]} c_t,  \nonumber \\
    & \; \; \sup_{m_{t+1} \in [[M_{t+1}|m_t, u_t]]} V_{t+1}(m_{t+1}) \Big \}, \label{DP_basic_1}\\
    V_t(m_t) := &\inf_{u_t \in [[U_t]]} Q_t(m_t, u_t), \label{DP_basic_2}
\end{align}
where $V_{T+1}(m_{T+1}) := 0$, identically. We define the extra value function $V_{T+1}$ to ensure that the right-hand side (RHS) of \eqref{DP_basic_1} is well defined at time $T$. Then, we can show using standard arguments \cite{bertsekas1973sufficiently, bernhard2003minimax} that the optimal value of Problem \ref{problem_1} is $\inf_{\boldsymbol{g} \in \mathcal{G}}\mathcal{J}(\boldsymbol{g}) = \sup_{m_0 \in [[M_0]]}V_0(m_0)$. Furthermore, if there exists an action $u_t^* \in [[U_t]]$ which achieves the infimum in the RHS of \eqref{DP_basic_2}, then $g_t^*(m_t) := \arg \min_{u_t \in [[U_t]]}$ $Q_t(m_t, u_t)$ gives an optimal control law at each $t=0,\dots,T$ and the control strategy $\boldsymbol{g}^* = g_{0:T}^*$ is optimal for Problem \ref{problem_1}.

\begin{remark}
The DP \eqref{DP_basic_1} - \eqref{DP_basic_2} can be specialized to the terminal cost criterion \eqref{eq_terminal_criterion} by defining for all $t=0,\dots,T-1$,
\begin{align}
    Q_t^\text{tm}(m_t, u_t) := &\sup_{m_{t+1} \in [[M_{t+1}|m_t, u_t]]} V_{t+1}^\text{tm}(m_{t+1}), \label{DP_ad_term_1}\\
    V_t^\text{tm}(m_t) := &\inf_{u_t \in [[U_t]]} Q_t^\text{tm}(m_t, u_t), \label{DP_ad_term_2}
\end{align}
where $Q_T^\text{tm}(m_{T},u_T) := \sup_{c_T \in [[C_T|m_T, u_T]]} c_T$ and $V_{T}^\text{tm}(m_{T}) := \inf_{u_T \in  [[U_t]]} Q_T^\text{tm}(m_{T},u_T)$.
We will use this terminal cost DP to simplify the exposition in Section \ref{section:approx}.
\end{remark}

\begin{remark}
A valid argument referring to the minimum of the RHS of \eqref{DP_basic_2} at each $t =0,\dots,T$ is both a necessary and sufficient condition to ensure the existence of an optimal control strategy in Problem \ref{problem_1} \cite{bertsekas1973sufficiently, didinsky1995design}. 
Consider that marginal ranges of all uncertain variables are \textit{compact} rather than just bounded. From Assumption \ref{assumption_2}, the observation and cost functions at each $t$ are Lipschitz. Using these properties in \eqref{DP_ad_term_1} - \eqref{DP_ad_term_2}, we can show that the value functions are continuous and the conditional ranges are compact for all $t$, which implies that the minimum is achieved in the RHS of \eqref{DP_basic_2}. 
However, for generality, we continue using $\sup$ and $\inf$ in our exposition. 
\end{remark}

\begin{remark}
\textcolor{black}{In the RHS of \eqref{DP_basic_2} at each $t$, we are required to solve an optimization for each  $m_t \in [[M_t]]$. This is computationally challenging for longer horizons as the size of the set $[[M_t]]$ increases with time $t$. 
This concern necessitates an alternate DP decomposition that can derive an optimal control strategy while potentially achieving more favorable computational properties. We present such a DP decomposition in Subsection \ref{subsection:info_state} by identifying an uncertain variable, called an \textit{information state}, which can be used to generate an optimal control action at each time step instead of the memory.}
\end{remark}

\vspace{-12pt}

\subsection{Information States} \label{subsection:info_state}

In this subsection, we define the notion of information states, use them in a DP decomposition, and prove it is optimal. 
\begin{definition} \label{def_info_state}
An \textit{information state} 
at each $t=0,\dots,T$ is an uncertain variable $\Pi_t= \sigma_t(M_t)$ taking values in a bounded set $\mathcal{P}_t$, generated by a function $\sigma_t: \mathcal{M}_t \to \mathcal{P}_t$, and for all $m_t \in [[M_t]]$ and $u_t \in [[U_t]]$, satisfying the following properties:

1) \textcolor{black}{\textit{Sufficient to evaluate cost:} An information state can replace the memory to compute the worst-case cost, i.e.,}
\begin{gather}
    \sup_{c_t \in [[C_t|m_t, u_t]]} c_t = \sup_{c_t \in [[C_t|\sigma_t(m_t), u_t]]} c_t. \label{p1}
\end{gather}

2) \textcolor{black}{\textit{Sufficient to predict its own evolution:} An information state can replace the memory to compute its own conditional range at the next time step, independent of the strategy $\boldsymbol{g}$, i.e.,}
\begin{gather}
    [[\Pi_{t+1}|m_t, u_t]] = [[\Pi_{t+1}|\sigma_t(m_t), u_t]]. \label{p2}
\end{gather}
\end{definition}

We can use the information states from Definition \ref{def_info_state} directly in a DP, as follows. For all $t=0,\dots,T$, for all  $\pi_t \in [[\Pi_t]]$ and $u_t \in [[U_t]]$, we recursively define the value functions 
\begin{align}
    \bar{Q}_t(\pi_t, u_t) := &\max \Big \{\sup_{c_t \in [[C_t|\pi_t, u_t]]} c_t,  \nonumber \\
    & \; \; \; \sup_{\pi_{t+1} \in [[\Pi_{t+1}|\pi_t, u_t]]} \bar{V}_{t+1}(\pi_{t+1}) \Big \}, \label{DP_info_1}\\
    \bar{V}_t(\pi_t) := &\inf_{u_t \in [[U_t]]} \bar{Q}_t(\pi_t, u_t), \label{DP_info_2}
\end{align}
where $\bar{V}_{T+1}(\pi_{T+1}) := 0$ identically. If the minimum in the RHS of \eqref{DP_info_2} exists, this DP yields a control law as $\bar{g}_t^*(\pi_t) := \arg \min_{u_t \in [[U_t]]} \bar{Q}_t(\pi_t, u_t)$ at each $t $. Next, we prove that this DP computes the same value as the optimal DP \eqref{DP_basic_1} - \eqref{DP_basic_2}. 

\begin{theorem} \label{thm_opt_dp}
Let $\Pi_t = \sigma_t(M_t)$ be an information state at any $t$. Then, for all $t$, and for all $m_t \in [[M_t]]$ and $u_t \in [[U_t]]$, 
\begin{gather}
    \hspace{-5pt} {Q}_t(m_t, u_t) \hspace{-2pt} = \hspace{-2pt} \bar{Q}_t\big(\sigma_t(m_t), u_t\big) \text{ and } V_t(m_t) \hspace{-2pt} = \hspace{-2pt} \bar{V}_t\big(\sigma_t(m_t)\big). \label{eq_thm_opt_dp}
\end{gather}
\end{theorem}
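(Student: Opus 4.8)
The plan is to prove both identities in \eqref{eq_thm_opt_dp} simultaneously by backward induction on $t$, running from $t = T+1$ down to $t = 0$, so as to mirror the recursive structure of the two dynamic programs \eqref{DP_basic_1}--\eqref{DP_basic_2} and \eqref{DP_info_1}--\eqref{DP_info_2}. Since $\Pi_t = \sigma_t(M_t)$, for every $m_t \in [[M_t]]$ the point $\sigma_t(m_t)$ lies in $[[\Pi_t]]$, so each quantity $\bar{Q}_t(\sigma_t(m_t), u_t)$ and $\bar{V}_t(\sigma_t(m_t))$ appearing below is well defined. For the base case, both $V_{T+1}$ and $\bar{V}_{T+1}$ are identically zero, so $V_{T+1}(m_{T+1}) = 0 = \bar{V}_{T+1}(\sigma_{T+1}(m_{T+1}))$ holds trivially.

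For the inductive step, I would assume $V_{t+1}(m_{t+1}) = \bar{V}_{t+1}(\sigma_{t+1}(m_{t+1}))$ for every $m_{t+1} \in [[M_{t+1}]]$ and first establish the $Q$-identity by comparing \eqref{DP_basic_1} with \eqref{DP_info_1} term by term inside the $\max$. The instantaneous-cost term matches immediately by the sufficiency-to-evaluate-cost property \eqref{p1}. For the cost-to-go term, I would substitute the induction hypothesis to rewrite $\sup_{m_{t+1} \in [[M_{t+1}|m_t, u_t]]} V_{t+1}(m_{t+1})$ as $\sup_{m_{t+1} \in [[M_{t+1}|m_t, u_t]]} \bar{V}_{t+1}(\sigma_{t+1}(m_{t+1}))$, and then perform a change of variables: because $\Pi_{t+1} = \sigma_{t+1}(M_{t+1})$ is a deterministic function of $M_{t+1}$, the conditional range $[[\Pi_{t+1}|m_t, u_t]]$ is exactly the image $\{\sigma_{t+1}(m_{t+1}) : m_{t+1} \in [[M_{t+1}|m_t, u_t]]\}$, so the supremum of $\bar{V}_{t+1} \circ \sigma_{t+1}$ over $[[M_{t+1}|m_t, u_t]]$ equals the supremum of $\bar{V}_{t+1}$ over $[[\Pi_{t+1}|m_t, u_t]]$. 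Finally, the sufficiency-to-predict-its-own-evolution property \eqref{p2} replaces $[[\Pi_{t+1}|m_t, u_t]]$ by $[[\Pi_{t+1}|\sigma_t(m_t), u_t]]$, which is precisely the cost-to-go term of $\bar{Q}_t(\sigma_t(m_t), u_t)$. This yields $Q_t(m_t, u_t) = \bar{Q}_t(\sigma_t(m_t), u_t)$ for all $u_t \in [[U_t]]$.

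Taking the infimum over $u_t \in [[U_t]]$ on both sides of this identity and invoking \eqref{DP_basic_2} and \eqref{DP_info_2} then gives $V_t(m_t) = \bar{V}_t(\sigma_t(m_t))$, which closes the induction. The only delicate step is the change-of-variables identity for the cost-to-go supremum: one must argue carefully that pushing $M_{t+1}$ through $\sigma_{t+1}$ carries its conditional range onto the conditional range of $\Pi_{t+1}$, after which \eqref{p2} can be applied; every remaining step is a direct substitution. I expect no regularity obstacles here, since the argument only manipulates suprema over ranges and never needs them to be attained; the compactness and continuity hypotheses from Assumptions \ref{assumption_1}--\ref{assumption_2} are required elsewhere for the existence of minimizers, not for these equalities.
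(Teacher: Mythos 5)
Your proposal is correct and follows essentially the same argument as the paper's proof: backward induction on $t$, a term-by-term comparison of \eqref{DP_basic_1} with \eqref{DP_info_1} using property \eqref{p1} for the cost term, the change-of-variables identity $[[\Pi_{t+1}|m_t,u_t]] = \{\sigma_{t+1}(m_{t+1}) \mid m_{t+1} \in [[M_{t+1}|m_t,u_t]]\}$ together with \eqref{p2} for the cost-to-go term, and finally an infimum over $u_t$ to close the induction. Your added remarks—that $\sigma_t(m_t) \in [[\Pi_t]]$ makes all quantities well defined and that no attainment of suprema is needed—are accurate refinements of the same argument.
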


\begin{proof}
Let $m_t \in [[M_t]]$ and $u_t \in [[U_t]]$ be given realizations of $M_t$ and $U_t$, respectively, for all $t=0,\dots,T$. We prove the result by mathematical induction starting at the last time step. At time $T+1$, \eqref{eq_thm_opt_dp} holds trivially because $V_{T+1}(m_{T+1}) = \bar{V}_{T+1}(\sigma_{T+1}(m_{T+1})) = 0$. This forms the basis of our induction. Next, for any $t=0,\dots,T$, we consider the induction hypothesis that $V_{t+1}(m_{t+1}) = \bar{V}_{t+1}(\sigma_{t+1}(m_{t+1}))$.
Given the hypothesis, we first prove that ${Q}_t(m_t, u_t) = \bar{Q}_t(\sigma_t(m_t), u_t)$ by comparing the RHS of \eqref{DP_basic_1} to the RHS of \eqref{DP_info_1} term by term. The first terms are equal by direct application of \eqref{p1} from Definition \ref{def_info_state}. Next, we use the induction hypothesis for the second term in the RHS of \eqref{DP_basic_1}, to state that
$\sup_{ m_{t+1} \in [[M_{t+1}|m_t, u_t]]}V_{t+1}(m_{t+1}) = \sup_{m_{t+1} \in [[M_{t+1}|m_t, u_t]]}\bar{V}_{t+1}(\sigma_{t+1}(m_{t+1}))$ 
$= \sup_{\sigma_{t+1}(m_{t+1}) \in [[\Pi_{t+1}|\sigma_t(m_t), u_t]]} \bar{V}_{t+1}(\sigma_{t+1}(m_{t+1}))$,
where, in the second equality, we use the fact that $[[\Pi_{t+1}|m_t, u_t]] = \big\{\sigma_{t+1}(m_{t+1}) \in \mathcal{P}_{t+1}\big| m_{t+1} \in [[M_{t+1}|m_t,u_t]]\big\}$ and \eqref{p2} from Definition \ref{def_info_state}. This establishes that the second term in the RHS of \eqref{DP_basic_1} equals the second term in the RHS of \eqref{DP_info_1} and given the induction hypothesis for time $t+1$, we have $Q_t(m_t, u_t) = \bar{Q}_t(\sigma_t(m_t), u_t)$. Next, we minimize both sides of the equality with respect to $u_t \in [[U_t]]$, and use the definitions in \eqref{DP_basic_2} and \eqref{DP_info_2} to write that 
$V_t(m_t) = \inf_{u_t \in \mathcal{U}_t} Q_t(m_t, u_t)$ $= \inf_{u_t \in \mathcal{U}_t}\bar{Q}_t\big(\sigma_t(m_t), u_t\big)$ $= V_t\big(\sigma_t(m_t)\big)$,
which proves the hypothesis at time $t$. The result follows for all $t$ using mathematical induction.
\end{proof}

Theorem \ref{thm_opt_dp} implies that \eqref{DP_info_1} - \eqref{DP_info_2} is an optimal DP decomposition for Problem \ref{problem_1}, i.e., if an optimal strategy exists for this DP, it is an optimal solution to Problem \ref{problem_1} as follows. Consider a control strategy $\bar{\boldsymbol{g}}^* = \bar{g}_{0:T}^*$ computed using \eqref{DP_info_1} - \eqref{DP_info_2}. A corresponding memory-based strategy $\boldsymbol{g} = g_{0:T}$ contains $g_t(m_t) := \bar{g}^*_t(\sigma_t(m_t))$ for all $m_t \in [[M_t]]$ and $t=0,\dots,T$. Then, from Theorem \ref{thm_opt_dp}, $\boldsymbol{g}$ achieves the infimum value at each $t$ to form an optimal solution to Problem \ref{problem_1}. 

\begin{remark}
In practice, using an information state to construct the DP decomposition is useful computationally only if, for most time steps in $t=0,\dots,T$, either the value functions in \eqref{DP_info_1} - \eqref{DP_info_2} have useful properties like concavity, or the set $\mathcal{P}_t$ is \textit{smaller} than $\mathcal{M}_t$ for some measure of size, such as the cardinality, diameter, and dimension. We present some examples of information states for different systems in Subsection \ref{subsection:info_examples}.
\end{remark}


\vspace{-12pt}

\subsection{Alternate Characterization of Information States} \label{subsection:alt_info}

When exploring whether an uncertain variable is a valid candidate for an information state, it may be difficult to verify \eqref{p2} in Definition \ref{def_info_state}. In this subsection, we present two \textit{stronger} conditions to replace \eqref{p2}. Specifically, at each $t = 0,\dots,T$, to establish that $\Pi_t = \sigma_t(M_t)$ is a valid information state, it is sufficient to satisfy the following conditions instead of \eqref{p2}:

\textit{1) State-like evolution:} There exists a function $\bar{f}_t : \mathcal{P}_t \times \mathcal{U}_t \times \mathcal{Y}_{t+1} \to \mathcal{P}_{t+1}$, independent of the strategy $\boldsymbol{g}$, such that
\begin{gather}
    \Pi_{t+1} = \bar{f}_t(\Pi_t, U_t, Y_{t+1}). \label{p2a}
\end{gather}

\textit{2) Predicts observations:} For all $m_t \in \mathcal{M}_t$ and $u_t \in \mathcal{U}_t$, 
\begin{gather}
    [[Y_{t+1}|m_t, u_t]] = [[Y_{t+1}|\sigma_t(m_t), u_t]], \label{p2b}
\end{gather}
where, both conditional ranges in \eqref{p2b} can be evaluated independently of the choice of strategy $\boldsymbol{g}$.

Next, we prove that these two conditions, in addition to \eqref{p1}, are sufficient to identify an information state.

\begin{lemma}
For all $t=0,\dots,T$, if an uncertain variable $\Pi_t = \sigma_t(M_t)$ satisfies \eqref{p2a} - \eqref{p2b}, it also satisfies \eqref{p2}.
\end{lemma}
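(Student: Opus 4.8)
The plan is to fix an arbitrary $t \in \{0,\dots,T\}$ together with realizations $m_t \in [[M_t]]$ and $u_t \in [[U_t]]$, and to abbreviate $\pi_t := \sigma_t(m_t)$. I would prove \eqref{p2} by showing that each of its two sides equals the image of a conditional range of $Y_{t+1}$ under one and the same map, and then invoking \eqref{p2b} to identify those two ranges with each other.

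First I would use the state-like evolution property \eqref{p2a}, namely $\Pi_{t+1} = \bar{f}_t(\Pi_t, U_t, Y_{t+1})$, to rewrite the left-hand side of \eqref{p2}. For every sample point $\omega$ consistent with $M_t(\omega) = m_t$ and $U_t(\omega) = u_t$ we have $\Pi_t(\omega) = \sigma_t(m_t) = \pi_t$ and $U_t(\omega) = u_t$, so that $\Pi_{t+1}(\omega) = \bar{f}_t(\pi_t, u_t, Y_{t+1}(\omega))$; as $\omega$ ranges over all such points, $Y_{t+1}(\omega)$ sweeps out exactly $[[Y_{t+1}|m_t, u_t]]$. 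This yields the representation
\begin{align}
    [[\Pi_{t+1}|m_t, u_t]] = \big\{ \bar{f}_t(\pi_t, u_t, y_{t+1}) ~\big|~ y_{t+1} \in [[Y_{t+1}|m_t, u_t]] \big\}. \nonumber
\end{align}
An identical argument applied to the sample points consistent with $\Pi_t(\omega) = \pi_t$ and $U_t(\omega) = u_t$ --- noting again that $\bar{f}_t$ depends on the history only through $\Pi_t$ and $U_t$, both pinned to $\pi_t$ and $u_t$ here --- gives
\begin{align}
    [[\Pi_{t+1}|\pi_t, u_t]] = \big\{ \bar{f}_t(\pi_t, u_t, y_{t+1}) ~\big|~ y_{t+1} \in [[Y_{t+1}|\pi_t, u_t]] \big\}. \nonumber
\end{align}

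Having both sides written as the image of the single map $y_{t+1} \mapsto \bar{f}_t(\pi_t, u_t, y_{t+1})$, I would finish by applying the observation-prediction property \eqref{p2b}, which asserts $[[Y_{t+1}|m_t, u_t]] = [[Y_{t+1}|\pi_t, u_t]]$. Since the two images are taken over identical index sets by the same function, they coincide, establishing \eqref{p2}; the required strategy-independence is inherited directly from that of $\bar{f}_t$ in \eqref{p2a} and of the observation ranges in \eqref{p2b}. The step I expect to demand the most care is the second representation: conditioning on $\pi_t$ admits every $\omega$ for which $\sigma_t(M_t(\omega)) = \pi_t$, a strictly larger collection of sample points than those fixing $m_t$, so I must argue that this enlargement introduces no new values of $\Pi_{t+1}$ beyond those already obtained by letting $Y_{t+1}$ range. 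This holds precisely because $\bar{f}_t$ never reads $M_t$ directly, only $\Pi_t = \pi_t$, $U_t = u_t$, and $Y_{t+1}$, so the extra sample points can only reshuffle which $y_{t+1}$ values arise, and \eqref{p2b} guarantees the resulting $Y_{t+1}$-range is unchanged.
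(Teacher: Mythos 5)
Your proof is correct and follows essentially the same route as the paper: both use \eqref{p2a} to write each conditional range of $\Pi_{t+1}$ as the image of the corresponding conditional range of $Y_{t+1}$ under the fixed map $y_{t+1} \mapsto \bar{f}_t(\sigma_t(m_t), u_t, y_{t+1})$, and then invoke \eqref{p2b} to identify the two index sets. Your closing remark---that conditioning on $\pi_t$ admits more sample points than conditioning on $m_t$, but that this is harmless because $\bar{f}_t$ reads the history only through $(\Pi_t, U_t, Y_{t+1})$---makes explicit a step the paper leaves implicit, which is a welcome clarification rather than a deviation.
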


\begin{proof}
For all $t=0,\dots,T$ and $m_t \in \mathcal{M}_t$, suppose that $\pi_t = \sigma_t(m_t)$ satisfy \eqref{p2a} - \eqref{p2b}. Then, we substitute \eqref{p2a} into the left hand side (LHS) of \eqref{p2} to state that
\begin{multline}
    [[\Pi_{t+1}|m_t,u_t]] = [[\bar{f}_t(\sigma_t(m_t), u_t, Y_{t+1})~|~m_t, u_t]] \\
    = \big\{\bar{f}_t(\sigma_t(m_t), u_t,y_{t+1}) ~\big|~ y_{t+1} \in [[Y_{t+1}|m_t,u_t]]\big\}, \label{prf_lem_1_1}
\end{multline}
where, in the second equality, we write the conditional range as a set. Next, using \eqref{p2b} on the range of observations in the conditioning of \eqref{prf_lem_1_1}, we can state that $\big\{\bar{f}_t(\sigma_t(m_t), u_t,y_{t+1}) ~\big|~ y_{t+1} \in [[Y_{t+1}|m_t,u_t]]\big\}=$ $\big\{\bar{f}_t(\sigma_t(m_t),u_t,y_{t+1}) ~\big|~ y_{t+1} \in [[Y_{t+1}|\sigma_t(m_t),u_t]]\big\}$ $= [[\bar{f}_t(\sigma_t(m_t), u_t, Y_{t+1})~|~\sigma_t(m_t), u_t]] \hspace{-1pt} = \hspace{-1pt} [[\Pi_{t+1}|\sigma_t(m_t), u_t]]$, which is equal to the RHS of \eqref{p2}.
\end{proof}

\vspace{-12pt}

\subsection{Examples of Information States} \label{subsection:info_examples}

\color{black}

In this subsection, we present examples of information states for systems with given state-space models.

\textit{1) Systems with perfectly observed states:} At each $t=0,\dots,T$, let the system's state be denoted by the uncertain variable $X_t \in \mathcal{X}_t$, where $\mathcal{X}_t$ is a known state space. The agent's observation is $Y_t = X_t$ and the agent incurs a cost $C_t = d_t(X_t, U_t)$ when they implement an action $U_t \in \mathcal{U}_t$. Starting at $X_0 \in \mathcal{X}_0$, the state evolves as $X_{t+1} = f_t(X_t, U_t, W_t)$ for all $t$. Each uncertain variable in $\{X_0, W_{t}~|~ t=0,\dots,T\}$ is independent of all other uncertain variables in that set. Then, an information state at each $t$ is $\Pi_t = X_t$, i.e., the state itself \cite{bertsekas1973sufficiently}. It takes values in the set $\mathcal{X}_t$ and satisfies \eqref{p1} - \eqref{p2} for all $t$. Note that it is always computationally advantageous to construct a DP decomposition using the state instead of the entire memory. 

\textit{2) Systems with delayed observations:} Consider the dynamics in Case 1. However, at each $t=0,\dots,T$, the agent observes the state with a delay of $n \in \mathbb{N}$ time steps, i.e., $Y_t = X_{t-n}$. Then, an information state at each $t$ is $\Pi_t = (X_{t-n}, U_{t-n:t-1})$ that takes values in the set $\mathcal{X}_t \times \prod_{t-n}^{t-1} \mathcal{U}_{\ell}$. It can be verified that this information state satisfies \eqref{p1} - \eqref{p2} for all $t$ and that it is computationally advantageous to utilize it instead of the complete memory of the agent.

\textit{3) Systems with partially observed states:} Consider the dynamics in Case 1. However, at each $t=0,\dots,T$, the agent's observation is $Y_t = h_t(X_t, N_t)$, where $N_t \in \mathcal{N}_t$ is a noise in observation and each uncertain variable in $\{X_0, W_{t}, N_t ~|~ t=0,\dots,T\}$ is independent of all other uncertain variables in that set. Then, an information state at each $t=0,\dots,T$ is the conditional range $\Pi_t = [[X_t|M_t]]$ \cite{bernhard2003minimax, Dave2021minimax}. This is a set-valued uncertain variable whose realization at time $t$ is a subset of the state space containing values of $x_t$ consistent with a given realization of the memory $m_t$. Explicitly, for a given realization of the memory $m_t \in \mathcal{M}_t$ at time $t$, the conditional range takes the realization 
$P_t := \big\{ x_t \in \mathcal{X}_t ~\big|~ \exists x_0 \in \mathcal{X}_0, w_{0:t-1} \in \prod_{\ell = 0}^{t-1} \mathcal{W}_\ell,~ n_{0:t} \in \prod_{\ell = 0}^{t} \mathcal{N}_\ell \text{ such that } y_{t} = h_{t}(x_{t},n_{t}),~ x_{\ell+1} = f_{\ell}(x_{\ell}, u_{\ell},w_{\ell}), y_{\ell} = h_{\ell}(x_{\ell}, n_{\ell}) \text{ for all } \ell = 0, \dots, t-1 \big\}.$
To establish that the conditional range is a valid information state, it is easier to verify the alternate conditions \eqref{p2a} and \eqref{p2b} instead of the property \eqref{p2} in Definition \ref{def_info_state}. Generally, it is computationally advantageous to construct a DP decomposition using the conditional range instead of the memory for systems with longer time horizons.

\color{black}


\begin{remark}
\textcolor{black}{For systems with known dynamics, any uncertain variable can be proposed as an information state using the conditions in Definition \ref{def_info_state} to simplify the DP decomposition.} However, systems with large state spaces may require further improvement in tractability, even at the cost of optimality.
Moreover, in certain applications, we need to learn a representation of the information state using incomplete knowledge of the dynamics, unable to satisfy the conditions exactly. 
In Section \ref{section:approx}, we introduce approximate information states that can address the above concerns.
\end{remark}

\vspace{-6pt}

\section{Approximate Information States} \label{section:approx}

\textcolor{black}{In this section, we define approximate information states by relaxing the conditions in Definition \ref{def_info_state}, and utilize them to develop an approximate DP decomposition which computes a sub-optimal control strategy for Problem \ref{problem_1}. In Subsection \ref{subsection:approx_properties}, we prove Lipschitz continuity of approximate value functions and guarantees on the worst-case performance of resulting approximate control strategies. Next, we present an alternate characterization of approximate information states in Subsection \ref{subsection:alt_approx}. Finally, in Subsection \ref{subsection:approx_examples}, we present examples of these performance bounds when approximate information states are constructed using state quantization.}


\begin{definition} \label{def_approx}
An \textit{approximate information state} for Problem \ref{problem_1} at each $t=0,\dots,T$ is an uncertain variable $\hat{\Pi}_t = \hat{\sigma}_t(M_t)$ taking values in a bounded set $\hat{\mathcal{P}}_t$ and generated by an $L$-invertible function $\hat{\sigma}_t : \mathcal{M}_t \to \hat{\mathcal{P}}_t$. Furthermore, for all $t=0,\dots,T$, there exist parameters $\epsilon_t, \delta_t, \lambda_t \in \mathbb{R}_{\geq0}$ such that for all $m_t \in [[M_t]]$ and $u_t \in [[U_t]]$, it satisfies the properties:

\textcolor{black}{\textit{1) Sufficient to approximately evaluate worst-case cost:}}
\begin{gather}
    \Big|\sup_{c_t \in [[C_t|m_t, u_t]]} c_t - \sup_{c_t \in [[C_t|\hat{\sigma}_t(m_t), u_t]]}  c_t\Big| \leq \epsilon_t. \label{ap1}
\end{gather}

\textcolor{black}{\textit{2) Sufficient to approximate its own evolution:}} We define the sets
$\mathcal{K}_{t+1} := [[ \hat{\Pi}_{t+1} ~|~ m_t, u_t]]$ and $\hat{\mathcal{K}}_{t+1} := [[ \hat{\Pi}_{t+1} ~|~ \hat{\sigma}_t(m_t), u_t ]],$ each independent of strategy $\boldsymbol{g}$.
Then
\begin{gather}
    \mathcal{H}(\mathcal{K}_{t+1}, \hat{\mathcal{K}}_{t+1}) \leq \delta_t, \label{ap2}
\end{gather}
where recall that $\mathcal{H}$ is the Hausdorff distance in \eqref{H_met_def}.

\textit{3) Lipschitz-like evolution:} For all $\hat{\pi}_t^1, \hat{\pi}_t^2 \in [[\hat{{\Pi}}_t]]$,
\begin{gather}
    \mathcal{H}\big([[ \hat{\Pi}_{t+1} | \hat{\pi}_t^1, u_t ]], [[ \hat{\Pi}_{t+1} |\hat{\pi}_t^2, u_t ]]\big) \leq \lambda_t \cdot \eta(\hat{\pi}_t^1, \hat{\pi}_t^2), \label{ap3}
\end{gather}
where $\eta$ is an appropriate metric on $\hat{\mathcal{P}}_t$.
\end{definition}

Using the approximate information state in Definition \ref{def_approx}, we can construct a DP as follows. For all $t$, for all $\hat{\pi}_t \in [[\hat{{\Pi}}_t]]$ and $u_t \in [[U_t]]$, we recursively define the value functions
\begin{align}
    \hat{Q}_t(\hat{\pi}_t, u_t) := &\max \Big \{ \sup_{c_t \in [[C_t|\hat{\pi}_t, u_t]]} c_t,  \nonumber \\
    & \; \; \; \sup_{\hat{\pi}_{t+1} \in [[\hat{\Pi}_{t+1}|\hat{\pi}_t, u_t]]} \hat{V}_{t+1}(\hat{\pi}_{t+1}) \Big \}, \label{DP_approx_1}\\
    \hat{V}_t(\hat{\pi}_t) := &\inf_{u_t \in [[U_t]]} \hat{Q}_t(\hat{\pi}_t, u_t), \label{DP_approx_2}
\end{align}
where $\hat{V}_{T+1}(\hat{\pi}_{T+1}) := 0$ identically. If there exists a minimizing argument in the RHS of \eqref{DP_approx_2} at each $t=0,\dots,T$, then $\hat{g}_t^*(\hat{\pi}_t) := \arg \min_{u_t \in \mathcal{U}_t} \hat{Q}_t(\hat{\pi}_t, u_t)$ constitutes an approximate control law at time $t$. Furthermore, we call  $\boldsymbol{\hat{g}}^* = \hat{g}^*_{0:T}$ an approximately optimal strategy for Problem \ref{problem_1}. In Subsection \ref{subsection:approx_properties}, we derive approximation bounds for the same. 

\begin{remark}
As we showed in Section \ref{section:info_state}, we can specialize this DP for terminal cost problems, with the value functions for all $t=0,\dots,T-1$ given by
\begin{align}
    \hat{Q}_t^\text{tm}(\hat{\pi}_t, u_t) := &\sup_{\hat{\pi}_{t+1} \in [[\hat{\Pi}_{t+1}|\hat{\pi}_t, u_t]]} \label{DP_ap_term_1} \hat{V}_{t+1}^\text{tm}(\hat{\pi}_{t+1}), \\
    \hat{V}_t^\text{tm}(\hat{\pi}_t) := &\inf_{u_t \in \mathcal{U}_t} \hat{Q}_t^\text{tm}(\hat{\pi}_t, u_t), \label{DP_ap_term_2}
\end{align}
and $\hat{Q}_T^\text{tm}(\hat{\pi}_{T},u_T) := \sup_{c_T \in [[C_T|\hat{\pi}_T, u_T]]} c_T$ and $\hat{V}_{T}^\text{tm}(\hat{\pi}_{T}):= \inf_{u_T \in \mathcal{U}_T} \hat{Q}_T^\text{tm}(\hat{\pi}_{T},u_T)$ at time $T$.
\end{remark}

\begin{remark}
\textcolor{black}{The conditions in Definition \ref{def_approx} can be investigated using only output variables. Thus, using them to construct a training loss, an approximate information state model can be learned from output data without knowledge of dynamics. This is illustrated in Subsections \ref{subsection:learning_algorithm} and \ref{subsection:pursuit}.}
\end{remark}

\begin{remark}
\textcolor{black}{The notion of an approximate information state is related to approximate bisimulations \cite{girard2007approximation}. Recall that in Section \ref{section:info_state}, we constructed a memory-based system that yielded an optimal solution to Problem \ref{problem_1}. Then, with the worst-case cost as each system's output, \eqref{p1} - \eqref{p2} in Definition \ref{def_approx} define an approximate bisimulation between an approximate system with state $\hat{\Pi}_t$ and the memory-based system with state $M_t$ at each $t$. Together, \eqref{p1} - \eqref{p2} can be used to bound the branching distance between the two systems across $t = 0,\dots,T$ \cite[Subsection VI-B]{girard2007approximation}. Instead, our subsequent results focus on the worst-case performance of approximate control strategies to facilitate approximately robust control.}
\end{remark}


\vspace{-12pt}

\subsection{Properties of Approximate Information States} \label{subsection:approx_properties}

In this subsection, we present several properties of the approximate DP \eqref{DP_approx_1} - \eqref{DP_approx_2} \textcolor{black}{using preliminary results detailed in Appendix B.} To begin, we prove in Theorem \ref{thm_L_exist} that each approximate value function is Lipschitz continuous. This property subsequently allows us to establish error bounds.


\begin{theorem} \label{thm_L_exist}
In the approximate DP \eqref{DP_approx_1} - \eqref{DP_approx_2}, the value functions $\hat{Q}_t(\hat{\pi}_t, u_t)$ and $\hat{V}_t(\hat{\pi}_t)$ are Lipschitz continuous with respect to $\hat{\pi}_t \in [[\hat{{\Pi}}_t]]$ for all $u_t \in [[{U}_t]]$ and $t=0,\dots,T$.
\end{theorem}

\begin{proof}
We prove the Lipschitz continuity of the value functions by constructing a valid candidate for the Lipschitz constant $L_{\hat{V}_t}$ at each $t=0,\dots,T$, using mathematical induction. At time $T+1$, recall that $\hat{V}_{T+1}(\hat{\pi}_{T+1}) = 0$ identically and thus, $\hat{V}_{T+1}(\hat{\pi}_{T+1})$ is trivially Lipschitz continuous with a constant $L_{\hat{V}_{T+1}} = 0$. This forms the basis of our induction. Then, at each $t=0,\dots,T$, we consider the induction hypothesis that $\hat{Q}_{t+1}(\hat{\pi}_{t+1}, u_{t+1})$ and $\hat{V}_{t+1}(\hat{\pi}_{t+1})$ are Lipschitz continuous with respect to $\hat{\pi}_{t+1} \in [[\hat{{\Pi}}_{t+1}]]$ for all $u_{t+1} \in [[{U}_{t+1}]]$, and denote the constant by $L_{\hat{V}_{t+1}} \in \mathbb{R}_{\geq0}$. 

At time $t$, we first prove the result for the value function $\hat{Q}_t(\hat{\pi}_t, u_t)$. Let $\hat{\pi}_t^1, \hat{\pi}_t^2 \in [[\hat{{\Pi}}_t]]$ be two possible realizations of $\hat{\Pi}_t$. Then, using the definition \eqref{DP_approx_1} of $\hat{Q}_t(\hat{\pi}_t, u_t)$ and \eqref{eq_prelim_3} from Lemma \ref{lem_prelim_3}, we state that
\begin{multline}
|\hat{Q}_t(\hat{\pi}_t^1, u_t) -  \hat{Q}_t(\hat{\pi}_t^2, u_t)| \leq \max \Big \{ \Big | \sup_{c_t^1 \in [[C_t|\hat{\pi}_t^1, u_t]]} c_t^1 \\
- \sup_{c_t^2 \in [[C_t|\hat{\pi}_t^2, u_t]]} c_t^2\Big|, \Big| \sup_{\hat{\pi}_{t+1}^1 \in [[\hat{\Pi}_{t+1}|\hat{\pi}_t^1, u_t]]}\hat{V}_{t+1}(\hat{\pi}_{t+1}^1) \\
- \sup_{\hat{\pi}_{t+1}^2 \in [[\hat{\Pi}_{t+1}|\hat{\pi}_t^2, u_t]]}\hat{V}_{t+1}(\hat{\pi}_{t+1}^2) \Big | \Big \}. \label{eq_exist_1}
\end{multline}
We consider the RHS of \eqref{eq_exist_1} term by term. In the first term, we note that for all $\hat{\pi}_t \in [[\hat{{\Pi}}_t]]$,
\begin{gather}
   \sup_{c_t \in [[C_t|\hat{\pi}_t, u_t]]}  c_t =  \sup_{m_t \in [[M_t|\hat{\pi}_t]]} \Big(\sup_{c_t \in [[C_t|m_t, u_t]]} c_t\Big). \label{eq_exist_4}
\end{gather}
In the RHS of \eqref{eq_exist_4}, recall from Assumption \ref{assumption_2} that the uncertain variable $C_t$ is a Lipschitz function of $(W_{0:t}, U_{0:t})$, and $(M_t, U_t)$ is an $L$-invertible function of $(W_{0:t}, U_{0:t})$. Thus, using \eqref{eq_range_lipschitz} from Lemma \ref{lem_range_lipschitz}, there exists a constant $L_{C|M,U}$ such that $\mathcal{H}([[C_t|m_t^1, u_t]], [[C_t|m_t^2, u_t]]) \leq L_{M|C,U} \cdot \eta(m_t^1, m_t^2)$ for all $m^1, m^2 \in [[M_t]]$. Furthermore, we use \eqref{eq_prelim_5} to state that
\begin{multline}
    \Big|\sup_{c_t^1 \in [[C_t|m_t^1, u_t]]} c_t^1 - \sup_{c_t^2 \in [[C_t|m_t^2, u_t]]}
    c_t^2\Big| \\
    \leq L_{M|C,U}\cdot L_{c_t} \cdot \eta(m_t^1,m_t^2). \label{eq_exist_5}
\end{multline}
Then, consider a function $e_t:\mathcal{M}_t \times \mathcal{U}_t \to \mathbb{R}_{\geq0}$ defined as $e_t(m_t, u_t) := \sup_{c_t \in [[C_t|m_t, u_t]]} c_t$. As a direct consequence of \eqref{eq_exist_5}, $e_t$ is Lipschitz continuous with respect to $m_t$ with a constant $L_{e_t} := L_{M|C,U}\cdot L_{c_t}$. Using \eqref{eq_exist_4} and the definition of $e_t$ in the first term in the RHS of \eqref{eq_exist_1},
\begin{multline}
    \Big | \sup_{c_t^1 \in [[C_t|\hat{\pi}_t^1, u_t]]} c_t^1 - \sup_{c_t^2 \in [[C_t|\hat{\pi}_t^2, u_t]]} c_t^2\Big| \\
    = \Big | \sup_{m_t^1 \in [[M_t|\hat{\pi}_t^1]]} e_t(m_t^1, u_t) - \sup_{m_t^2 \in [[M_t|\hat{\pi}_t^2]]} e_t(m_t^2, u_t) \Big |. \label{eq_exist_6}
\end{multline}
In \eqref{eq_exist_6}, recall that the uncertain variable $\hat{\Pi}_t$ is an $L$-invertible function of $M_t$ and thus, the conditional range $[[M_t|\hat{\pi}_t]]$ satisfies \eqref{L_inv_2}. Then, we use \eqref{eq_prelim_5} once more to state that
\begin{gather}
    \hspace{-5pt} \Big | \hspace{-2pt} \sup_{c_t^1 \in [[C_t|\hat{\pi}_t^1, u_t]]} \hspace{-5pt} c_t^1 - \hspace{-5pt} \sup_{c_t^2 \in [[C_t|\hat{\pi}_t^2, u_t]]} \hspace{-5pt} c_t^2 \hspace{-1pt} \Big| \leq  L_{\hat{\sigma}_t^{-1}} \hspace{-2pt} \cdot \hspace{-2pt} L_{e_t} \hspace{-2pt} \cdot  \hspace{-2pt} \eta(\hat{\pi}_t^1,\hat{\pi}_t^2). \label{eq_exist_2}
\end{gather}
In the second term in the RHS of \eqref{eq_exist_1}, we use the induction hypothesis and \eqref{eq_prelim} from Lemma \ref{lem_prelim} to write that
\begin{multline}
    \hspace{-5pt} \Big| \sup_{\hat{\pi}_{t+1}^1 \in [[\hat{\Pi}_{t+1}|\hat{\pi}_t^1, u_t]]} \hspace{-3pt} \hat{V}_{t+1}(\hat{\pi}_{t+1}^1) - \sup_{\hat{\pi}_{t+1}^2 \in [[\hat{\Pi}_{t+1}|\hat{\pi}_t^2, u_t]]} \hspace{-3pt} \hat{V}_{t+1}(\hat{\pi}_{t+1}^2) \Big | \\
    \leq L_{\hat{V}_{t+1}}\cdot \mathcal{H}\big([[\hat{\Pi}_{t+1}|\hat{\pi}_t^1, u_t]], [[\hat{\Pi}_{t+1}|\hat{\pi}_t^2, u_t]] \big) \\
    \leq L_{\hat{V}_{t+1}}\cdot \lambda_{t} \cdot \eta(\hat{\pi}_t^1, \hat{\pi}_t^2), \label{eq_exist_3}
\end{multline}
where, in the second inequality, we use the third property \eqref{ap3} of approximate information states in Definition \ref{def_approx}. Then, the proof for $\hat{Q}_t(\hat{\pi}_t, u_t)$ is complete by substituting \eqref{eq_exist_2} and  \eqref{eq_exist_3} into the RHS of \eqref{eq_exist_1} and \color{black} defining 
\begin{gather} \label{L_hatV_def}
    L_{\hat{Q}_t} := \max \big\{L_{\hat{\sigma}_t^{-1}} \cdot L_{e_t}, L_{\hat{V}_{t+1}} \cdot \lambda_{t} \big\}.
\end{gather} \color{black}
To prove the result for $\hat{V}_t(\hat{\pi}_t)$, we use \eqref{prelim_2_2} from Lemma \ref{lem_prelim_2} to state that $\big|\hat{V}_t(\hat{\pi}_t^1) - \hat{V}_t(\hat{\pi}_t^2)\big| = \big|\inf_{u_t \in [[U_t]]}\hat{Q}_t(\hat{\pi}_t^1, u_t) -  \inf_{u_t \in [[U_t]]}\hat{Q}_t(\hat{\pi}_t^2, u_t)\big|$
    $\leq \sup_{u_t \in [[U_t]]} \big|\hat{Q}_t(\hat{\pi}_t^1, u_t) - \hat{Q}_t(\hat{\pi}_t^2, u_t) \big| \leq L_{\hat{Q}_t} \cdot \eta(\hat{\pi}_t^1,\hat{\pi}_t^2),$
which proves the induction hypothesis at time $t$. Thus, the result holds using mathematical induction.
\end{proof}

\begin{remark}
    \textcolor{black}{From \eqref{L_hatV_def}, we can construct $L_{\hat{Q}_t}$ and $L_{\hat{V}_t}$ at any $t$ as functions of the Lipschitz constants of the underlying dynamics, the $L$-invertible constants $(L_{\hat{\sigma}^{-1}_t}, \dots, L_{\hat{\sigma}^{-1}_t})$, and $(\lambda_t, \dots, \lambda_T)$ from \eqref{ap3}. The Lipschitz constant $L_{\hat{V}_t}$ is not an explicit function of the bounds $\epsilon_t$ in \eqref{ap1} and $\delta_t$ in \eqref{ap2}. Furthermore, a smaller value of $L_{\hat{V}_t}$ can be found in special cases, such as problems where $\hat{\mathcal{P}}_t$ and $\mathcal{U}_t$ are finite for all $t$.}
\end{remark}

Next, we establish an upper bound on the approximation error when the value functions of the optimal DP \eqref{DP_basic_1} - \eqref{DP_basic_2} are estimated using the approximate DP \eqref{DP_approx_1} - \eqref{DP_approx_2} at each $t$.

\begin{theorem} \label{thm_approx_term_dp}

Let $L_{\hat{V}_{t+1}}$ be the Lipschitz constant of $\hat{V}_{t+1}$ for all $t=0,\dots,T$. Then, for all $m_t \in [[M_t]]$ and $u_t \in [[U_t]]$,
\begin{align}
    |Q_t(m_t,u_t) - \hat{Q}_t(\hat{\sigma}_t(m_t),u_t)| &\leq \alpha_t, \label{thm_3_1} \\
    |V_t(m_t) - \hat{V}_t(\hat{\sigma}_t(m_t))| &\leq \alpha_t, \label{thm_3_2}
\end{align}
where $\alpha_t = \max(\epsilon_t, \alpha_{t+1} + L_{\hat{V}_{t+1}}\cdot \delta_t)$ for all $t$, with $\alpha_{T+1} = 0$.
\end{theorem}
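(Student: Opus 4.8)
The plan is to prove both bounds \eqref{thm_3_1} and \eqref{thm_3_2} simultaneously by backward induction on $t$, comparing the optimal memory-based DP \eqref{DP_basic_1}--\eqref{DP_basic_2} term-by-term with the approximate DP \eqref{DP_approx_1}--\eqref{DP_approx_2}. The base case at $t=T+1$ is immediate, since $V_{T+1}(m_{T+1}) = \hat{V}_{T+1}(\hat{\sigma}_{T+1}(m_{T+1})) = 0$ and so the bound holds with $\alpha_{T+1}=0$. For the inductive step I would assume $|V_{t+1}(m_{t+1}) - \hat{V}_{t+1}(\hat{\sigma}_{t+1}(m_{t+1}))| \le \alpha_{t+1}$ for all $m_{t+1} \in [[M_{t+1}]]$, first establish the $Q$-bound \eqref{thm_3_1} at time $t$, and then deduce the $V$-bound \eqref{thm_3_2} from it.

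For \eqref{thm_3_1}, both $Q_t$ and $\hat{Q}_t$ are maxima of a cost term and a continuation term, so I would apply the max-difference inequality \eqref{eq_prelim_3} of Lemma \ref{lem_prelim_3} to bound $|Q_t(m_t,u_t) - \hat{Q}_t(\hat{\sigma}_t(m_t),u_t)|$ by the maximum of the two term-wise discrepancies. The cost-term discrepancy $|\sup_{c_t \in [[C_t|m_t,u_t]]} c_t - \sup_{c_t \in [[C_t|\hat{\sigma}_t(m_t),u_t]]} c_t|$ is at most $\epsilon_t$ directly by property \eqref{ap1}, so the real work lies entirely in the continuation term.

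The key step is bounding $|\sup_{m_{t+1} \in [[M_{t+1}|m_t,u_t]]} V_{t+1}(m_{t+1}) - \sup_{\hat{\pi}_{t+1} \in \hat{\mathcal{K}}_{t+1}} \hat{V}_{t+1}(\hat{\pi}_{t+1})|$. I would insert the intermediate quantity $\sup_{\hat{\pi}_{t+1} \in \mathcal{K}_{t+1}} \hat{V}_{t+1}(\hat{\pi}_{t+1})$ and apply the triangle inequality. Using the set-image identity $\mathcal{K}_{t+1} = [[\hat{\Pi}_{t+1}|m_t,u_t]] = \{\hat{\sigma}_{t+1}(m_{t+1}) : m_{t+1} \in [[M_{t+1}|m_t,u_t]]\}$, the sup over $\mathcal{K}_{t+1}$ is merely a relabeling of $\sup_{m_{t+1}} \hat{V}_{t+1}(\hat{\sigma}_{t+1}(m_{t+1}))$; hence the first triangle-inequality piece equals $|\sup_{m_{t+1}}(V_{t+1}(m_{t+1}) - \hat{V}_{t+1}(\hat{\sigma}_{t+1}(m_{t+1})))|$, which the inductive hypothesis bounds by $\alpha_{t+1}$. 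The second piece compares the same function $\hat{V}_{t+1}$, Lipschitz by Theorem \ref{thm_L_exist}, over the two sets $\mathcal{K}_{t+1}$ and $\hat{\mathcal{K}}_{t+1}$; by \eqref{eq_prelim} of Lemma \ref{lem_prelim} this is at most $L_{\hat{V}_{t+1}} \cdot \mathcal{H}(\mathcal{K}_{t+1}, \hat{\mathcal{K}}_{t+1}) \le L_{\hat{V}_{t+1}} \cdot \delta_t$ by property \eqref{ap2}. Adding the two pieces gives a continuation-term bound of $\alpha_{t+1} + L_{\hat{V}_{t+1}}\delta_t$, and combining with the cost term yields $|Q_t - \hat{Q}_t| \le \max(\epsilon_t, \alpha_{t+1} + L_{\hat{V}_{t+1}}\delta_t) = \alpha_t$, which is \eqref{thm_3_1}. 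Finally, \eqref{thm_3_2} follows by taking infima over $u_t$ in both DPs and invoking the inf-difference inequality \eqref{prelim_2_2} of Lemma \ref{lem_prelim_2}, giving $|V_t(m_t) - \hat{V}_t(\hat{\sigma}_t(m_t))| \le \sup_{u_t \in [[U_t]]} |Q_t(m_t,u_t) - \hat{Q}_t(\hat{\sigma}_t(m_t),u_t)| \le \alpha_t$.

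The main obstacle I anticipate is the continuation term, specifically the need to separate two distinct error sources at once: the propagated value-approximation error, controlled by the inductive hypothesis $\alpha_{t+1}$, and the one-step prediction error of the approximate information state, controlled by the Hausdorff bound $\delta_t$. Cleanly decoupling these through the intermediate sup over $\mathcal{K}_{t+1}$ is essential, and it is precisely here that Theorem \ref{thm_L_exist} becomes indispensable, since the Lipschitz continuity of $\hat{V}_{t+1}$ is what converts the set-level bound \eqref{ap2} into a scalar bound on the value. This also explains the structure of the recursion $\alpha_t = \max(\epsilon_t, \alpha_{t+1} + L_{\hat{V}_{t+1}}\delta_t)$: the outer $\max$ mirrors the $\max$ in the Bellman operator, while the additive $L_{\hat{V}_{t+1}}\delta_t$ term captures the accumulation of one-step transition errors scaled by the sensitivity of the continuation value.
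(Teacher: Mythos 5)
Your proposal is correct and follows essentially the same argument as the paper's proof: backward induction with the max-difference inequality of Lemma \ref{lem_prelim_3}, property \eqref{ap1} for the cost term, and a triangle-inequality split of the continuation term through the intermediate quantity $\sup_{\hat{\pi}_{t+1} \in \mathcal{K}_{t+1}} \hat{V}_{t+1}(\hat{\pi}_{t+1})$, with the induction hypothesis handling one piece and Theorem \ref{thm_L_exist}, Lemma \ref{lem_prelim}, and \eqref{ap2} handling the other, followed by Lemma \ref{lem_prelim_2} for the $V$-bound. One small wording fix: the first triangle-inequality piece does not \emph{equal} $\big|\sup_{m_{t+1}}\big(V_{t+1}(m_{t+1}) - \hat{V}_{t+1}(\hat{\sigma}_{t+1}(m_{t+1}))\big)\big|$; it should instead be \emph{bounded} by $\sup_{m_{t+1}}\big|V_{t+1}(m_{t+1}) - \hat{V}_{t+1}(\hat{\sigma}_{t+1}(m_{t+1}))\big|$ via \eqref{prelim_2_1} of Lemma \ref{lem_prelim_2} (as the paper does), after which the induction hypothesis gives the same $\alpha_{t+1}$ bound.
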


\begin{proof}
For all $t=0,\dots,T$, let $m_t \in [[M_t]]$ and $u_t \in [[U_t]]$ be realizations of $M_t$ and $U_t$, respectively.
We prove both results by mathematical induction, starting with time step $T+1$. At $T+1$, by definition, $V_{T+1}(m_{T+1}, u_{T+1}) = V_{T+1}(\hat{\sigma}_{T+1}(m_{T+1})) = 0$. This forms the basis of our mathematical induction. Then, at each $t=0,\dots,T$, we consider the induction hypothesis $|V_{t+1}(m_{t+1}) - \hat{V}_{t+1}(\hat{\sigma}_{t+1}(m_{t+1}))| \leq \alpha_{t+1}$.
At time $t$, we first prove \eqref{thm_3_1}. Using \eqref{eq_prelim_3} from Lemma \ref{lem_prelim_3} in the LHS of \eqref{thm_3_1} to state that
\begin{align}
    &|Q_t(m_t,u_t) - \hat{Q}_t(\hat{\sigma}_t(m_t),u_t)| \leq \max \Big\{ \Big| \hspace{-1pt} \sup_{c_t \in [[C_t|m_t, u_t]]}  c_t \nonumber \\
    &- \hspace{-5pt}  \sup_{c_t \in [[C_t|\hat{\sigma}_t(m_t), u_t]]}c_t\Big|, \Big| \sup_{m_{t+1} \in [[M_{t+1}|m_t, u_t]]} {V}_{t+1}(m_{t+1}) \nonumber \\ 
    & - \hspace{-4pt} \sup_{\hat{\pi}_{t+1} \in [[\hat{\Pi}_{t+1}|\hat{\sigma}_t(m_t), u_t]]} \hat{V}_{t+1}(\hat{\pi}_{t+1})\Big| \Big\}. \label{proof_3_1}
\end{align}
We consider the RHS of \eqref{proof_3_1} term-by-term.
By direct application of \eqref{ap1} in Definition \ref{def_approx}, the first term in the RHS satisfies 
\begin{gather}
    \Big|\sup_{c_t \in [[C_t|m_t, u_t]]}c_t -  \sup_{c_t \in [[C_t|\hat{\sigma}_t(m_t), u_t]]}c_t\Big| \leq \epsilon_t. \label{proof_3_2}
\end{gather}
For the second term in the RHS of \eqref{proof_3_1}, we use the triangle inequality to write that
\begin{align}
    &\Big| \sup_{m_{t+1} \in [[M_{t+1}|m_t, u_t]]} \hspace{-5pt} {V}_{t+1}(m_{t+1})- \hspace{-5pt} \sup_{\hat{\pi}_{t+1} \in [[\hat{\Pi}_{t+1}|\hat{\sigma}_t(m_t), u_t]]} \nonumber \\
    &\hat{V}_{t+1}(\hat{\pi}_{t+1})\Big| \leq \Big| \sup_{m_{t+1} \in [[M_{t+1}|m_t, u_t]]} V_{t+1}(m_{t+1})  \nonumber \\
    & - \hspace{-5pt} \sup_{\hat{\sigma}_{t+1}(m_{t+1}) \in [[\hat{\Pi}_{t+1}|m_t, u_t]]} \hspace{-7pt} \hat{V}_{t+1}(\hat{\sigma}_{t+1}(m_{t+1})) \Big| + \nonumber \\
    &\Big|\hspace{-2pt} \sup_{\hat{\pi}_{t+1} \in [[\hat{\Pi}_{t+1}|m_t, u_t]]} \hspace{-12pt} \hat{V}_{t+1}(\hat{\pi}_{t+1}) 
    - \hspace{-7pt} \sup_{\hat{\pi}_{t+1} \in [[\hat{\Pi}_{t+1}|\hat{\sigma}_t(m_t), u_t]]} \hspace{-19pt} \hat{V}_{t+1}(\hat{\pi}_{t+1}) \Big|. \label{proof_3_3}
\end{align}
For the first term in the RHS of \eqref{proof_3_3}, we first note that $\sup_{\hat{\sigma}_{t+1}(m_{t+1}) \in [[\hat{\Pi}_{t+1}|m_t, u_t]]} \hat{V}_{t+1}(\hat{\sigma}_{t+1}(m_{t+1})) = \sup_{m_{t+1} \in [[M_{t+1}|m_t, u_t]]}  \hat{V}_{t+1}(\hat{\sigma}_{t+1}(m_{t+1}))$ 
because $[[\hat{\Pi}_{t+1}$ $|~m_t, u_t]]= \{\hat{\sigma}_{t+1}(m_{t+1}) \in \hat{\mathcal{P}}_t ~|~ m_{t+1} \in[[M_{t+1}~|~m_t, u_t]] \}$. 
Then, we can state that $\big| \sup_{m_{t+1} \in [[M_{t+1}|m_t, u_t]]}$ $ V_{t+1}(m_{t+1}) \hspace{-1pt} - \hspace{-1pt} \sup_{\hat{\sigma}_{t+1}(m_{t+1}) \in [[\hat{\Pi}_{t+1}|m_t, u_t]]} \hspace{-2pt} \hat{V}_{t+1}(\hat{\sigma}_{t+1}(m_{t+1})) \big|$ $ \leq \sup_{m_{t+1} \in [[M_{t+1}|m_t, u_t]]} \big|V_{t+1}(m_{t+1})
     - \hat{V}_{t+1}(\hat{\sigma}_{t+1}(m_{t+1}))\big|$ $
    \leq \alpha_{t+1},$
where, in the first inequality, we use \eqref{prelim_2_1} from Lemma \ref{lem_prelim_2}; and, in the second inequality, we use the induction hypothesis for time $t+1$. 
The second term in the RHS of \eqref{proof_3_3} satisfies $\big|\sup_{\hat{\pi}_{t+1} \in [[\hat{\Pi}_{t+1}|m_t, u_t]]} \hat{V}_{t+1}(\hat{\pi}_{t+1}) 
    - \hspace{-5pt} \sup_{\hat{\pi}_{t+1} \in [[\hat{\Pi}_{t+1}|\hat{\sigma}_t(m_t), u_t]]} \hat{V}_{t+1}(\hat{\pi}_{t+1}) \big| \leq L_{\hat{V}_{t+1}} \hspace{-4pt} \cdot \delta_t$ using \eqref{eq_prelim} from Lemma \ref{lem_prelim} and \eqref{ap2} from Definition \ref{def_approx}.
Substituting the respective inequalities for each term in the RHS of \eqref{proof_3_3} yields $\big| \sup_{m_{t+1} \in [[M_{t+1}|m_t, u_t]]}  {V}_{t+1}(m_{t+1})-  \sup_{\hat{\pi}_{t+1} \in [[\hat{\Pi}_{t+1}|\hat{\sigma}_t(m_t), u_t]]}$ 
    $\hat{V}_{t+1}(\hat{\pi}_{t+1})\big| \leq \alpha_{t+1} + L_{\hat{V}_{t+1}} \cdot \delta_t$.
We complete the proof for \eqref{thm_3_1} by substituting the inequalities in the RHS of \eqref{proof_3_2} and \eqref{proof_3_3} into the RHS of \eqref{proof_3_1}. Next, we prove  \eqref{thm_3_2} at time $t$. Using the definition of the value functions in the LHS of \eqref{thm_3_2}, we write that
\begin{multline}
    |V_t(m_t) - \hat{V}_t(\hat{\sigma}_t(m_t))| = \Big|\inf_{u_t \in [[U_t]]}Q_t(m_t, u_t) - \inf_{u_t \in [[U_t]]} \\
    \hat{Q}_t(\hat{\sigma}_t(m_t), u_t) \Big|
    \leq \sup_{u_t \in [[U_t]]}|Q_t(m_t, u_t) - \hat{Q}_t(\hat{\sigma}_t(m_t), u_t)| \\
    \leq \max\{\epsilon_t, \alpha_{t+1} + L_{\hat{V}_{t+1}} \cdot \delta_t\},
\end{multline}
where in the first inequality, we use \eqref{prelim_2_2} from Lemma \ref{lem_prelim_2}; and in the second inequality, we use \eqref{thm_3_1}. Thus, the results hold for all $t=0,\dots,T$ using mathematical induction.
\end{proof}

After bounding the approximation error for value functions, we also seek to bound the maximum performance loss in the implementation of an approximately optimal strategy. Consider an approximate strategy $\boldsymbol{\hat{g}}^* := \hat{g}_{0:T}^*$ computed using \eqref{DP_approx_1} - \eqref{DP_approx_2}, where $\hat{g}_t^*(\hat{\pi}_t) = \arg \min_{u_t \in [[U_t]]} \hat{Q}_t(\hat{\pi}_t, u_t)$ for all $t$. 
We can construct an approximate memory-based strategy $\boldsymbol{g}^{\text{ap}}=g_{0:T}^{\text{ap}}$ by selecting the control law $g_t^{\text{ap}}(m_t) := \hat{g}_t^*(\hat{\sigma}_t(m_t))$ for all $t$. Note that $\boldsymbol{g}^{\text{ap}}$ is equivalent to $\hat{\boldsymbol{g}}^*$ because they generate the same actions at each $t$ and subsequently, yield the same performance. Thus, we evaluate the performance of $\boldsymbol{g}^{\text{ap}}$ to determine the quality of approximation. To this end, for all $t=0,\dots,T$, for all $m_t \in [[M_t]]$ and $u_t \in [[U_t]]$, we define
\begin{align}
\Theta_t(m_t,u_t) :=  &\max\Big\{ \sup_{c_t \in [[C_t|m_t, u_t]]}c_t, \nonumber \\
&\sup_{m_{t+1} \in  [[M_{t+1}|m_t, u_t]]}\Lambda_{t+1}(m_{t+1}) \Big\}, \label{value_g_1}\\
\Lambda_t(m_t) := &\Theta_t(m_t,g_t^{\text{ap}}(m_t)), \label{value_g_2} 
\end{align}
where $\Lambda_{T+1}(m_{T+1}) := 0$, identically. Then, the performance of the memory-based approximate strategy $\boldsymbol{g}^{\text{ap}}$ is $\Lambda_0(m_0)$. In contrast, recall that the performance of an optimal strategy $\boldsymbol{g}^*$ is the optimal value $V_0(m_0)$ computed using \eqref{DP_basic_1} - \eqref{DP_basic_2}. Next, we bound the difference in performance between $\boldsymbol{g}^{\text{ap}}$ and $\boldsymbol{g}^*$.

\begin{theorem} \label{thm_approx_term_policy}
Let $L_{\hat{V}_{t+1}}$ be the Lipschitz constant of $\hat{V}_{t+1}$ for all $t=0,\dots,T$. Then, for all $m_t \in [[M_t]]$ and $u_t \in [[U_t]]$,
\begin{align}
    |Q_t(m_t, u_t) - \Theta_t(m_t, u_t)| \leq 2\alpha_t, \label{thm_3_3} \\
    |V_t(m_t) - \Lambda_t(m_t)| \leq 2\alpha_t. \label{thm_3_4}
\end{align}
where $\alpha_t = \max(\epsilon_t, \alpha_{t+1} + L_{\hat{V}_{t+1}}\cdot \delta_t)$ for all $t$, with $\alpha_{T+1} = 0$.
\end{theorem}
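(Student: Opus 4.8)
The plan is to reduce both bounds to a single sharp auxiliary estimate proved by backward induction on $t$, rather than attacking $|V_t(m_t) - \Lambda_t(m_t)|$ directly. A naive route — inserting $Q_t(m_t, g_t^{\text{ap}}(m_t))$ and chaining \eqref{thm_3_1}, \eqref{thm_3_2} with the relation between $Q_t$ and $\Theta_t$ — stacks three separate errors and yields roughly $4\alpha_t$, which is too weak. The observation that sharpens this to $2\alpha_t$ is that $\Lambda_t(m_t)$ and $\hat{V}_t(\hat{\sigma}_t(m_t))$ evaluate the \emph{same} control action: since $\hat{g}_t^*$ achieves the infimum in \eqref{DP_approx_2}, we have $\hat{V}_t(\hat{\sigma}_t(m_t)) = \hat{Q}_t(\hat{\sigma}_t(m_t), g_t^{\text{ap}}(m_t))$, and by construction $\Lambda_t(m_t) = \Theta_t(m_t, g_t^{\text{ap}}(m_t))$. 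Hence $\Lambda_t$ is the true cost of the action $g_t^{\text{ap}}(m_t)$ while $\hat{V}_t(\hat{\sigma}_t(m_t))$ is its approximate cost, and no optimization mismatch separates them — only the approximation error in the cost and in the evolution does.

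Accordingly, I would first establish the auxiliary claim $|\Lambda_t(m_t) - \hat{V}_t(\hat{\sigma}_t(m_t))| \leq \alpha_t$ for all $m_t \in [[M_t]]$, mirroring the proof of Theorem \ref{thm_approx_term_dp} almost verbatim but with $\Lambda$ replacing $V$ and the fixed action $g_t^{\text{ap}}(m_t)$ replacing the minimizing action. The base case at $T+1$ holds since $\Lambda_{T+1} = \hat{V}_{T+1} = 0 = \alpha_{T+1}$. For the inductive step, I compare $\Theta_t(m_t, g_t^{\text{ap}}(m_t))$ in \eqref{value_g_1} with $\hat{Q}_t(\hat{\sigma}_t(m_t), g_t^{\text{ap}}(m_t))$ in \eqref{DP_approx_1} term by term via Lemma \ref{lem_prelim_3}: the instantaneous-cost terms differ by at most $\epsilon_t$ from \eqref{ap1}, and the continuation terms are handled exactly as in \eqref{proof_3_3}, splitting into a part controlled by the induction hypothesis $|\Lambda_{t+1} - \hat{V}_{t+1}(\hat{\sigma}_{t+1})| \leq \alpha_{t+1}$ (applying \eqref{prelim_2_1} after reindexing the conditional range so that the supremum over $[[\hat{\Pi}_{t+1}|m_t,u_t]]$ becomes a supremum over $[[M_{t+1}|m_t,u_t]]$) and a part bounded by $L_{\hat{V}_{t+1}} \cdot \delta_t$ (using \eqref{eq_prelim} from Lemma \ref{lem_prelim} together with \eqref{ap2}). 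Taking the maximum reproduces the same recursion $\alpha_t = \max(\epsilon_t, \alpha_{t+1} + L_{\hat{V}_{t+1}} \cdot \delta_t)$, closing the auxiliary induction.

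With the auxiliary claim in hand, \eqref{thm_3_4} follows in one line by the triangle inequality: $|V_t(m_t) - \Lambda_t(m_t)| \leq |V_t(m_t) - \hat{V}_t(\hat{\sigma}_t(m_t))| + |\hat{V}_t(\hat{\sigma}_t(m_t)) - \Lambda_t(m_t)| \leq \alpha_t + \alpha_t = 2\alpha_t$, where the first term is bounded by \eqref{thm_3_2} of Theorem \ref{thm_approx_term_dp} and the second by the auxiliary claim. Finally, \eqref{thm_3_3} follows from \eqref{thm_3_4} evaluated at time $t+1$: in $Q_t$ from \eqref{DP_basic_1} and $\Theta_t$ from \eqref{value_g_1} the instantaneous-cost terms coincide for any fixed $u_t$, so Lemma \ref{lem_prelim_3} with \eqref{prelim_2_1} reduces the difference to $\sup_{m_{t+1} \in [[M_{t+1}|m_t,u_t]]} |V_{t+1}(m_{t+1}) - \Lambda_{t+1}(m_{t+1})| \leq 2\alpha_{t+1} \leq 2\alpha_t$, the last step using the monotonicity $\alpha_{t+1} \leq \alpha_t$ inherent in the recursion since $L_{\hat{V}_{t+1}} \cdot \delta_t \geq 0$.

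I expect the main obstacle to be recognizing and justifying the sharp intermediate bound — that $\Lambda_t$ and $\hat{V}_t(\hat{\sigma}_t)$ lie within $\alpha_t$ rather than $2\alpha_t$ — since this is exactly what prevents the approximation error from doubling and is not visible from the statement itself; once it is isolated, the remaining manipulations are routine adaptations of the arguments already used for Theorem \ref{thm_approx_term_dp}.
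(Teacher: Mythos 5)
Your proposal is correct, and its core is the same as the paper's: your auxiliary claim $|\Lambda_t(m_t) - \hat{V}_t(\hat{\sigma}_t(m_t))| \leq \alpha_t$ is precisely the paper's intermediate result (its \eqref{proof_4_3} and \eqref{proof_4_7}, after it observes in \eqref{proof_4_1} that the strategy-evaluation functions $\hat{\Theta}_t, \hat{\Lambda}_t$ coincide with $\hat{Q}_t, \hat{V}_t$ because $\hat{g}^*_t$ is the minimizer), and both you and the paper establish it by an induction that replays the proof of Theorem \ref{thm_approx_term_dp} with the fixed common action $g_t^{\text{ap}}(m_t) = \hat{g}_t^*(\hat{\sigma}_t(m_t))$ in place of the minimizing action. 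The only genuine divergence is the order and route for the two displayed bounds: the paper proves \eqref{thm_3_3} first, by a triangle inequality through the intermediate point $\hat{Q}_t(\hat{\sigma}_t(m_t), u_t)$ (combining Theorem \ref{thm_approx_term_dp}'s \eqref{thm_3_1} with the auxiliary claim), and then reads off \eqref{thm_3_4}; you prove \eqref{thm_3_4} first and then obtain \eqref{thm_3_3} by unrolling one step, exploiting that $Q_t$ and $\Theta_t$ share an identical instantaneous-cost term so that Lemma \ref{lem_prelim_3} and \eqref{prelim_2_1} reduce the difference to $\sup_{m_{t+1}} |V_{t+1}(m_{t+1}) - \Lambda_{t+1}(m_{t+1})| \leq 2\alpha_{t+1}$. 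Your route needs the extra (but immediate) monotonicity observation $\alpha_{t+1} \leq \alpha_t$, and in exchange it delivers the marginally sharper bound $2\alpha_{t+1}$ for \eqref{thm_3_3}, whereas the paper's route avoids any appeal to monotonicity and treats both inequalities symmetrically through the same triangle-inequality template. Both arguments are valid; the difference is cosmetic in strength but real in structure.
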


\begin{proof}
We begin by recursively defining the value functions that compute the performance of the strategy $\boldsymbol{\hat{g}}$. 
For all $t=0,\dots,T$ and for each $\hat{\pi}_t \in [[\hat{\Pi}_t]]$ and $u_t \in [[U_t]]$, let
\begin{align}
    \hat{\Theta}_t (\hat{\pi}_t, u_t) := &\max \Big \{ \sup_{c_t \in [[C_t|\hat{\pi}_t,u_t]]} c_t, \nonumber \\
    &\sup_{\hat{\pi}_{t+1} \in [[\hat{\Pi}_{t+1}|\hat{\pi}_t, u_t]]} \hat{\Lambda}_{t+1}(\hat{\pi}_{t+1})\Big\}, \label{value_hat_g_1}\\
    \hat{\Lambda}_t(\hat{\pi}_t):= &\hat{\Theta}_t(\hat{\pi}_t, \hat{g}_t(\hat{\pi}_t)), \label{value_hat_g_2}
\end{align}
where $\hat{\Lambda}_{T+1}(\hat{\pi}_{T+1}) :=0$, identically. Note that
\begin{align}
    \hat{\Theta}_t(\hat{\pi}_t, u_t) = \hat{Q}_{t}(\hat{\pi}_t, u_t) \quad \text{and} \quad \hat{\Lambda}_t(\hat{\pi}_t) = \hat{V}_t(\hat{\pi}_t), \label{proof_4_1}
\end{align}
for all $t=0,\dots,T$, since $\hat{g}_t(\hat{\pi}_t)  =\arg\min_{u_t \in \mathcal{U}_t}\hat{Q}_t(\hat{\pi}_t,u_t)$.

We first prove \eqref{thm_3_3} for all $t=0,\dots,T$. At time $t$, using the triangle inequality and \eqref{proof_4_1} in the LHS of \eqref{thm_3_3}:
\begin{align}
    |Q_t(m_t, u_t) - &\Theta_t(m_t, u_t)| \leq |Q_t(m_t, u_t) -  \nonumber \\
    \hat{Q}_t(\hat{\sigma}_t(m_t),u_t)|
     &+ |\hat{\Theta}_t(\hat{\sigma}_t(m_t), u_t) - \Theta_t(m_t, u_t)| \nonumber \\
    \leq \alpha_t  + &|\hat{\Theta}_t(\hat{\sigma}_t(m_t), u_t) - \Theta_t(m_t, u_t)|, \label{proof_4_2}
\end{align}
where, in the second inequality, we use \eqref{thm_3_1} from Theorem \ref{thm_approx_term_dp}. Then, to prove \eqref{thm_3_3}, it suffices to show that
\begin{gather}
|\hat{\Theta}_t(\hat{\sigma}_t(m_t), u_t) - \Theta_t(m_t, u_t)| \leq \alpha_t. \label{proof_4_3}
\end{gather}
We use mathematical induction starting at time $T+1$ to prove \eqref{proof_4_3} in addition to $|\hat{\Lambda}_t(\hat{\sigma}_t(m_t)) - \Lambda_t(m_t)| \leq \alpha_t$ for all $t=0,\dots,T$. At time $T+1$, using the definitions it holds that $\hat{\Lambda}_{T+1}(\hat{\sigma}_{T+1}(m_{T+1})) = \Lambda_{T+1}(m_{T+1}) = 0$. This forms the basis of our induction. Next, for all $t=0,\dots,T$, we consider the induction hypothesis that $|\hat{\Lambda}_{t+1}(\hat{\sigma}_{t+1}(m_{t+1})) - \Lambda_{t+1}(m_{t+1})| \leq \alpha_{t+1}$. Given the hypothesis, \eqref{proof_4_3} holds at time $t$ using the same sequence of arguments as in the proof for Theorem \ref{thm_approx_term_dp}. 
Next, using the definitions of the value functions from \eqref{value_g_2} and \eqref{value_hat_g_2}, we write that
\begin{multline}
    \hspace{-10pt}|\hat{\Lambda}_t(\hat{\sigma}_t(m_t)) - \Lambda_t(m_t)| = | \hat{\Theta}_t(\hat{\sigma}_t(m_t), \hat{g}_t(\hat{\sigma}_t(m_t)) - \Theta_t(m_t,\\
    \hspace{-8pt} g_t(m_t))|= | \hat{\Theta}_t(\hat{\sigma}_t(m_t), \hat{u}_t) \hspace{-2pt} -  \hspace{-2pt} \Theta_t(m_t,\hat{u}_t)| \leq \alpha_t, \label{proof_4_7}
\end{multline}
where, in the second equality, we use the definition of the control law to write that $g_t(m_t) = \hat{g}_t(\hat{\sigma}_t(m_t)) =: \hat{u}_t$; and in the inequality, we use \eqref{proof_4_3}. This proves the induction hypothesis for time $t$ given the hypothesis for time $t+1$. Thus, using mathematical induction \eqref{proof_4_3} holds for all $t=0,\dots,T$. Subsequently, we complete the proof for \eqref{thm_3_3} for all $t=0,\dots,T$ by substituting \eqref{proof_4_3} into the RHS of \eqref{proof_4_2}. Furthermore, note that \eqref{thm_3_4} follows directly from \eqref{thm_3_3} using the same sequence of arguments used to prove \eqref{proof_4_7}.
\end{proof}

\begin{remark}
    \textcolor{black}{The definition of $\alpha_t$ at any $t$ characterizes a trade-off in approximate information states between cost and observation approximation. In this definition, the observation prediction error $\delta_t$ is scaled by $L_{\hat{V}_t}$ and accumulated with $\alpha_{t+1}$. Thus, to minimize $\alpha_t$, we typically prioritize a reduction in $\delta_t$ over a reduction in the cost approximation error $\epsilon_t$.}
\end{remark}

\begin{remark}
We can specialize the results of both Theorem \ref{thm_approx_term_dp} and Theorem \ref{thm_approx_term_policy} to terminal cost problems, where the optimal DP is given by \eqref{DP_ad_term_1} - \eqref{DP_ad_term_2} and the approximate DP is given by \eqref{DP_ap_term_1} - \eqref{DP_ap_term_2}. Then, the approximation bounds hold with $\alpha_t := \alpha_{t+1} + L_{\hat{V}_{t+1}^{\text{tm}}}\cdot \delta_t$ for all $t=0,\dots,T-1$ and $\alpha_T := \epsilon_T$.
\end{remark}

\vspace{-12pt}

\subsection{Alternate Characterization} \label{subsection:alt_approx}

In this subsection, we provide stronger but simpler conditions to identify an approximate information state as alternatives to \eqref{ap2} and \eqref{ap3}. They prescribe that an approximate information state $\hat{\Pi}_t = \hat{\sigma}_t(M_t)$ must satisfy for all $t$:

\textit{1) State-like evolution:} There exists a Lipschitz continuous function $\hat{f}_t:\hat{\mathcal{P}}_t \times \mathcal{U}_t \times \mathcal{Y}_{t+1} \to \mathcal{P}_{t+1}$, such that
\begin{gather}
    \hat{\Pi}_{t+1} = \hat{f}_t(\hat{\Pi}_t, U_t, Y_{t+1}). \label{ap2a}
\end{gather}

\textit{2) Sufficient to approximate observations:} For all $m_t \in [[M_t]]$ and $u_t \in [[U_t]]$, we define the sets $\mathcal{K}_{t+1}^{\text{ob}} := [[ Y_{t+1} ~|~ m_t, u_t]]$ and $\hat{\mathcal{K}}_{t+1}^{\text{ob}} := [[ Y_{t+1} ~|~ \hat{\sigma}_t(m_t), u_t ]]$, each independent of the strategy $\boldsymbol{g}$.
Then, for some $\delta_t^{\text{ob}} \in \mathbb{R}_{\geq0}$:
\begin{gather}
    \mathcal{H}(\mathcal{K}_{t+1}^{\text{ob}}, \hat{\mathcal{K}}_{t+1}^{\text{ob}}) \leq \delta_t^{\text{ob}}. \label{ap2b}
\end{gather}

\textit{3) Lipschitz-like observation prediction:} There exists a constant $\lambda_{t}^{\text{ob}} \in \mathbb{R}_{\geq0}$ such that for all $\hat{\pi}_t^1, \hat{\pi}_t^2 \in [[\hat{\Pi}_t]]$,
\begin{gather}
    \mathcal{H}\big([[ Y_{t+1} | \hat{\pi}_t^1, u_t ]], [[ Y_{t+1} |\hat{\pi}_t^2, u_t ]]\big) \leq \lambda_{t}^{\text{ob}} \cdot \eta(\hat{\pi}_t^1, \hat{\pi}_t^2), \label{ap2c}
\end{gather}
where $\eta$ is an appropriate metric on $\hat{\mathcal{P}}_t$.

In addition to \eqref{ap1} in Definition \ref{def_approx}, the conditions \eqref{ap2a} - \eqref{ap2c} are sufficient to characterize an approximate information state. 

\begin{lemma} \label{lem_alt_approx}
For all $t=0,\dots,T$, if an uncertain variable $\hat{\Pi}_t = \hat{\sigma}_t(M_t)$ satisfies \eqref{ap2a} - \eqref{ap2b}, it also satisfies \eqref{ap2}.
\end{lemma}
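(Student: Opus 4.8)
The plan is to mirror the argument used in the proof of the analogous exact-information-state result (the lemma establishing \eqref{p2a}--\eqref{p2b} $\Rightarrow$ \eqref{p2}), replacing each set \emph{equality} there by a Hausdorff-distance \emph{estimate}. Fix $t$, a realization $m_t \in [[M_t]]$, a control $u_t \in [[U_t]]$, and write $\hat{\pi}_t := \hat{\sigma}_t(m_t)$. The goal is to express both conditional ranges in \eqref{ap2} as images of the corresponding observation ranges under a single map, and then transfer the observation-level bound \eqref{ap2b} through that map.

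First I would rewrite the two sets. Substituting the state-like evolution \eqref{ap2a} and using that $\hat{\Pi}_t$ is pinned to $\hat{\pi}_t$ under both conditionings (because $M_t = m_t$ forces $\hat{\Pi}_t = \hat{\sigma}_t(m_t) = \hat{\pi}_t$, while in the second case we condition directly on $\hat{\Pi}_t = \hat{\pi}_t$), I obtain
\[
\mathcal{K}_{t+1} = \big\{ \hat{f}_t(\hat{\pi}_t, u_t, y_{t+1}) \;\big|\; y_{t+1} \in \mathcal{K}_{t+1}^{\text{ob}} \big\}, \quad \hat{\mathcal{K}}_{t+1} = \big\{ \hat{f}_t(\hat{\pi}_t, u_t, y_{t+1}) \;\big|\; y_{t+1} \in \hat{\mathcal{K}}_{t+1}^{\text{ob}} \big\},
\]
so that both sets are the image of $\mathcal{K}_{t+1}^{\text{ob}}$, respectively $\hat{\mathcal{K}}_{t+1}^{\text{ob}}$, under the single map $g(\cdot) := \hat{f}_t(\hat{\pi}_t, u_t, \cdot)$, which is Lipschitz in its observation argument by \eqref{ap2a}.

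The crucial point is that the two sets differ only through the set being imaged, not through the map $g$. I then invoke the standard contraction property of the Hausdorff distance under Lipschitz maps, $\mathcal{H}(g(\mathcal{A}), g(\mathcal{B})) \le L_g \,\mathcal{H}(\mathcal{A}, \mathcal{B})$, available among the preliminary results in Appendix~B. Taking $L_g = L_{\hat{f}_t}$ and combining with \eqref{ap2b} gives
\[
\mathcal{H}(\mathcal{K}_{t+1}, \hat{\mathcal{K}}_{t+1}) = \mathcal{H}\big(g(\mathcal{K}_{t+1}^{\text{ob}}), g(\hat{\mathcal{K}}_{t+1}^{\text{ob}})\big) \le L_{\hat{f}_t} \, \mathcal{H}(\mathcal{K}_{t+1}^{\text{ob}}, \hat{\mathcal{K}}_{t+1}^{\text{ob}}) \le L_{\hat{f}_t} \, \delta_t^{\text{ob}},
\]
which establishes \eqref{ap2} with $\delta_t := L_{\hat{f}_t}\,\delta_t^{\text{ob}}$.

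The one step requiring care---and the main obstacle---is the set-image rewriting above: I must verify that conditioning on $\hat{\Pi}_t = \hat{\pi}_t$ rather than on $M_t = m_t$ changes only the admissible range of $Y_{t+1}$ while leaving the first argument of $\hat{f}_t$ fixed at $\hat{\pi}_t$. This is exactly what aligns the two sets as images under the \emph{common} map $g$, and is the place where the deterministic relation $\hat{\Pi}_t = \hat{\sigma}_t(M_t)$ is essential. Once this representation is secured, the remainder is a routine application of the Hausdorff--Lipschitz estimate, and the claim follows for every $t=0,\dots,T$. Note that \eqref{ap2c} is not needed here; it plays the analogous role for \eqref{ap3}.
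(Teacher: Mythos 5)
Your proposal is correct and takes essentially the same route as the paper: the paper likewise uses \eqref{ap2a} to write both conditional ranges as images of $\mathcal{K}^{\text{ob}}_{t+1}$ and $\hat{\mathcal{K}}^{\text{ob}}_{t+1}$ under the common map $\hat{f}_t(\hat{\sigma}_t(m_t), u_t, \cdot)$, and then obtains $\mathcal{H}(\mathcal{K}_{t+1}, \hat{\mathcal{K}}_{t+1}) \leq L_{\hat{f}_t} \cdot \mathcal{H}(\mathcal{K}^{\text{ob}}_{t+1}, \hat{\mathcal{K}}^{\text{ob}}_{t+1}) \leq L_{\hat{f}_t} \cdot \delta_t^{\text{ob}}$. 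The only discrepancy is that the Hausdorff--Lipschitz contraction inequality you invoke is not stated as a standalone lemma in Appendix B (the paper instead derives it inline by expanding the definition \eqref{H_met_def} and applying the Lipschitz bound termwise), but this is a packaging difference, not a gap.
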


\begin{proof}
Let $m_t \in [[M_t]]$ be a given realization of $M_t$ and let $\hat{\pi}_t = \hat{\sigma}_t(m_t)$ satisfy \eqref{ap2a} - \eqref{ap2b}, for all $t=0,\dots,T$. Then, using \eqref{ap2a}, we can write the LHS in \eqref{ap2} as
$\mathcal{H}(\mathcal{K}_{t+1}, \hat{\mathcal{K}}_{t+1})     = \mathcal{H}\big([[\hat{f}_t(\hat{\sigma}_t(m_t), u_t, Y_{t+1})|m_t, u_t]], [[\hat{f}_t(\hat{\sigma}_t(m_t), u_t, Y_{t+1})|$ $\hat{\sigma}_t(m_t), u_t]] \big) = \max \big\{ \sup_{y_{t+1} \in \mathcal{K}^\text{ob}_{t+1}}  \inf_{\hat{y}_{t+1} \in \hat{\mathcal{K}}^\text{ob}_{t+1}}$ $\eta \big(\hat{f}_t(\hat{\sigma}_t(m_t),u_t,y_{t+1}), \hat{f}_t(\hat{\sigma}_t(m_t),u_t,\hat{y}_{t+1})\big), \sup_{\hat{y}_{t+1} \in \hat{\mathcal{K}}^\text{ob}_{t+1}}$ $\inf_{y_{t+1} \in \mathcal{K}^{\text{ob}}_{t+1}} \eta\big(\hat{f}_t(\hat{\sigma}_t(m_t),u_t,y_{t+1}),\hat{f}_t(\hat{\sigma}_t(m_t),u_t,\hat{y}_{t+1})\big)\big\},$ where, in the second equality, we use the definition of the Hausdorff distance from \eqref{H_met_def}. Note that $\hat{f}_t$ is globally Lipschitz from the alternate characterization of the approximate information state. This implies that $\eta\big(\hat{f}_t(\hat{\sigma}_t(m_t),u_t,y_{t+1}),$ $\hat{f}_t(\hat{\sigma}_t(m_t),u_t,\hat{y}_{t+1})\big) \leq L_{\hat{f}_t} \cdot \eta(y_{t+1},$ $\hat{y}_{t+1})$, and thus 
$\mathcal{H}(\mathcal{K}_{t+1}, \hat{\mathcal{K}}_{t+1}) \leq L_{\hat{f}_t} \max\big\{ \sup_{y_{t+1} \in \mathcal{K}^{\text{ob}}_{t+1}}$ $\inf_{\hat{y}_{t+1} \in \hat{\mathcal{K}}^\text{ob}_{t+1}} \eta(y_{t+1}, \hat{y}_{t+1}),
     \sup_{\hat{y}_{t+1} \in \hat{\mathcal{K}}^\text{ob}_{t+1}}\inf_{y_{t+1} \in \mathcal{K}^{\text{ob}}_{t+1}}$ $\eta(y_{t+1},$ $\hat{y}_{t+1})\big\}
    = L_{\hat{f}_t} \cdot \mathcal{H}(\mathcal{K}^\text{ob}_{t+1},\hat{\mathcal{K}}^\text{ob}_{t+1}) \leq L_{\hat{f}_t} \cdot \delta_t^{\text{ob}}$.
\end{proof}

\begin{lemma} \label{lem_alt_approx_2}
For all $t=0,\dots,T$, if an uncertain variable $\hat{\Pi}_t = \hat{\sigma}_t(M_t)$ satisfies \eqref{ap2a} - \eqref{ap2c}, it also satisfies \eqref{ap3}.
\end{lemma}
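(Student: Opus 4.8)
The plan is to mirror the argument in the proof of Lemma~\ref{lem_alt_approx}, but to track a Lipschitz constant rather than a fixed additive error. First I would fix arbitrary realizations $\hat{\pi}_t^1, \hat{\pi}_t^2 \in [[\hat{\Pi}_t]]$ and $u_t \in [[U_t]]$, and apply the state-like evolution property~\eqref{ap2a} to rewrite each conditional range as an image under $\hat{f}_t$, namely $[[\hat{\Pi}_{t+1}\,|\,\hat{\pi}_t^i, u_t]] = \{\hat{f}_t(\hat{\pi}_t^i, u_t, y_{t+1}) \mid y_{t+1} \in [[Y_{t+1}\,|\,\hat{\pi}_t^i, u_t]]\}$ for $i=1,2$. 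I would then expand the left-hand side of~\eqref{ap3} using the definition of the Hausdorff distance in~\eqref{H_met_def}, reducing it to two symmetric $\sup$--$\inf$ terms over pairs $(y^1, y^2)$ of observations drawn from the two observation ranges.

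The key step is to bound the inner quantity $\eta\big(\hat{f}_t(\hat{\pi}_t^1, u_t, y^1), \hat{f}_t(\hat{\pi}_t^2, u_t, y^2)\big)$. Unlike in Lemma~\ref{lem_alt_approx}, where both points share the same first coordinate $\hat{\sigma}_t(m_t)$, here both the information-state coordinate and the observation coordinate vary simultaneously. I would invoke the global Lipschitz continuity of $\hat{f}_t$ to split this into $L_{\hat{f}_t}\,\eta(\hat{\pi}_t^1, \hat{\pi}_t^2) + L_{\hat{f}_t}\,\eta(y^1, y^2)$. Taking the infimum over $y^2 \in [[Y_{t+1}\,|\,\hat{\pi}_t^2, u_t]]$ and then the supremum over $y^1 \in [[Y_{t+1}\,|\,\hat{\pi}_t^1, u_t]]$, the first term passes through unchanged while the second becomes $L_{\hat{f}_t}\sup_{y^1}\inf_{y^2}\eta(y^1, y^2)$. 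I would recognize this residual as one of the two one-sided terms defining $\mathcal{H}\big([[Y_{t+1}\,|\,\hat{\pi}_t^1, u_t]], [[Y_{t+1}\,|\,\hat{\pi}_t^2, u_t]]\big)$, and hence bound it by $\lambda_t^{\text{ob}}\,\eta(\hat{\pi}_t^1, \hat{\pi}_t^2)$ using the Lipschitz-like observation prediction property~\eqref{ap2c}.

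Combining the two contributions, each directional $\sup$--$\inf$ term is bounded by $L_{\hat{f}_t}(1 + \lambda_t^{\text{ob}})\,\eta(\hat{\pi}_t^1, \hat{\pi}_t^2)$; by the symmetry of the construction, the reversed term obeys the identical bound, so the maximum of the two, i.e.\ the full Hausdorff distance, does as well. Setting $\lambda_t := L_{\hat{f}_t}(1 + \lambda_t^{\text{ob}})$ then verifies~\eqref{ap3}. The main obstacle I anticipate is the Lipschitz decomposition when two coordinates of $\hat{f}_t$ change at once: one must justify pulling the $\hat{\pi}$-dependent part out as a uniform additive term (which relies on $\hat{f}_t$ being jointly Lipschitz under a sum metric on the product domain) and then reinterpreting the leftover observation term exactly as a one-sided Hausdorff distance so that~\eqref{ap2c} applies. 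Note that property~\eqref{ap2b} is not needed for this particular implication.
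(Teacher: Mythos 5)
Your proposal is correct and follows essentially the same route as the paper's proof: both rewrite the conditional ranges via the state-like evolution property \eqref{ap2a} as images under $\hat{f}_t$, bound the resulting Hausdorff distance using the Lipschitz continuity of $\hat{f}_t$ together with the triangle inequality (your explicit sup--inf decomposition is just the unpacked version of this step), and then invoke \eqref{ap2c} to conclude with $\lambda_t := L_{\hat{f}_t}(1+\lambda_t^{\text{ob}})$. Your observation that \eqref{ap2b} is not needed for this implication is also consistent with the paper, whose proof uses only \eqref{ap2a} and \eqref{ap2c}.
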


\begin{proof}
Let $\hat{\pi}_t^1, \hat{\pi}_t^2\in [[\hat{\Pi}_t]]$ be two possible realizations of an approximate information state $\hat{\Pi}_t$, which satisfies \eqref{ap2a} - \eqref{ap2c}, for all $t=0,\dots,T$. Then, using \eqref{ap2a}, we can write the LHS in \eqref{ap3} as
$\mathcal{H}\big([[ \Pi_{t+1} | \hat{\pi}_t^1, u_t ]], [[ \Pi_{t+1} |\hat{\pi}_t^2, u_t ]]\big)$
    $= \mathcal{H}\big([[\hat{f}_t(\hat{\pi}_t^1, u_t, Y_{t+1})|\hat{\pi}_t^1, u_t]], [[\hat{f}_t(\hat{\pi}_t^2, u_t, Y_{t+1})| \hat{\pi}_t^2, u_t]]$
    $\leq L_{\hat{f}_t} \cdot \big( \eta(\hat{\pi}_t^1, \hat{\pi}_t^2) + \mathcal{H}\big([[ Y_{t+1} | \hat{\pi}_t^1, u_t ]], [[ Y_{t+1} |\hat{\pi}_t^2, u_t ]]\big) \big)$
    $\leq L_{\hat{f}_t} \cdot (1 + \lambda_t^{\text{ob}}) \cdot \eta(\hat{\pi}_t^1, \hat{\pi}_t^2)$,
where, in the first inequality, we use the Lipschitz continuity of the function $\hat{f}_t$ along with the triangle inequality, and in the second inequality, we use \eqref{ap2c}. This completes the proof by defining $\lambda_t := L_{\hat{f}_t} \cdot (1 + \lambda_t^{\text{ob}})$.
\end{proof}

\vspace{-12pt}

\subsection{Examples} \label{subsection:approx_examples}

\textcolor{black}{We present two examples approximate information states constructed using state-quantization \cite{bertsekas1975convergence} in systems with a known state-space model. Consider a system as described in Subsection \ref{subsection:info_examples} with compact feasible sets $\big\{\mathcal{X}_t, \mathcal{N}_t, \mathcal{W}_t ~|~ t=0,\dots,T\big\}$ in a metric space $(\mathcal{S}, \eta)$. Recall that $\mathcal{X}_t$ is the state space at any $t$.
A \textit{quantized state space} is defined as a finite set $\hat{\mathcal{X}}_t \subset \mathcal{X}_t$, such that for a given quantization parameter $\gamma_t \in \mathbb{R}_{\geq0}$, it holds that $\max_{x_t \in \mathcal{X}_t} \min_{\hat{x}_t \in \hat{\mathcal{X}}_t} \eta(x_t, \hat{x}_t) \leq \gamma_t$. This implies that for every point $x_t \in \mathcal{X}_t$, there exists a point in the quantized space $\hat{X}_t$ that is at most $\gamma_t$ distance away from $x_t$. 
Then, a \textit{quantization function} $\mu_t: \mathcal{X}_t \to \hat{\mathcal{X}}_t$ maps each point in the state space to the closest point in the quantized space for all $t=0,\dots,T$, and is defined as $\mu_t(x_t) := \arg \min_{\hat{x}_t \in \hat{\mathcal{X}}_t} \eta(x_t, \hat{x}_t)$ for every $x_t \in \mathcal{X}_t$.}


\textit{1) Perfectly Observed Systems:} Consider a system where $Y_t = X_t$ for all $t=0,\dots,T$. Recall from Subsection \ref{subsection:info_examples} that the $\Pi_t=X_t \in \mathcal{X}_t$ for all $t$. \textcolor{black}{Building upon this information state, an approximate information state for such a system can be constructed using the state quantization function, i.e., $\hat{\Pi}_t := \mu_t(X_t)$.} This approximate information state Definition \ref{def_approx} with $\epsilon_t = 2L_{d_t} \cdot \gamma_t$ and $\delta_t = 2 \gamma_{t+1} + 2  L_{f_t} \cdot \gamma_t$, where $\gamma_{T+1} = 0$, and $L_{d_t}$ and $L_{f_t}$ are the Lipschitz constants for $d_t$ and $f_t$, respectively (proof in Appendix C). Note that because $\hat{\Pi}_t$ takes values in a finite set, it trivially satisfies \eqref{ap3} in Definition \ref{def_approx}. 

\textit{2) Partially Observed Systems:} \textcolor{black}{For a partially observed system, recall from Subsection \ref{subsection:info_examples} that an information state is given by the conditional range $\Pi_t = [[X_t|m_t]] \in \mathcal{B}(\mathcal{X}_t)$, where $\mathcal{B}(\mathcal{X}_t)$ is the set of all compact subsets of $\mathcal{X}_t$. We can construct an approximate information state by mapping each element $x_t \in \Pi_t$ to its closest element in the quantized space $\hat{\mathcal{X}}_t$. 
We denote this mapping by $\nu_t: \mathcal{B}(\mathcal{X}_t) \to 2^{\hat{\mathcal{X}}_t}$, where $2^{\hat{\mathcal{X}}_t}$ is the power set of $\hat{\mathcal{X}}_t$, and note that $\nu_t(\Pi_t) := \{\mu_t(x_t) \in \hat{\mathcal{X}}_t~|~x_t \in \Pi_t\}$ for any $\Pi_t \in \mathcal{B}(\mathcal{X}_t)$. Then, $\hat{\Pi}_t = \nu_t(\Pi_t)$ is an approximate information state for the partially observed system for all $t=0,\dots,T$} with $\epsilon_t = 2L_{d_t} \cdot \gamma_t$ and $\delta_t = 2 \gamma_{t+1} + 2 L_{\bar{f}_t} \cdot L_{h_{t+1}} \cdot L_{f_t} \cdot \gamma_t$, where $\gamma_{T+1} = 0$, and $L_{\bar{f}_t}$, $L_{h_{t+1}}$ and $L_{f_t}$ are Lipschitz constants of $\bar{f}_t$, $h_{t+1}$, and $f_t$, respectively (proof in Appendix D). 

\vspace{-12pt}

\color{black}

\subsection{Learning an approximate information state} \label{subsection:learning_algorithm}

In this subsection, we present an approach to learning an approximate information state from data.
We assume access to multiple trajectories $(Y_{t+1}, C_t, U_t : t = 0,\dots,T)$ generated using an exploratory control strategy. We denote the collection of such trajectories by the dataset $\mathcal{D}$. We begin by creating data-driven conditional ranges $\mathcal{K}_{t+1}^{\text{ob}} = [[Y_{t+1}~|~M_t, U_t]]^{\mathcal{D}}$ for all $t$ by combining the observations at time $t+1$ for trajectories in $\mathcal{D}$ that share common $(M_t, U_t)$ at time $t$. Similarly, we estimate the maximum incurred cost from the data at each $t$ as $C^{\max}_t = \max_{c_t \in [[C_t~|~M_t, U_t]]^{\mathcal{D}}} c_t$. Here note that the maximum exists because $[[C_t~|~M_t, U_t]]^{\mathcal{D}}$ is a finite set.

Then, we select three function approximators for each $t$: \textbf{(1) the encoder} $\psi_t: \hat{\mathcal{P}}_{t-1} \times \mathcal{U}_{t-1} \times \mathcal{Y}_{t} \to \hat{\mathcal{P}}_{t}$ that recursively compresses the history into an approximate information state, 
\textbf{(2) the observation decoder} $\phi_t: \hat{\mathcal{P}}_t \times \mathcal{U}_t \to \mathcal{B}(\mathcal{Y}_{t+1})$ that predicts next observations, where $\mathcal{B}(\mathcal{Y}_{t+1})$ is a set of compact subsets of $\mathcal{Y}$, 
and \textbf{3) the cost decoder} $\phi_t^{\text{c}}: \hat{\mathcal{P}}_t \times \mathcal{U}_t \to \mathcal{C}_t$ that predicts maximum cost.
The encoder is typically constructed with a recurrent neural network (e.g., GRU) whose hidden state is treated as $\hat{\Pi}_t = \psi_t(\hat{\Pi}_{t-1}, U_{t-1}, Y_t)$ at each $t$. The output of the encoder is augmented with the latest action $U_t$ to predict the conditional range $\hat{\mathcal{K}}_{t+1}^{\text{ob}} = [[Y_{t+1}~|~\hat{\Pi}_t, U_t]] = \phi_t(\hat{\Pi}_t, U_t)$, that best approximates $\mathcal{K}_{t+1}^{\text{ob}}$, and the worst-case cost $\hat{C}_t := \phi_t^{\text{c}}(\hat{\Pi}_t, U_t)$ that best approximates $C_t^{\text{max}}$. 

We also select two training losses $L^{\text{c}}_t = ||\hat{C}_t - C_t^{\max}||_2$ and $L^{\text{ob}}_t = \hat{\mathcal{H}}(\hat{\mathcal{K}}_{t+1}^{\text{ob}}, \mathcal{K}_{t+1}^{\text{ob}})$, where $\hat{\mathcal{H}}$ refers to a differentiable surrogate for the Hausdorff metric. For $\hat{\mathcal{H}}$, we can use either the surrogate procedure in \cite[Subsection II-B]{karimi2019reducing}, or the average Hausdorff distance \cite{aydin2021usage} between two sets $\mathcal{X},\mathcal{Y} \subset (\mathcal{S}, \eta)$:
\begin{gather*}
    \hat{\mathcal{H}}(\mathcal{X}, \mathcal{Y}) \hspace{-1pt} := \hspace{-1pt} \Big\{ \frac{ \int x \min_{y \in \mathcal{Y}} \eta(x,y) dx}{\int dx} \hspace{-1pt} + \hspace{-1pt} \frac{\int y \min_{x \in \mathcal{X}} \eta(x,y)}{\int dy}\Big\}.
\end{gather*}
Furthermore, we replace each $\min$ with a ``$\mathrm{softmin}$" to ensure differentiability of $\hat{\mathcal{H}}$. Then, the net training loss is defined at each $t$ as $L_t = \lambda \cdot L^{\text{c}}_t + L^{\text{ob}}_t$, where $\lambda \in \mathbb{R}_{\geq0}$ is a weight.

To learn an approximate information state model, we train the complete network assembly over each trajectory with loss $\sum_{t=0}^{T} L_t$ for one round of training. Further details on training are presented through a numerical example in Subsection \ref{subsection:pursuit}.
Given such a model learned from data, we can use a standard DP decomposition to derive an approximate control strategy.

\color{black}

\vspace{-6pt}

\section{Numerical Examples} \label{section:example}

We present two numerical examples to illustrate our approach: \textcolor{black}{\textit{(1) The Wall Defense Problem:} a worst-case control problem with partial observations, used to illustrate state-quantization from Subsection \ref{subsection:approx_examples}, and \textit{(2) The Pursuit Evasion Problem:} a worst-case reinforcement learning problem used to illustrate the learning approach in Subsection \ref{subsection:learning_algorithm}.}

\vspace{-12pt}

\subsection{The Wall Defense Problem}

In the wall defense problem, we consider an agent who defends a wall in a $5 \times 5$ grid world from an attacker over a time horizon $T$. The wall is located across the central row of the grid. We illustrate the wall defense problem for one initial condition in Fig. \ref{fig:1a}.
Here, the black-colored cells constitute the wall, and the grey-hatched cells are adjacent to the wall. The solid blue triangle, solid red circle and red ring are the agent, attacker and observation, respectively, at $t=0$. The pink cells are feasible positions of the attacker given the observation. The attacker moves within the bottom two rows of the grid and damages a wall cell when positioned in an adjacent cell. At each $t=0,\dots,T$, we denote the position of the attacker by $X_t^{\text{at}} \in \mathcal{X}^{\text{at}} = \{(-2,-1),$ $\dots,(2,-1),(-2, -2), \dots, (2,- 2) \}$. In contrast, the agent moves within the top two rows of the grid and repairs a wall cell when positioned in an adjacent cell. At each $t$, we denote the position of the agent by $X_t^\text{ag} \in \mathcal{X}^{\text{ag}} = \{(-2,1),\dots,(2,1),$ $ (-2,2), \dots, (2,2) \}$.
The state of the wall at each $t$ is the accumulated damage denoted by $D_t = (D_t^{-2}, \dots, D_t^2)$, where $D_t^i \in \mathcal{D}_t^i = \{0,1,2,3\}$ for all $i = -2, \dots, 2$ and $\mathcal{D}_t = \times_{i=-2}^2 \mathcal{D}_t^i$.
The attacker starts at the position $X_0^{\text{at}} \in \mathcal{X}^{\text{at}}$, which evolves for all $t$ as $X_{t+1}^{\text{at}} = \mathbb{I}(X_t^{\text{at}} + W_t \in \mathcal{X}^{\text{at}})\cdot(X_t^{\text{at}} + W_t) + (1$ $ - \mathbb{I}(X_t^{\text{at}} + W_t \in \mathcal{X}^{\text{at}}) )\cdot X_t^{\text{at}}$, where $\mathbb{I}$ is the indicator function and $W_t \in \mathcal{W}_t$ is an uncontrolled disturbance with $\mathcal{W}_t = \{(-1,0),$ $(1,0),(0,0),(0,1),(0,-1)\}$. 
At each $t$, the agent observes their own position and the wall's state. The agent also partially observes the attacker's position as $Y_t = \mathbb{I}(X_t^{\text{at}} + N_t \in \mathcal{X}^{\text{at}})\cdot(X_t^{\text{at}} + N_t) + (1-\mathbb{I}(X_t^{\text{at}} + N_t \in \mathcal{X}^{\text{at}}) )\cdot X_t^{\text{at}}$, where $N_t \in \mathcal{N}_t= \{(0,0), (0,1)\}$ is the measurement noise.
Given the history of observations, the agent selects an action $U_t \in \mathcal{U}_t = \mathcal{W}_t$ at each $t$. 
Starting with $X_0^{\text{ag}} \in \mathcal{X}^{\text{ag}}$, the agent moves as $X_{t+1}^{\text{ag}} = \mathbb{I}(X_t^{\text{ag}} + U_t \in \mathcal{X}^{\text{ag}})\cdot(X_t^{\text{ag}} + U_t) + (1 - \mathbb{I}(X_t^{\text{ag}} + U_t \in \mathcal{X}^{\text{ag}}) )\cdot X_t^{\text{ag}}$.
Starting with $D_0 = (0,0,0,0,0)$, the state of the wall evolves as $D_{t+1}^i = \min\big\{3, \max\big\{0, D_t^i + \mathbb{I}(X_t^{\text{at}} = (i,-1)) - \mathbb{I}(X_t^{\text{ag}} = (i, 1)) \big\}\big\}$ for all $t$ and $i = -2, \dots, 2$. At each $t$, after selecting the action, the agent incurs a cost for the damage to the wall, i.e., $c_t(D_t) = \sum_{i=-2}^2 D_t^i$. The agent's aim is to minimize the maximum instantaneous damage to the wall, i.e., $\mathcal{J}(\boldsymbol{g}) = \max_{t=0,\dots,T} \max_{x_0, w_{0:T}, n_{0:T}} c_t(D_t)$.

\begin{figure}[ht]
  \vspace{-10pt}
  \centering
  \subfigure[The original grid \label{fig:1a}]{\includegraphics[width=0.35\linewidth, keepaspectratio]{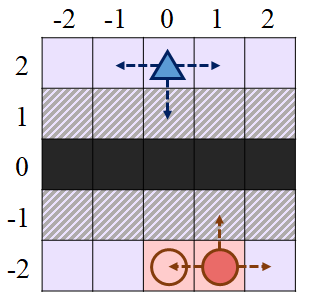}} 
  \subfigure[The quantized grid \label{fig:1b}]{\includegraphics[width=0.35\linewidth, keepaspectratio]{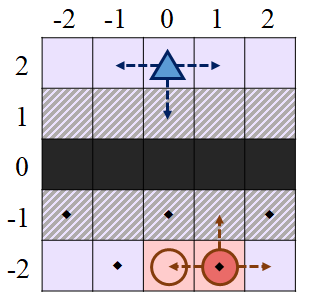}}
  \vspace{-10pt}
  \caption{The wall defense problem with the initial conditions $x_0^{\text{ag}} = (0,2)$ and $y_0 = (0,-2)$.}
  \label{fig:illustration_wall}
  \vspace{-10pt}
\end{figure}

\textcolor{black}{Since we are given a state-space model for this problem, this system is equivalent to a combination of Case 1 and Case 2 in Subsection \ref{subsection:info_examples}, i.e., the state of the agent and damage to the wall are perfectly observed, whereas the state of the attacker is partially observed. Thus, an information state at each $t$ is given by a combination of the two information states $\Pi_t = \big(X_t^{\text{ag}}, D_t, [[X_t^{\text{at}}|M_t]]\big)$. 
We construct an approximation of the conditional range $[[X_t^{\text{at}}|M_t]]$ at time $t$ using state quantization explained in Subsection \ref{subsection:approx_examples}, and define the approximate range $\hat{A}_t = \big\{ \mu_t(x_t) \in \hat{\mathcal{X}}^{\text{at}} | x_t \in [[X_t^{\text{at}}|M_t]] \big\}$. The set of quantized cells $\hat{\mathcal{X}}^{\text{at}}$, with $\gamma_t = 1$ for all $t$, is marked in Fig. \ref{fig:1b} with dots. Recall that $\mu_t(x_t) = \arg\min_{\hat{x}_t \in \hat{\mathcal{X}}}d(x_t,\hat{x}_t)$ and the approximate range at time $t$ is $\hat{A}_t  = \big\{ \mu_t(x_t) \in \hat{\mathcal{X}}^{\text{at}} | x_t \in [[X_t^{\text{at}}|M_t]] \big\}$. 
We consider the approximate information state $\hat{\Pi}_t = \big(X_t^{\text{ag}}, D_t, \hat{A}_t, Y_0\big)$ for all $t$.
Note that this approximate information state also includes the initial observation $Y_0$ in $\hat{\Pi}_t$ in addition to the quantized information state because it improves the prediction of $\hat{A}_{t+1}$ in practice. Including this additional term does not violate Definition \ref{def_approx}.} 
For five initial conditions, we compute the best control strategy for $T=6$ using both the information state (IS) and the approximate information state (AIS). In Fig. \ref{fig:results_defense}, we present the computational times (Run.) for both the DPs in seconds. 
Note that the approximate DP has a faster run-time in all cases. We also implement both strategies with random disturbances in the system with $T=6$. 
In Fig. \ref{fig:results_defense}, we also present the \textit{actual} worst-case costs across $5\times10^3$ implementations of both strategies and note that the AIS has a bounded deviation from the IS.

\begin{figure}[t]
  \centering  \includegraphics[width=\linewidth, keepaspectratio]{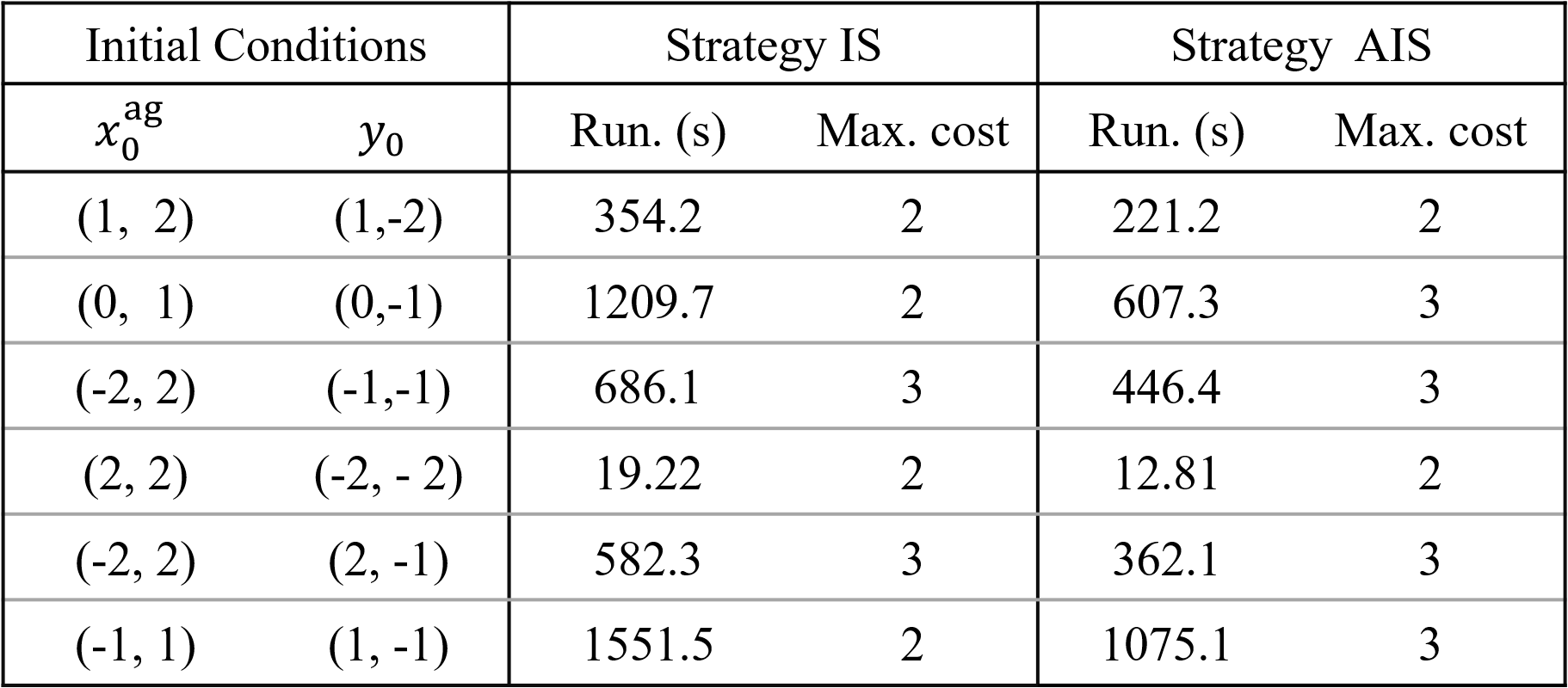} 
  \caption{Costs and run-times for $5\times10^3$ simulations and $T=6$.}
  \label{fig:results_defense}
  \vspace{-20pt}
\end{figure}

\vspace{-6pt}

\subsection{Pursuit Evasion Problem}
\label{subsection:pursuit}
In the pursuit-evasion problem, we consider an agent who chases a moving target in a $9 \times 9$ grid world with static obstacles. The agent aims to get close to the target over a time horizon $T$.
For each $t = 0,\dots,T$, we denote the position of the agent by $X_t^\text{ag} \in \mathcal{X}$ and that of the target by $X_t^{\text{ta}} \in \mathcal{X}$, where $\mathcal{X} = \big\{(-4,-4),\dots,(4,4)\big\} \setminus \mathcal{O}$ is the set of feasible grid cells and $\mathcal{O} \subset \mathcal{X}$ is the set of obstacles. 
The target starts at the position $X_0^{\text{ta}} \in \mathcal{X}$, which is updated as $X_{t+1}^{\text{ta}} = \mathbb{I}(X_t^{\text{ta}} + W_t$ $ \in \mathcal{X})\cdot(X_t^{\text{ta}} + W_t) + (1 - \mathbb{I}(X_t^{\text{ta}} + W_t \in \mathcal{X}) )\cdot X_t^{\text{ta}}$, where $W_t \in \mathcal{W}_t = \{(-1,0),(1,0),(0,0),(0,1),(0,-1)\}$ is the disturbance. 
At each $t$, the agent perfectly observes their own position and nosily observes the target's position as $Y_t = \mathbb{I}(X_t^{\text{ta}} + N_t \in \mathcal{X})\cdot(X_t^{\text{ta}} + N_t) + (1-\mathbb{I}(X_t^{\text{ta}} + N_t \in \mathcal{X}) )$ $\cdot X_t^{\text{ta}}$, where $N_t \in \mathcal{N}_t = \mathcal{W}_t$ is the measurement noise. Next, starting with $X_0^{\text{ag}} \in \mathcal{X}$, the agent selects an action $U_t \in \mathcal{U}_t = \mathcal{W}_t$ to move as $X_{t+1}^{\text{ag}} = \mathbb{I}(X_t^{\text{ag}} + U_t \in \mathcal{X})\cdot(X_t^{\text{ag}} + U_t) + (1 - \mathbb{I}(X_t^{\text{ag}} + U_t \in \mathcal{X}) )\cdot X_t^{\text{ag}}$.
At time $T$, the agent selects no action and observes the target's position $X_T^{\text{ta}}$ and incurs a cost $c_T(X_T^{\text{ta}}, X_T^{\text{ag}}) = \eta(X_T^{\text{ta}}, X_T^{\text{ag}}) \in \mathbb{R}_{\geq0}$, where $\eta$
is the shortest distance between two cells while avoiding obstacles. The distance between two adjacent cells is $1$ unit. 
The agent seeks to minimize the worst-case terminal cost \textit{without} prior knowledge of either the observation function or the target's evolution dynamics.
Note that this is a reinforcement learning generalization of Problem \ref{problem_1}.
We illustrate the grid and one initial setup in Fig. \ref{fig:pursuit_1a}.
Here, the black cells are obstacles. The solid blue triangle, solid red circle, and red ring are the agent, target, and observation, respectively, at $t=0$. The pink cells are feasible positions of the target given the observation.

\begin{figure}[t]
  \centering
  \subfigure[The original problem \label{fig:pursuit_1a}]{\includegraphics[width=0.32\linewidth, keepaspectratio]{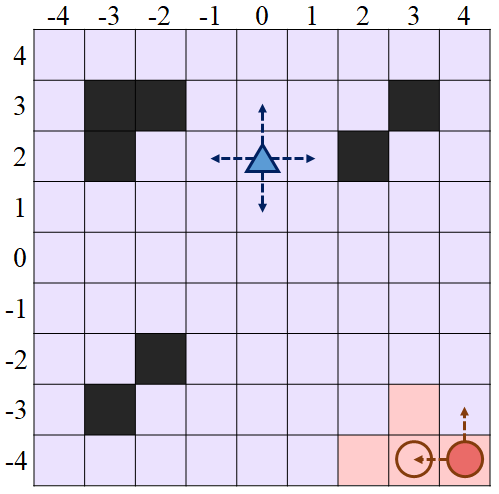}} 
  \subfigure[Actual observation prediction \label{fig:pursuit_1b}]{\includegraphics[width=0.32\linewidth, keepaspectratio]{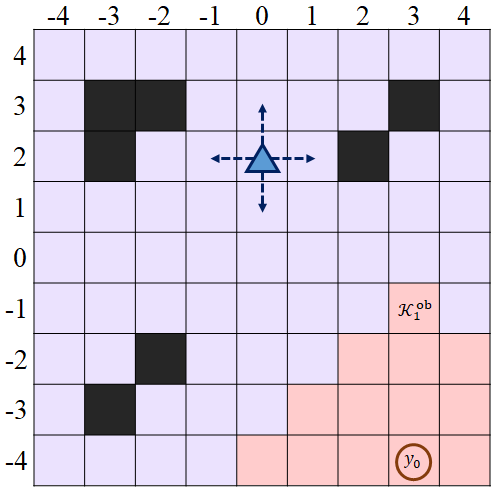}}
  \subfigure[Learned observation prediction \label{fig:pursuit_1c}]{\includegraphics[width=0.32\linewidth, keepaspectratio]{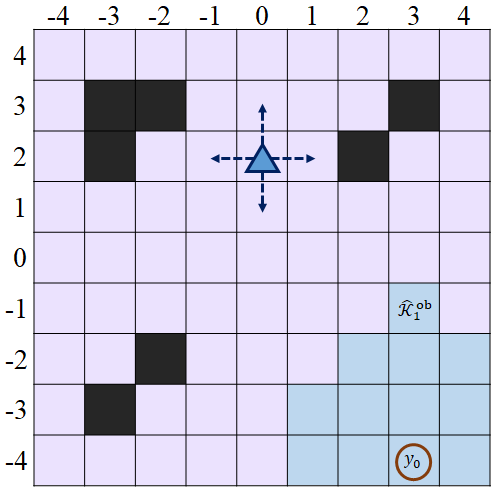}}
  \vspace{-8pt}
  \caption{The pursuit evasion problem with the initial conditions $x_0^{\text{ag}} = (0,2)$ and $y_0 = (3, -4)$.}
  \label{fig:illustration}
  \vspace{-20pt}
\end{figure}

\textcolor{black}{We approach this problem using the model-based reinforcement learning approach from Subsection \ref{subsection:learning_algorithm}, where we learn an approximate information state representation from $3 \times 10^7$ trajectories and then compute a control strategy.} 
We use the dataset $\mathcal{D}$ to construct estimates of the conditional range $\mathcal{K}^{\text{ob}}_{t+1}= [[Y_{t+1}|Y_{0:t}]]^{\mathcal{D}}$ for all $t = 0, \dots, T-2$ and $\mathcal{K}^{\text{ob}}_{T}= [[X_{T}^{\text{ta}}|Y_{0:T-1}]]^{\mathcal{D}}$. Then, we set up a deep neural network with an encoder-decoder structure for each $t=0,\dots,T$, as illustrated in Fig. \ref{fig:network}. 
At each $t$, the encoder $\psi_t$ comprises of 3 layer neural network with sizes $(2,14)$, $(14,12)$, $(12+24,24)$ and ReLU activation for the first two layers, where the inputs are  a $2$-d vector of coordinates for observation $Y_t$ and a $24$-d vector for the previous approximate information state $\hat{\Pi}_{t-1}$. The encoder compresses these inputs to a $24$-d vector representing the approximate information 
state $\hat{\Pi}_{t}$. 
At each $t$, the decoder $\phi_t$ is a 4-layer neural network of size $(24,48)$, $(48,56)$, $(56,64)$, $(64,74)$ with ReLU activation for the first three layers and sigmoid activation for the last layer. Its input is $\hat{\Pi}_t$, and its output is a $74$-d vector with each component taking values in $[0,1]$. Each component of the $74$-d output gives a set inclusion value for a specific feasible cell in the $9 \times 9$ grid, excluding obstacles. The output is thus interpreted as the conditional range $\hat{\mathcal{K}}^{\text{ob}}_{t+1} = [[Y_{t+1}~|~\hat{\Pi}_t]]$ for all $t=0,\dots,T-2$ and $\hat{\mathcal{K}}^{\text{ob}}_{T} = [[X_{T}^{\text{ta}}~|~\hat{\Pi}_{T-1}]]$.
We consider a set-inclusion threshold of $0.5$ for inclusion in $\hat{\mathcal{K}}^{\text{ob}}_{t+1}$ at each $t$.

\begin{figure}[ht]
  \vspace{-10pt}
  \centering  \includegraphics[width=\linewidth, keepaspectratio]{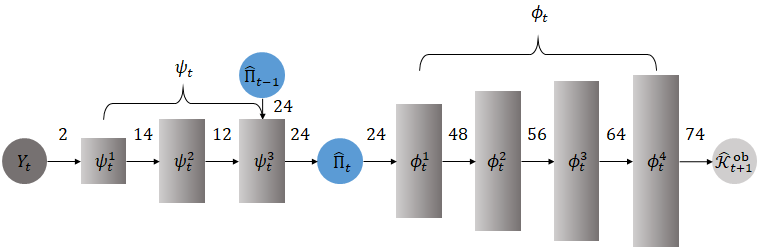} 
  \vspace{-10pt}
  \caption{The neural network architecture for approximate information states at any $t=0,\dots,T-1$.}
  \label{fig:network}
  \vspace{-8pt}
\end{figure}

The learning objective of our neural network at each $t$ is to minimize $\hat{\mathcal{H}}(\mathcal{K}^{\text{ob}}_{t+1},\hat{\mathcal{K}}^{\text{ob}}_{t+1})$, where we use the surrogate function in \cite[Subsection II-B]{karimi2019reducing}.
Note that at the terminal time step, this objective also minimizes the difference in maximum costs. 
We train the network for $40$ epochs using $90 \%$ of the available data with a learning rate of $0.0003$ and test it against the other $10 \%$. To illustrate the training results, consider an out-of-sample initial observation $y_0 = (3,-4)$. Then, the set $\mathcal{K}^{\text{ob}}_1$ constructed using data is shown by pink cells in \ref{fig:pursuit_1b} and the set $\hat{\mathcal{K}}^{\text{ob}}_1$ generated by of the trained network is shown by blue cells in \ref{fig:pursuit_1c}. Here, the trained network's output matches the conditional range constructed from data accurately except for one cell $(0,-4)$. We train a neural network for each $t$ up to $T=8$ to learn a complete approximate information state representation for the problem.
Then, at each $t$, the agent uses $(X_t^{\text{ag}}, \hat{\Pi}_t)$ in the approximate DP \eqref{DP_approx_1} - \eqref{DP_approx_2}.

We compare the performance of this approximate strategy with a baseline strategy that uses the observation $Y_t$ at each $t$ instead of $\hat{\Pi}_t$. Thus, for this baseline we train a network to match the prediction $[[Y_{t+1}~|~Y_t]]$ to $\mathcal{K}^{\text{ob}}_{t+1}$ for all $t=0,\dots,T-2$ and $[[X_T^{\text{ta}}~|~Y_{T-1}]]$ 
to $[[X_T^{\text{ta}}~|~Y_{0:T-1}]]$ at time $T-1$. The neural network structure is the same as before except for a lack of $\hat{\Pi}_{t-1}$ in the encoder input at each $t$, and we use the same training hyper-parameters as before. Subsequently, the agent computes an approximately optimal strategy using the approximate DP with the state $(X_t^{\text{ag}}, Y_t)$ at each $t$. 

\textcolor{black}{For six initial conditions, we present in Fig. \ref{fig:results_pursuit} the worst case costs obtained when implementing both the approximately optimal strategy (with AIS) and the baseline strategy (without AIS) for $T=4$ and $T=8$. Across $10^3$ simulations with randomly generated uncertainties, we note that using the learned approximate information state consistently improves worst-case performance compared to the baseline. Furthermore, the approximate strategy's outperformance grows larger for $T=8$ as compared to $T=4$ in most cases. Typically, we expect this margin to grow for longer time horizons because there is more loss of information when using only the latest observation for decision-making.
Thus, learning an approximate information state representation is a viable approach for robust model-based reinforcement learning.}

\begin{figure}[ht]
  \centering  \includegraphics[width=\linewidth, keepaspectratio]{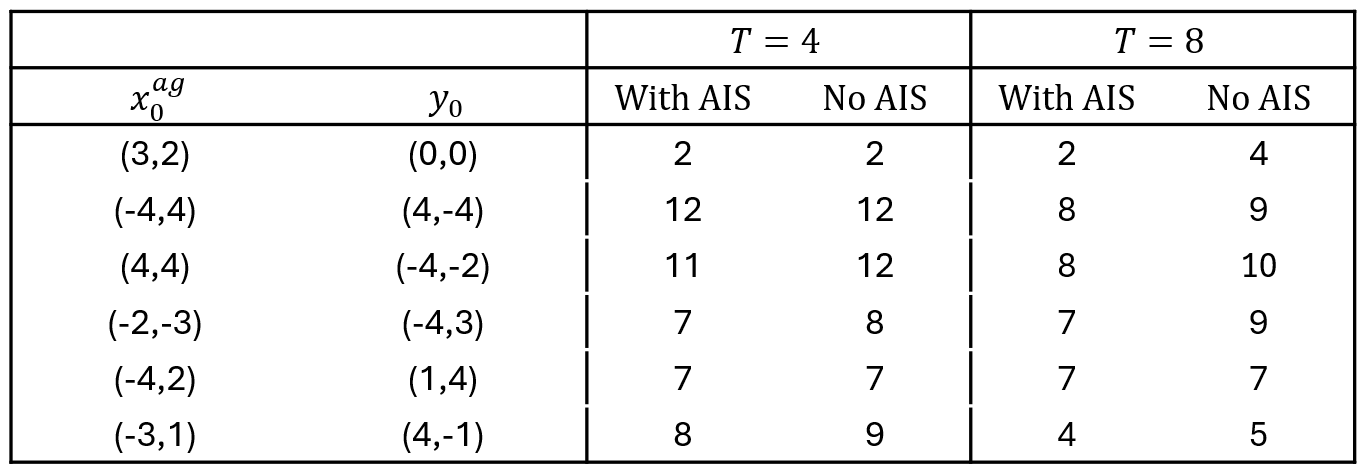} 
  \caption{Worst-case costs for $10^3$ simulations}
  \label{fig:results_pursuit}
  \vspace{-20pt}
\end{figure}

\section{Conclusion} \label{section:conclusion}

In this paper, we proposed a rigorous framework for worst-case control and learning in partially observed systems using non-stochastic approximate information states. We first presented two sets of properties to characterize information states and used them to construct a DP that yields an optimal control strategy. Then, we proposed two sets of properties to characterize approximate information states that can be constructed from output variables with knowledge of the dynamics or learned from output data with incomplete knowledge of the dynamics. We proved that approximate information states can be used in a DP to compute an approximate control strategy with a bounded loss in performance. We also presented theoretical examples of this bound and numerical examples to illustrate the performance of our approach in both worst-case control and reinforcement learning.

Our ongoing work is to specialize the approach in this paper to robust control \cite{dave2022additive} and reinforcement learning \cite{Dave2023infhorizon} in additive cost problems with partial observations. 
\textcolor{black}{While this paper presents a theoretical framework and preliminary numerical results for approximate information states, several key questions remain unanswered for future research. An explicit characterization of the trade-off between computational tractability and approximation quality is missing from the theory and would play a major role in practical application of approximate information states. It is also important to identify specific constructions of approximate information states beyond state quantization that improve performance in specific applications. Similarly, our reinforcement learning example is a toy problem with simple dynamics and a terminal cost criterion, presented with the goal of illustrating our theory. It is imperative that we study specific neural network architectures and learning algorithms to train efficient approximate models in realistic problems with large spaces and complex underlying interactions among various components.}

\vspace{-6pt}

\section*{Appendix A -- $L$-invertible Functions}

In this appendix, we present two classes of functions that are $L$-invertible: 1) all bi-Lipschitz functions that have a compact domain and a compact co-domain, and 2) all functions with a compact domain and a finite co-domain.

\begin{lemma}
Let $\mathcal{X}$ and $\mathcal{Y}$ be two compact subsets of a metric space $(\mathcal{S}, \eta)$. Then, any bi-Lipischitz function $f:\mathcal{X} \to \mathcal{Y}$ is $L$-invertible.
\end{lemma}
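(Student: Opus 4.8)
The plan is to exploit the lower Lipschitz bound built into the bi-Lipschitz hypothesis, which forces $f$ to be injective and thereby collapses each pre-image to a single point. First I would recall that $f$ being bi-Lipschitz means there exist constants $0 < L_1 \le L_2$ such that $L_1 \cdot \eta(x^1, x^2) \le \eta(f(x^1), f(x^2)) \le L_2 \cdot \eta(x^1, x^2)$ for all $x^1, x^2 \in \mathcal{X}$. The left-hand inequality immediately yields injectivity: if $f(x^1) = f(x^2)$, then $\eta(x^1, x^2) = 0$, so $x^1 = x^2$. Consequently, for every $y$ in the image $f(\mathcal{X})$, the pre-image $f^{-1}(y)$ is a singleton $\{x_y\}$ with $f(x_y) = y$. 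Since a bi-Lipschitz map is naturally regarded as a bijection onto its co-domain (and in the application only $y \in [[Y]] = f(\mathcal{X})$ arise), I would take $\mathcal{Y} = f(\mathcal{X})$ so that every pre-image in \eqref{L_inv_def} is a non-empty singleton.

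Next I would compute the Hausdorff distance between two singleton pre-images. For sets reduced to single points $\{x_{y^1}\}$ and $\{x_{y^2}\}$, the definition in \eqref{H_met_def} collapses to $\mathcal{H}(\{x_{y^1}\}, \{x_{y^2}\}) = \eta(x_{y^1}, x_{y^2})$. To bound the right-hand side, I apply the lower Lipschitz inequality to the pair $x_{y^1}, x_{y^2}$: since $f(x_{y^1}) = y^1$ and $f(x_{y^2}) = y^2$, we obtain $L_1 \cdot \eta(x_{y^1}, x_{y^2}) \le \eta(y^1, y^2)$, i.e., $\eta(x_{y^1}, x_{y^2}) \le \frac{1}{L_1} \eta(y^1, y^2)$. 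Chaining these gives $\mathcal{H}\big(f^{-1}(y^1), f^{-1}(y^2)\big) \le \frac{1}{L_1} \eta(y^1, y^2)$ for all $y^1, y^2 \in \mathcal{Y}$, so \eqref{L_inv_def} holds with $L_{f^{-1}} = 1/L_1$.

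There is no genuine analytical difficulty here, and I expect the core argument to be a two-line computation; the only care needed is bookkeeping around the domain of $f^{-1}$. The one point worth stating explicitly—and the place where the stated hypotheses earn their keep—is ensuring that every pre-image appearing in \eqref{L_inv_def} is non-empty, since $\mathcal{H}$ is ill-defined against the empty set. Bi-Lipschitzness makes $f$ a bijection onto $f(\mathcal{X})$, while compactness of $\mathcal{X}$ together with the continuity of $f$ (implied by the upper Lipschitz bound) makes $f(\mathcal{X})$ compact; taking $\mathcal{Y} = f(\mathcal{X})$ then keeps each $f^{-1}(y)$ a non-empty compact singleton, so the Hausdorff distance is a bona fide metric as noted after \eqref{H_met_def}. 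Thus the compactness assumption serves mainly to guarantee well-posedness rather than to drive the estimate, which rests entirely on the lower Lipschitz constant $L_1$.
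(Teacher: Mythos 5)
Your proof is correct and rests on exactly the same engine as the paper's: applying the lower Lipschitz bound to points in the two pre-images to obtain $L_{f^{-1}} = 1/\underline{L}_f$. The only cosmetic difference is that you first observe injectivity and collapse each pre-image to a singleton (also handling the non-emptiness issue by taking $\mathcal{Y} = f(\mathcal{X})$, a point the paper glosses over), whereas the paper bounds the Hausdorff distance between the pre-image sets directly without remarking that they are singletons.
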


\begin{proof}
We begin by considering the pre-image set for any $y \in \mathcal{Y}$ under the function $f$. Note that the function $f$ is continuous because it is bi-Lipschitz, and the singleton $\{y\}$ is a compact subset of a metric space. Consequently, the pre-image $f^{-1}(y)$ is a bounded subset of $\mathcal{X}$.
Next, let $\mathcal{B}(\mathcal{X})$ denote the set of all bounded subsets of $\mathcal{X}$. Given the first result, we can consider a set-valued mapping $f^{-1}: \mathcal{Y} \to \mathcal{B}(\mathcal{X})$ which returns the pre-image for each $y \in \mathcal{Y}$. Then, for any $y^1, y^2 \in \mathcal{Y}$, using the definition of the Hausdorff distance in \eqref{H_met_def}:
\begin{multline}
    \mathcal{H}\big(f^{-1}(y^1), f^{-1}(y^2)\big)
    = \max\Big\{\sup_{x^1 \in f^{-1}(y^1)} \inf_{x^2 \in f^{-1}(y^2)} \\
    \eta(x^1, x^2),
    \sup_{x^2 \in f^{-1}(y^2)} \inf_{x^1 \in f^{-1}(y^1)}\eta(x^1, x^2)\Big\}. \label{L_inv_1_1}
\end{multline}
In the RHS of \eqref{L_inv_1_1}, the bi-Lipschitz property of $f$ implies that there exist constants $\underline{L}_{f}, \overline{L}_{f} \in \mathbb{R}_{>0}$ such that $\underline{L}_{f}\eta(x^1, x^2) \leq |f(x^1) - f(x^2)| \leq \overline{L}_{f}\eta(x^1, x^2),$
for all $x^1, x^2 \in \mathcal{X}$. Thus, for all $x^1 \in g^{-1}(y^1)$ and $x^2 \in g^{-1}(y^2)$, we write that
\begin{gather}
    \eta(x^1, x^2) \leq \underline{L}_{f}^{-1} \cdot \eta(y^1,y^2). \label{L_inv_1_2}
\end{gather}
The proof is complete by substituting \eqref{L_inv_1_2} into \eqref{L_inv_1_1} and defining the constant $L_{f^{-1}} := \underline{L}_f^{-1}$.
\end{proof}

\begin{lemma}
Let $\mathcal{X}$ be a compact subset and $\mathcal{Y}$ be a finite subset of $(\mathcal{S}, \eta)$. Then, any function $f:\mathcal{X} \to \mathcal{Y}$ is $L$-invertible.
\end{lemma}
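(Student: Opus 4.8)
The plan is to exploit the finiteness of $\mathcal{Y}$ to reduce the defining inequality \eqref{L_inv_def} to finitely many cases, each of which is controlled by a single global bound coming from the compactness of $\mathcal{X}$. As in the applications summarized by \eqref{L_inv_2}, I would restrict attention to $y \in [[Y]]$, i.e., to values in the image of $f$, so that every pre-image $f^{-1}(y) = [[X|y]]$ is a nonempty subset of $\mathcal{X}$; this avoids any ambiguity in the Hausdorff distance caused by empty sets.

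First I would record the two ingredients. Since $\mathcal{X}$ is compact it is bounded, so its diameter $\mathrm{diam}(\mathcal{X}) := \sup_{x^1, x^2 \in \mathcal{X}} \eta(x^1, x^2)$ is finite; and because every pre-image is a nonempty subset of $\mathcal{X}$, the Hausdorff distance between any two pre-images is at most $\mathrm{diam}(\mathcal{X})$, as each inner infimum in \eqref{H_met_def} is bounded by $\eta$ to any fixed point of the opposing set, hence by the diameter. Second, since $\mathcal{Y}$ is finite and $\eta$ is a metric, the finite set of positive numbers $\{\eta(y^1, y^2) \mid y^1, y^2 \in \mathcal{Y}, \; y^1 \neq y^2\}$ attains a strictly positive minimum $\eta_{\min}$; if $\mathcal{Y}$ is a singleton, only the case $y^1 = y^2$ arises and the claim is immediate.

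With these in hand, I would set $L_{f^{-1}} := \mathrm{diam}(\mathcal{X}) / \eta_{\min}$ and verify \eqref{L_inv_def} directly. For $y^1 = y^2$ both sides vanish, since $\mathcal{H}(f^{-1}(y^1), f^{-1}(y^1)) = 0$. For $y^1 \neq y^2$, the estimate $\mathcal{H}(f^{-1}(y^1), f^{-1}(y^2)) \leq \mathrm{diam}(\mathcal{X}) = L_{f^{-1}} \cdot \eta_{\min} \leq L_{f^{-1}} \cdot \eta(y^1, y^2)$ is exactly the required inequality. There is no substantive obstacle here; the only points demanding care are the nonemptiness of the pre-images, handled by restricting to the image of $f$, and the strict positivity of $\eta_{\min}$, which is precisely where the finiteness of $\mathcal{Y}$ together with the metric axioms is essential.
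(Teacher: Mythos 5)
Your proof is correct and takes essentially the same approach as the paper: both arguments bound the Hausdorff distance between pre-images uniformly using the boundedness of $\mathcal{X}$, and divide by the strictly positive minimum separation of the finite set $\mathcal{Y}$ to obtain the constant $L_{f^{-1}}$. Your version is in fact slightly more careful than the paper's, since you explicitly handle the case $y^1 = y^2$, the nonemptiness of pre-images (by restricting to the image $[[Y]]$), and the singleton case for $\mathcal{Y}$.
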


\begin{proof}
Let $||\mathcal{Y}|| > 0 $ denote the minimum distance between two distinct elements in the finite, non-empty set $\mathcal{Y}$. Then, for any $y^1, y^2 \in \mathcal{Y}$ such that $y^1 \neq y^2$,
$\dfrac{\mathcal{H}\big(f^{-1}(y^1), f^{-1}(y^2)\big)}{\eta(y^1,y^2)}$
    $\leq \sup_{y^1, y^2 \in \mathcal{Y}}\dfrac{\mathcal{H}\big(f^{-1}(y^1), f^{-1}(y^2)\big)}{||\mathcal{Y}||} =: L_{f^{-1}},$
where $L_{f^{-1}} \in \mathbb{R}_{\geq0}$ is guaranteed to be finite because the set $\mathcal{X}$ is bounded and thus, so is the numerator. Thus, the function $f$ is $L$-invertible as defined in \eqref{L_inv_1_1}.
\end{proof}

\section*{Appendix B -- Preliminary Results}

In this subsection, we derive results necessary to prove the properties of the approximate DP in Subsection \ref{subsection:approx_properties}.

\begin{lemma} \label{lem_range_lipschitz}
Consider three bounded subsets $\mathcal{X}$, $\mathcal{Y}$ and $\mathcal{Z}$ of a metric space $(\mathcal{S},\eta)$. Let ${X} \in \mathcal{X}$, $Y \in \mathcal{Y}$ and $Z \in \mathcal{Z}$ be uncertain variables satisfying $Y = g(X)$, where $g: \mathcal{X} \to \mathcal{Y}$ is $L$-invertible, and $Z = h(X)$, where $h: \mathcal{X} \to \mathcal{Z}$ is Lipschitz. Then, there exists an $L_{Z|Y} \in \mathbb{R}_{\geq0}$ such that:
\begin{gather}
    \hspace{-4pt} \mathcal{H}([[Z|y^1]], \hspace{-1pt} [[Z|y^2]]) \hspace{-1pt} \leq \hspace{-1pt} L_{Z|Y} \hspace{-1pt} \cdot \hspace{-1pt} \eta(y^1,y^2), \; \forall  y^1, y^2 \in [[Y]]. \label{eq_range_lipschitz}
\end{gather}

\end{lemma}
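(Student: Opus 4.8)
The plan is to rewrite both conditional ranges as images, under $h$, of pre-images under $g$, and then chain the Lipschitz behavior of $h$ with the $L$-invertibility of $g$. The entire argument reduces to one identity followed by two metric inequalities.

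First I would observe that since $Y = g(X)$, the conditional range of $X$ given a realization $y \in [[Y]]$ coincides with the pre-image of $g$, i.e., $[[X|y]] = g^{-1}(y)$, as already noted around \eqref{L_inv_2}. Because $Z = h(X)$ is a deterministic function of the \emph{same} underlying $X$, conditioning on $y$ restricts the feasible realizations of $X$ to the set $[[X|y]]$, and the corresponding realizations of $Z$ are precisely the images of those points under $h$. This yields the key identity $[[Z|y]] = h\big([[X|y]]\big) = h\big(g^{-1}(y)\big)$ for every $y \in [[Y]]$. Pinning down this identity cleanly is the conceptual heart of the proof, since everything afterward is metric bookkeeping.

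Next I would establish an auxiliary fact: a Lipschitz map pushes the Hausdorff distance forward up to its Lipschitz constant, i.e., for any two bounded sets $A, B \subseteq \mathcal{X}$ one has $\mathcal{H}\big(h(A), h(B)\big) \leq L_{h} \cdot \mathcal{H}(A,B)$, where $L_{h}$ denotes the Lipschitz constant of $h$. This follows directly from the definition \eqref{H_met_def}: since every element of $h(A)$ is of the form $h(a)$ with $a\in A$, each $\sup$–$\inf$ term reduces to $\sup_{a\in A}\inf_{b\in B}\eta\big(h(a),h(b)\big)$ (and its symmetric counterpart), and bounding $\eta\big(h(a),h(b)\big)\leq L_{h}\,\eta(a,b)$ lets the constant be pulled outside both the infimum and the supremum.

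Finally I would combine the pieces. Applying the auxiliary fact with $A = g^{-1}(y^1)$ and $B = g^{-1}(y^2)$, together with the key identity, gives
\begin{align*}
\mathcal{H}\big([[Z|y^1]], [[Z|y^2]]\big) &= \mathcal{H}\big(h(g^{-1}(y^1)), h(g^{-1}(y^2))\big) \\
&\leq L_{h} \cdot \mathcal{H}\big(g^{-1}(y^1), g^{-1}(y^2)\big).
\end{align*}
The $L$-invertibility of $g$ in \eqref{L_inv_2} then bounds the right-hand side by $L_{h} \cdot L_{g^{-1}} \cdot \eta(y^1,y^2)$, and setting $L_{Z|Y} := L_{h} \cdot L_{g^{-1}}$ completes the argument. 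The genuinely delicate step is the first one: justifying $[[Z|y]] = h(g^{-1}(y))$ requires confirming that conditioning on a realization of $Y$ collapses the joint uncertainty so that the conditional range of $Z$ is \emph{exactly} the image of the pre-image, with no spurious or missing points. The metric inequalities themselves are routine once that identity is secured.
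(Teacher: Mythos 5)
Your proposal is correct and follows essentially the same route as the paper's proof: both rest on the identity $[[Z|y]] = h\big(g^{-1}(y)\big)$, bound the Hausdorff distance by pulling the Lipschitz constant of $h$ out of the sup--inf terms, and then invoke $L$-invertibility of $g$ to obtain $L_{Z|Y} := L_h \cdot L_{g^{-1}}$. The only difference is presentational: you isolate the Lipschitz pushforward of the Hausdorff distance as a standalone auxiliary fact, whereas the paper performs that estimate inline after expanding the definition in \eqref{H_met_def}.
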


\begin{proof}
We prove the result by constructing a feasible constant $L_{Z|Y} \in \mathbb{R}_{\geq0}$ which ensures that \eqref{eq_range_lipschitz} is satisfied for all $y^1, y^2 \in [[Y]]$. We begin by using the definition of the Hausdorff distance in \eqref{H_met_def} to expand the LHS of \eqref{eq_range_lipschitz} as
\begin{multline}
    \hspace{-6pt} \mathcal{H}\big([[Z|y^1]], [[Z|y^2]]\big) = \max\Big\{\sup_{x^1 \in g^{-1}(y^1)}\inf_{x^2 \in g^{-1}(y^2)} \hspace{-6pt} \eta\big(h(x^1), \\
    h(x^2)\big), \sup_{x^2 \in g^{-1}(y^2)}\inf_{x^1 \in g^{-1}(y^1)} \eta\big(h(x^1), h(x^2)\big)\Big\}, \label{eq_range_l_1}
\end{multline}
where, note that $[[Z|y]] = \big\{z \in \mathcal{Z}~|~ z = h(x), \forall x \in g^{-1}(y)\big\}$ for any realization $y \in [[Y]]$.
Next, recall that $h$ is Lipschitz continuous with a constant $L_{h} \in \mathbb{R}_{\geq0}$. 
Substituting this property into the RHS of \eqref{eq_range_l_1}, we write that
$\mathcal{H}\big([[Z|y^1]], [[Z|y^2]]\big)
    \leq L_h \cdot  \max\big\{ \sup_{x^1 \in g^{-1}(y^1)}$ $\inf_{x^2 \in g^{-1}(y^2)} \eta(x^1, x^2), \sup_{x^2 \in g^{-1}(y^2)}\inf_{x^1 \in g^{-1}(y^1)} \eta(x^1, x^2) \big\}$
    $= L_h \cdot \mathcal{H}\big(g^{-1}(y^1), g^{-1}(y^2)\big)$
    $= L_h \cdot L_{g^{-1}} \cdot \eta(y^1,y^2),$
where, in the second equality, we use the L-invertibile property of $g$. Then, the result follows by selecting $L_{Z|Y} := L_h \cdot L_{g^{-1}}$.
\end{proof}

\begin{lemma} \label{lem_prelim_2}
Consider a bounded set $\mathcal{X}$ and two functions $f:\mathcal{X} \to \mathbb{R}$ and $g:\mathcal{X} \to \mathbb{R}$. Then,
\begin{align}
    |\sup_{x \in \mathcal{X}} f(x) - \sup_{x \in \mathcal{X}} g(x)| &\leq \sup_{x \in \mathcal{X}}| f(x) - g(x)|, \label{prelim_2_1} \\
    |\inf_{x \in \mathcal{X}} f(x) - \inf_{x \in \mathcal{X}} g(x)| &\leq \sup_{x \in \mathcal{X}}| f(x) - g(x)|. \label{prelim_2_2}
\end{align}
\end{lemma}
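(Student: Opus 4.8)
The plan is to prove both inequalities from a single elementary observation: the $\sup$ (resp. $\inf$) operator is non-expansive with respect to the uniform distance between functions. I would first establish \eqref{prelim_2_1} by a pointwise additive bound followed by a symmetry argument, and then obtain \eqref{prelim_2_2} essentially for free by reduction to the supremum case.

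For \eqref{prelim_2_1}, the key step is the pointwise bound. For every $x \in \mathcal{X}$ I would write $f(x) = g(x) + \big(f(x) - g(x)\big) \leq g(x) + |f(x) - g(x)| \leq \sup_{x' \in \mathcal{X}} g(x') + \sup_{x' \in \mathcal{X}} |f(x') - g(x')|$. Since the resulting bound on $f(x)$ is uniform in $x$, taking the supremum over $x \in \mathcal{X}$ on the left-hand side yields $\sup_{x} f(x) - \sup_{x} g(x) \leq \sup_{x} |f(x) - g(x)|$. Interchanging the roles of $f$ and $g$ gives the reverse bound $\sup_{x} g(x) - \sup_{x} f(x) \leq \sup_{x} |f(x) - g(x)|$, and combining the two one-sided estimates into a single absolute value establishes \eqref{prelim_2_1}.

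For \eqref{prelim_2_2}, rather than repeat the argument with reversed inequalities, I would invoke the identity $\inf_{x} \phi(x) = - \sup_{x} \big(-\phi(x)\big)$ applied to both $f$ and $g$. This gives $|\inf_{x} f(x) - \inf_{x} g(x)| = |\sup_{x}(-g(x)) - \sup_{x}(-f(x))|$, and applying \eqref{prelim_2_1} to the pair $(-g,-f)$ bounds the right-hand side by $\sup_{x} |(-g(x)) - (-f(x))| = \sup_{x} |f(x) - g(x)|$, which is exactly the claim.

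I expect no genuine obstacle, as the result is standard; the only point meriting a word of care is well-definedness of the differences when a supremum or infimum is infinite. This is harmless: if $\sup_{x} |f(x) - g(x)| = +\infty$ the asserted bound is vacuous, whereas if it is finite then $f$ and $g$ differ by a uniformly bounded amount, so their suprema (and likewise their infima) are simultaneously finite or infinite and the rearrangements above remain valid. In the setting of this paper the relevant value functions are bounded, so every quantity appearing here is finite.
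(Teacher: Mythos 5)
Your proof is correct. Note that the paper itself omits the proof of this lemma (``due to space limitations''), remarking only that \eqref{prelim_2_2} follows from arguments similar to those for \eqref{prelim_2_1}, so there is no printed argument to compare against; your write-up supplies exactly what is missing. The pointwise bound $f(x) \leq \sup_{x'} g(x') + \sup_{x'} |f(x') - g(x')|$ followed by the symmetry argument is the standard proof of \eqref{prelim_2_1}. For \eqref{prelim_2_2}, your reduction via the identity $\inf_{x} \phi(x) = -\sup_{x}(-\phi(x))$ is a genuinely cleaner route than the ``similar arguments'' the paper gestures at (i.e., rerunning the estimate with reversed inequalities): it avoids duplicating the computation and makes explicit that the infimum case is formally equivalent to the supremum case. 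Your caveat about infinite values is reasonable; the only residual subtlety is that if both suprema were $+\infty$ the left-hand side would be an indeterminate difference rather than zero, but this never arises in the paper, where the lemma is applied to value functions built from costs taking values in bounded subsets of $\mathbb{R}_{\geq 0}$, so all quantities are finite.
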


\begin{proof}
We omit the proof due to space limitations.
Then, \eqref{prelim_2_2} follows from similar arguments as \eqref{prelim_2_1}.
\end{proof}

\begin{lemma} \label{lem_prelim_3}
For any four scalars $a,b,c,d \in \mathbb{R}$, 
\begin{align}
    |\max\{a,b\} - \max\{c,d\}| \leq \max\{|a-c|,|b-d|\}. \label{eq_prelim_3}
\end{align}
\end{lemma}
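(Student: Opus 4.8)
The plan is to prove the inequality by splitting the absolute value on the left-hand side into its two directional bounds and handling each by an elementary case analysis. Specifically, I would first establish that $\max\{a,b\} - \max\{c,d\} \leq \max\{|a-c|,|b-d|\}$, and then observe that the statement is symmetric under swapping the pairs $(a,b)$ and $(c,d)$ (which also swaps the two arguments of the maximum on the right-hand side while leaving that maximum unchanged). Hence the reverse inequality $\max\{c,d\} - \max\{a,b\} \leq \max\{|a-c|,|b-d|\}$ follows immediately, and combining the two yields the claimed bound on $|\max\{a,b\}-\max\{c,d\}|$.

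For the first directional bound, the key idea is to exploit the fact that $\max\{a,b\}$ is attained by one of its two arguments and pair that argument with its counterpart in the right-hand side. If $\max\{a,b\}=a$, then since $\max\{c,d\} \geq c$ I would write
\[
\max\{a,b\} - \max\{c,d\} \leq a - c \leq |a-c| \leq \max\{|a-c|,|b-d|\}.
\]
Symmetrically, if $\max\{a,b\}=b$, then using $\max\{c,d\} \geq d$ gives
\[
\max\{a,b\} - \max\{c,d\} \leq b - d \leq |b-d| \leq \max\{|a-c|,|b-d|\},
\]
so the bound holds in either case.

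There is no genuine obstacle here: the result is elementary and the only point requiring care is the bookkeeping in the case split, namely that $a$ must be paired with $c$ and $b$ with $d$ in the right-hand side. An alternative, slightly slicker route would replace the case analysis with the identity $\max\{x,y\} = \tfrac{1}{2}\bigl(x+y+|x-y|\bigr)$ together with the reverse triangle inequality applied to $\bigl||a-b|-|c-d|\bigr|$, but the direct argument above is the shortest and avoids manipulating nested absolute values.
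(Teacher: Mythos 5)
Your proof is correct. Note that the paper itself omits the proof of this lemma ("due to space limitations"), so there is nothing to compare against; your argument supplies exactly the kind of elementary case analysis the authors presumably had in mind. The key bookkeeping point you identify — pairing $a$ with $c$ and $b$ with $d$, using $\max\{c,d\}\geq c$ in the case $\max\{a,b\}=a$ and $\max\{c,d\}\geq d$ in the case $\max\{a,b\}=b$ — is handled correctly, and the symmetry observation that swapping $(a,b)$ with $(c,d)$ leaves the right-hand side invariant legitimately yields the reverse directional bound, completing the absolute-value estimate.

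One minor remark on your alternative route: as sketched, it needs a little care. Writing $u=a-c$ and $v=b-d$, the identity $\max\{x,y\}=\tfrac12(x+y+|x-y|)$ together with the reverse triangle inequality gives
\begin{align*}
|\max\{a,b\}-\max\{c,d\}| \leq \tfrac12|u+v| + \tfrac12|u-v|,
\end{align*}
and one must then invoke the further identity $\tfrac12\bigl(|u+v|+|u-v|\bigr)=\max\{|u|,|v|\}$ to conclude; bounding $|u+v|$ by $|u|+|v|$ too early gives $\tfrac12\bigl(|u|+|v|+|u-v|\bigr)$, which can strictly exceed $\max\{|u|,|v|\}$ (e.g., $u=3$, $v=-1$). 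Your primary case-analysis argument avoids this pitfall entirely and is the cleaner route.
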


\begin{proof}
We omit the proof due to space limitations.
\end{proof}

\begin{lemma} \label{lem_prelim}
Consider two \textcolor{black}{nonempty} bounded subsets $\mathcal{A}, \mathcal{B}$ of a metric space $(\mathcal{X},\eta)$. Let $f: \mathcal{X} \to \mathbb{R}$ be a bounded continuous function with a Lipschitz constant $L_f \in \mathbb{R}_{\geq0}$ on $\mathcal{X}$. Then,
\begin{gather}
    \big| \sup_{a \in \mathcal{A}} f(a) - \sup_{b \in \mathcal{B}} f(b) \big| \leq L_f \cdot \mathcal{H}(\mathcal{A}, \mathcal{B}). \label{eq_prelim}
\end{gather}
\end{lemma}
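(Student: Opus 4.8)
The plan is to reduce the two-sided bound to a single directed inequality and then close the argument by symmetry. Since the left-hand side is an absolute value, I would first assume without loss of generality that $\sup_{a \in \mathcal{A}} f(a) \geq \sup_{b \in \mathcal{B}} f(b)$, so that it equals $\sup_{a \in \mathcal{A}} f(a) - \sup_{b \in \mathcal{B}} f(b)$; the reverse case follows by interchanging the roles of $\mathcal{A}$ and $\mathcal{B}$, which leaves both $\mathcal{H}(\mathcal{A},\mathcal{B})$ and the statement unchanged.

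First, I would exploit the Lipschitz property pointwise: for every $a \in \mathcal{A}$ and every $b \in \mathcal{B}$, we have $f(a) \leq f(b) + L_f \cdot \eta(a,b) \leq \sup_{b' \in \mathcal{B}} f(b') + L_f \cdot \eta(a,b)$. Since this holds for all $b \in \mathcal{B}$, I would take the infimum over $b$ on the right to obtain $f(a) \leq \sup_{b' \in \mathcal{B}} f(b') + L_f \cdot \inf_{b \in \mathcal{B}} \eta(a,b)$. Bounding the infimum term by the directed distance $\sup_{a' \in \mathcal{A}} \inf_{b \in \mathcal{B}} \eta(a',b)$, and that in turn by the full Hausdorff distance $\mathcal{H}(\mathcal{A},\mathcal{B})$ through its definition in \eqref{H_met_def}, yields $f(a) \leq \sup_{b' \in \mathcal{B}} f(b') + L_f \cdot \mathcal{H}(\mathcal{A},\mathcal{B})$ uniformly in $a$. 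Taking the supremum over $a \in \mathcal{A}$ then gives the desired one-sided bound, and the proof concludes by symmetry.

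The main point requiring care—rather than a genuine obstacle—is that $\mathcal{A}$ and $\mathcal{B}$ are only assumed bounded, not compact, so neither the suprema of $f$ nor the infima $\inf_{b \in \mathcal{B}} \eta(a,b)$ need be attained. The argument must therefore be carried out entirely through inequalities between suprema and infima, never passing to maximizers or minimizers; each step above remains valid in this generality because taking an infimum (respectively, supremum) of both sides of an inequality preserves it. The only additional thing to check is that the boundedness of $f$ guarantees all suprema involved are finite real numbers, so that the difference on the left is well defined and the manipulations above are justified.
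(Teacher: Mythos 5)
Your proof is correct, and it takes a cleaner route than the paper's. The paper proceeds by a three-way case split ($\sup_{a \in \mathcal{A}} f(a)$ equal to, less than, or greater than $\sup_{b \in \mathcal{B}} f(b)$ — with a typo in the source: its Cases 2 and 3 are both written with the same strict inequality), and in the non-trivial case it introduces, for each infinitesimal $\beta > 0$, the near-maximizer set $\mathcal{A}^1(\beta) = \{a \in \mathcal{A} \,:\, f(a) + \beta \geq \sup_{b \in \mathcal{B}} f(b)\}$, derives the bound up to an additive $\beta$, and then lets $\beta \to 0$. You avoid both the case analysis (via the WLOG-plus-symmetry reduction, which is legitimate since $\mathcal{H}$ is symmetric) and the approximation parameter: your chain $f(a) \leq \sup_{b' \in \mathcal{B}} f(b') + L_f \cdot \eta(a,b)$, followed by an infimum over $b$ and a supremum over $a$, bounds the one-sided difference directly by $L_f$ times the directed Hausdorff distance. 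Both arguments rest on the same underlying fact, but your sup--inf exchange handles the possible non-attainment of extrema on merely bounded (non-compact) sets automatically, which is exactly the issue the paper's $\beta$-machinery exists to address; your closing remark that boundedness of $f$ and nonemptiness of the sets keep all quantities finite is the right hypothesis check and completes the argument.
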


\begin{proof}
We prove this result by considering three cases that are mutually exclusive but cover all the possibilities.
\textcolor{black}{Case 1: $\sup_{a \in \mathcal{A}} f(a) = \sup_{b \in \mathcal{B}} f(b)$, which implies that $|\sup_{a \in \mathcal{A}} f(a) - \sup_{b \in \mathcal{B}} f(b)| = 0$. The result holds directly from the fact that the RHS of \eqref{eq_prelim} is always non-negative.
Case 2: $\sup_{a \in \mathcal{A}} f(a) < \sup_{b \in \mathcal{B}} f(b)$,} which implies $| \sup_{a \in \mathcal{A}} f(a) - \sup_{b \in \mathcal{B}} f(b) | = \sup_{a \in \mathcal{A}} f(a) - \sup_{b \in \mathcal{B}} f(b) $. We define the \textcolor{black}{non-empty} set $\mathcal{A}^1(\beta) := \{a \in \mathcal{A} ~|~ f(a) + \beta \geq \sup_{b \in \mathcal{B}} f(b)\}$ for any infinitesimal $\beta > 0$. Then, $\sup_{a \in \mathcal{A}} f(a) - \sup_{b \in \mathcal{B}} f(b) \leq \sup_{a \in \mathcal{A}^1(\beta)} f(a) + \beta - \sup_{b \in \mathcal{B}} f(b) \leq \sup_{a \in \mathcal{A}^1(\beta)} \inf_{b \in \mathcal{B}}(f(a) - f(b)) + \beta \leq \sup_{a \in \mathcal{A}} \inf_{b \in \mathcal{B}}|f(a) - f(b)| + \beta \leq L_f \cdot \sup_{a \in \mathcal{A}} \inf_{b \in \mathcal{B}}\eta(a,b) + \beta$ for all $\beta > 0$. This implies that $|\sup_{a \in \mathcal{A}} f(a) - \sup_{b \in \mathcal{B}} f(b)| \leq L_f \cdot \sup_{a \in \mathcal{A}} \inf_{b \in \mathcal{B}}\eta(a,b) \leq L_f \cdot \mathcal{H}(\mathcal{A}, \mathcal{B})$, where, in the second inequality, we invoke the definition of the Hausdorff distance in \eqref{H_met_def} to complete the proof.
Case 3: $\sup_{a \in \mathcal{A}} f(a) < \sup_{b \in \mathcal{B}} f(b)$ and we can prove the result using the same sequence of arguments as case 1.
\end{proof}

Using Lemma \ref{lem_prelim}, we can also establish the following property. Consider two bounded sets $\mathcal{Y}, \mathcal{Z} \subset \mathbb{R}^n$, $n \in \mathbb{N}$. For two uncertain variables $Y \in \mathcal{Y}$ and $Z \in \mathcal{Z}$, let $[[Z|y]]$ satisfy $\mathcal{H}\big([[Z|y^1]], [[Z|y^2]]\big) \leq L_{Z|Y} \cdot \eta(y^1,y^2)$ for all realizations $y^1, y^2 \in \mathcal{Y}$ of $Y$. Then, for a continuous function $f:\mathcal{Z} \to \mathbb{R}_{\geq0}$, we use \eqref{eq_prelim} to state for all $y^1, y^2 \in [[Y]]$:
\begin{gather}
    \hspace{-8pt}\Big| \hspace{-1pt} \sup_{z^1 \in [[Z|y_1]]} \hspace{-4pt} f(z^1) - \hspace{-5pt} \sup_{z^2 \in [[Z|y_2]]} \hspace{-4pt} f(z^2) \Big| \hspace{-1pt} \leq L_{Z|Y} \hspace{-1pt} \cdot \hspace{-1pt} L_f \hspace{-1pt} \cdot \hspace{-1pt} \eta(y^1,y^2). \hspace{-2pt} \label{eq_prelim_5}
\end{gather}

\vspace{-6pt}

\section*{Appendix C -- Approximation Bounds for Perfectly Observed Systems}

In this appendix, we derive the values of $\epsilon_t$ and $\delta_t$ for all $t=0,\dots,T$ when an approximate information state is constructed using state quantization for a perfectly observed system, as described in Subsection \ref{subsection:approx_examples}. We first state a property of the Hausdorff distance, which we will use in our derivation.

\begin{lemma} \label{lem_union_prop}
Let $\mathcal{X}$ be a metric space with compact subsets
$\mathcal{A}, \mathcal{B}, \mathcal{C}, \mathcal{D} \subset \mathcal{X}$. Then, it holds that
\begin{align}
    \mathcal{H}\big(\mathcal{A} \cup \mathcal{B}, \mathcal{C} \cup \mathcal{D}\big)
    \leq \max \Big\{  \mathcal{H}\big(\mathcal{A}, \mathcal{C}\big) , \mathcal{H}\big(\mathcal{B}, \mathcal{D} \big) \Big\}. \label{eq_union_prop}
\end{align}
\end{lemma}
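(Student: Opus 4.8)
The plan is to reduce the claim to bounds on the two directed (one-sided) distances that make up the Hausdorff distance in \eqref{H_met_def}, and to bound each directed distance by exploiting the monotonicity of the inner infimum with respect to set inclusion. Writing out \eqref{H_met_def} for $\mathcal{A}\cup\mathcal{B}$ and $\mathcal{C}\cup\mathcal{D}$, I would observe that $\mathcal{H}(\mathcal{A}\cup\mathcal{B},\mathcal{C}\cup\mathcal{D})$ is the maximum of the two suprema $\sup_{z\in\mathcal{A}\cup\mathcal{B}}\inf_{w\in\mathcal{C}\cup\mathcal{D}}\eta(z,w)$ and $\sup_{w\in\mathcal{C}\cup\mathcal{D}}\inf_{z\in\mathcal{A}\cup\mathcal{B}}\eta(z,w)$, so it suffices to bound each of these by $\max\{\mathcal{H}(\mathcal{A},\mathcal{C}),\mathcal{H}(\mathcal{B},\mathcal{D})\}$.

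For the first directed distance, the central observation is that since $\mathcal{C}\subseteq\mathcal{C}\cup\mathcal{D}$ and $\mathcal{D}\subseteq\mathcal{C}\cup\mathcal{D}$, the infimum over the union is no larger than the infimum over either $\mathcal{C}$ or $\mathcal{D}$ alone. I would then split the outer supremum according to which of the two sets each point lies in: for any $z\in\mathcal{A}$ we have $\inf_{w\in\mathcal{C}\cup\mathcal{D}}\eta(z,w)\le\inf_{c\in\mathcal{C}}\eta(z,c)\le\sup_{a\in\mathcal{A}}\inf_{c\in\mathcal{C}}\eta(a,c)\le\mathcal{H}(\mathcal{A},\mathcal{C})$, and symmetrically for any $z\in\mathcal{B}$ the same quantity is bounded by $\mathcal{H}(\mathcal{B},\mathcal{D})$. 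Hence every point of $\mathcal{A}\cup\mathcal{B}$ contributes at most $\max\{\mathcal{H}(\mathcal{A},\mathcal{C}),\mathcal{H}(\mathcal{B},\mathcal{D})\}$, and taking the supremum gives the desired bound on the first directed distance. The crucial choice here is to pair $\mathcal{A}$ with $\mathcal{C}$ and $\mathcal{B}$ with $\mathcal{D}$ (rather than cross-pairing); this is exactly what the inclusions $\mathcal{C},\mathcal{D}\subseteq\mathcal{C}\cup\mathcal{D}$ license.

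The second directed distance $\sup_{w\in\mathcal{C}\cup\mathcal{D}}\inf_{z\in\mathcal{A}\cup\mathcal{B}}\eta(z,w)$ is handled by the identical argument with the roles of $(\mathcal{A},\mathcal{B})$ and $(\mathcal{C},\mathcal{D})$ interchanged, using $\mathcal{A},\mathcal{B}\subseteq\mathcal{A}\cup\mathcal{B}$ together with the symmetries $\mathcal{H}(\mathcal{C},\mathcal{A})=\mathcal{H}(\mathcal{A},\mathcal{C})$ and $\mathcal{H}(\mathcal{D},\mathcal{B})=\mathcal{H}(\mathcal{B},\mathcal{D})$. Taking the maximum of the two resulting bounds then yields \eqref{eq_union_prop}. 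I do not expect a genuine obstacle: the argument is elementary, and the compactness hypothesis merely guarantees that the relevant infima and suprema are attained, so no limiting arguments are needed. The only point requiring care is the bookkeeping — correctly restricting the inner infimum to the paired subset in each case, and ensuring the case split over $z\in\mathcal{A}$ versus $z\in\mathcal{B}$ exhausts all of $\mathcal{A}\cup\mathcal{B}$.
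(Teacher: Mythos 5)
Your proof is correct, and every step holds: the monotonicity of the inner infimum under set inclusion (for $z\in\mathcal{A}$, $\inf_{w\in\mathcal{C}\cup\mathcal{D}}\eta(z,w)\le\inf_{c\in\mathcal{C}}\eta(z,c)$), the case split over $z\in\mathcal{A}$ versus $z\in\mathcal{B}$ exhausting the union, and the bounding of each one-sided supremum--infimum term by the corresponding full Hausdorff distance are all valid, as is the symmetric treatment of the second directed distance. The point worth flagging is that the paper does not actually prove this lemma: its ``proof'' is a citation to Theorem 1.12.15 of Barnsley's \emph{Superfractals}. Your argument is the standard elementary proof of that cited result, so what you have written makes this step of the paper self-contained, which is a genuine (if modest) improvement. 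Two small observations: first, the crucial structural choice you highlight --- pairing $\mathcal{A}$ with $\mathcal{C}$ and $\mathcal{B}$ with $\mathcal{D}$ rather than cross-pairing --- is exactly right and is where the stated bound (as opposed to a weaker one involving all four pairwise distances) comes from; second, as you note, compactness is never used in your argument, since you take suprema and infima without requiring them to be attained, so your proof in fact establishes the inequality for arbitrary nonempty bounded subsets, which is slightly more general than the statement.
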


\begin{proof}
The proof is given in \cite[Theorem 1.12.15]{barnsley2006superfractals}.
\end{proof}

Next, we state and prove the main result of this appendix.

\begin{theorem} \label{thm_main_ap_a}
Consider a perfectly observed system, i.e., $Y_t = X_t$, for all $t=0,\dots,T$. Let $\mu_t: \mathcal{X}_t \to \hat{\mathcal{X}}_t$ such that $\max_{x_t \in \mathcal{X}_t} \eta (x_t, \mu_t(x_t)) \leq \gamma_t$ at each $t$. Then, $\hat{\Pi}_t = \mu_t(X_t)$ is an approximate information state which satisfies \eqref{ap1} with $\epsilon_t = 2L_{d_t} \cdot \gamma_t$ and \eqref{ap2} with $\delta_t = 2 \gamma_{t+1} + 2 L_{f_t} \cdot \gamma_t$ for all $t$, where $\gamma_{T+1} = 0$, and where $L_{d_t}$, and $L_{f_t}$ are Lipschitz constants for $d_t$ and $f_t$, respectively.
\end{theorem}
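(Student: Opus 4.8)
The plan is to verify the two defining inequalities \eqref{ap1} and \eqref{ap2} of Definition \ref{def_approx} directly, relying on a single geometric observation. Since the realization of the approximate information state is $\hat{\pi}_t = \mu_t(x_t)$, every state $x_t'$ that is consistent with $\hat{\pi}_t$ (i.e., every $x_t' \in \mu_t^{-1}(\hat{\pi}_t)$) satisfies $\eta(x_t', x_t) \leq \eta(x_t', \mu_t(x_t')) + \eta(\mu_t(x_t), x_t) \leq 2\gamma_t$ by the triangle inequality and the quantization bound $\max_{x}\eta(x,\mu_t(x)) \leq \gamma_t$. Property \eqref{ap3} requires no work, as $\hat{\Pi}_t$ is finite-valued, and the $L$-invertibility of $\hat{\sigma}_t$ follows from the second lemma of Appendix A since $\hat{\sigma}_t$ has compact domain $\mathcal{M}_t$ and finite co-domain $\hat{\mathcal{X}}_t$.

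For \eqref{ap1}, I would first note that in a perfectly observed system the cost is deterministic given $(x_t,u_t)$, so $[[C_t|m_t,u_t]] = \{d_t(x_t,u_t)\}$ and the first supremum equals $d_t(x_t,u_t)$. The conditioned range $[[C_t|\hat{\pi}_t,u_t]]$ instead collects $d_t(x_t',u_t)$ over all consistent $x_t' \in \mu_t^{-1}(\hat{\pi}_t)$. Combining the geometric observation with the Lipschitz continuity of $d_t$ gives $|d_t(x_t',u_t) - d_t(x_t,u_t)| \leq L_{d_t}\cdot\eta(x_t',x_t) \leq 2L_{d_t}\gamma_t$ for each such $x_t'$, so the two suprema differ by at most $\epsilon_t = 2L_{d_t}\gamma_t$.

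For \eqref{ap2}, I would write the two conditional ranges explicitly as $\mathcal{K}_{t+1} = \{\mu_{t+1}(f_t(x_t,u_t,w_t)) : w_t \in [[W_t]]\}$ and $\hat{\mathcal{K}}_{t+1} = \{\mu_{t+1}(f_t(x_t',u_t,w_t)) : x_t' \in \mu_t^{-1}(\hat{\pi}_t),\, w_t \in [[W_t]]\}$. Because $x_t \in \mu_t^{-1}(\hat{\pi}_t)$, we have $\mathcal{K}_{t+1} \subseteq \hat{\mathcal{K}}_{t+1}$, which collapses one of the two directional suprema in the Hausdorff distance \eqref{H_met_def} to zero. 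For the remaining direction, I would take an arbitrary point $\mu_{t+1}(f_t(x_t',u_t,w_t)) \in \hat{\mathcal{K}}_{t+1}$ and compare it with the point $\mu_{t+1}(f_t(x_t,u_t,w_t)) \in \mathcal{K}_{t+1}$ generated by the same $w_t$. Chaining the quantization error through the un-quantized images (two applications of the $\gamma_{t+1}$ bound) together with the Lipschitz continuity of $f_t$ yields $\eta \leq 2\gamma_{t+1} + L_{f_t}\cdot\eta(x_t',x_t) \leq 2\gamma_{t+1} + 2L_{f_t}\gamma_t = \delta_t$; taking the supremum over the point and infimum over its match then establishes \eqref{ap2}. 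Alternatively, for finite $[[W_t]]$ this per-disturbance comparison can be organized through Lemma \ref{lem_union_prop}.

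The computation is largely routine, so the main care is conceptual rather than technical: one must correctly identify the set of states consistent with the approximate conditioning on $\hat{\pi}_t$ (namely the full preimage $\mu_t^{-1}(\hat{\pi}_t)$, not just $x_t$), exploit the inclusion $\mathcal{K}_{t+1} \subseteq \hat{\mathcal{K}}_{t+1}$ to reduce the Hausdorff distance to a single direction, and accumulate the two independent error sources separately — the one-step quantization of the successor state ($2\gamma_{t+1}$) and the propagated state uncertainty through the dynamics ($2L_{f_t}\gamma_t$) — rather than conflating them.
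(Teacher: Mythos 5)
Your proof is correct and follows essentially the same route as the paper's: both verify \eqref{ap1} by chaining the triangle inequality with the Lipschitz constant of $d_t$ over states within $2\gamma_t$ of $x_t$, and verify \eqref{ap2} by matching each point of $\hat{\mathcal{K}}_{t+1}$ to the point of $\mathcal{K}_{t+1}$ generated by the same disturbance, accumulating the two quantization errors $2\gamma_{t+1}$ and the propagated error $2L_{f_t}\gamma_t$. The only cosmetic differences are that you exploit the inclusion $\mathcal{K}_{t+1} \subseteq \hat{\mathcal{K}}_{t+1}$ to collapse one direction of the Hausdorff distance where the paper invokes Lemma \ref{lem_union_prop}, and you condition on the exact preimage $\mu_t^{-1}(\hat{\pi}_t)$ where the paper uses the slightly larger $\gamma_t$-ball around $\mu_t(x_t)$ --- neither changes the estimates or the constants.
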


\begin{proof}
For all $t=0,\dots,T$, let $m_t = (x_{0:t}, u_{0:t-1})$ be the realization of $M_t$ and let the approximate information state be $\hat{x}_t = \mu_t(x_t)$. We first derive the value of $\epsilon_t$ in the RHS of \eqref{ap1}. At time $t$, can expand the conditional ranges to write that $[[X_t|m_t]] = [[X_t|x_t]] = \{x_t\}$ and $[[X_t|\hat{x}_t]] = \{x_t \in \mathcal{X}~|~ \eta (x_t, \hat{x}_t) \leq \gamma_t \}$. 
On substituting these into the LHS of \eqref{ap1}, we state that $\big|\sup_{c_t \in [[C_t|m_t, u_t]]} c_t - \sup_{c_t \in [[C_t|\mu_t(x_t), u_t]]} c_t\big|$
    $= \big|d_t(x_t,u_t) - \sup_{\bar{x}_t \in [[X_t|\mu_t(x_t)]]}d_t(\bar{x}_t,u_t)\big|$
    $\leq \sup_{\bar{x}_t \in [[X_t|\mu_t(x_t)]]} |d_t(x_t,u_t) - d_t(\bar{x}_t,u_t)|$
    $\leq L_{d_t} \cdot \sup_{\bar{x}_t \in [[X_t|\mu_t(x_t)]]} \eta(x_t, \bar{x}_t)$
    $\leq L_{d_t}\cdot\big(\eta(x_t, \mu_t(x_t)) + \sup_{\bar{x}_t \in [[X_t|\mu_t(x_t)]]} \eta(\mu_t(x_t), \bar{x}_t)\big),$
    $\leq 2 L_{d_t} \cdot \gamma_t =: \epsilon_t$, where, in the third inequality, we use the triangle inequality. Next, to derive the value of $\delta_t$, we expand the LHS of \eqref{ap2} as
\begin{align}
    &\mathcal{H}\big([[\hat{X}_{t+1}|x_t, u_t]],[[\hat{X}_{t+1}|\mu_t(x_t),u_t]]\big) \nonumber \\
    = &\mathcal{H}\big( \big\{ \mu_{t+1}(f_t(x_t,u_t,w_t))| w_t \in \mathcal{W}_t \big\},  \nonumber \\
    &\big\{ \mu_{t+1}(f_t(\bar{x}_t,u_t,w_t))| \bar{x}_t \in [[X_t|\mu_t(x_t)]], w_t \in \mathcal{W}_t \big\}\big) \nonumber \\
    \leq &\sup_{w_t \in \mathcal{W}_t} \mathcal{H}(\{ \mu_{t+1}(f_t(x_t,u_t,w_t))\},  \nonumber \\
    &\{ \mu_{t+1}(f_t(\bar{x}_t,u_t,w_t))| \bar{x}_t \in [[X_t|\mu_t(x_t)]]\}), \label{ap_a_1}
\end{align} 
where, we use \eqref{eq_union_prop} and $\big\{ \mu_{t+1}(f_t(x_t,u_t,w_t))| w_t \in \mathcal{W}_t \big\} = \cup_{w_t \in \mathcal{W}_t} \big\{\mu_{t+1}(f_t(\bar{x}_t,u_t,w_t))\big\}$. Using \eqref{eq_union_prop} in the RHS of \eqref{ap_a_1},
$\mathcal{H}\big([[\hat{X}_{t+1}|x_t, u_t]],[[\hat{X}_{t+1}|\mu_t(x_t),u_t]]\big) \hspace{-1pt}\leq \hspace{-1pt} \sup_{w_t \in \mathcal{W}_t, \bar{x}_t \in [[X_t|\mu_t(x_t)]]}\eta \big(\mu_{t+1}\big(f_t(x_t,u_t,w_t)\big),\mu_{t+1} \hspace{-2pt} \big(f_t(\bar{x}_t,$ 
$u_t,w_t)\big)\big)$ $\leq \sup_{w_t \in \mathcal{W}_t, \bar{x}_t \in [[X_t|\mu_t(x_t)]]} \big(\eta \big(\mu_{t+1}(f_t(x_t,u_t,w_t)),$ $f_t(x_t,u_t,
    w_t)\big) + \eta \big(f_t(x_t, u_t, w_t), f_t(\bar{x}_t,u_t,w_t)\big) +$ 
$\eta \big(f_t(\bar{x}_t,u_t,w_t), \mu_{t+1}(f_t(\bar{x}_t,u_t,w_t))\big)\big) \leq \gamma_{t+1} + 2L_{f_t} \gamma_t + \gamma_{t+1} =: \delta_t.$
\end{proof}

\vspace{-6pt}

\section*{Appendix D -- Approximation Bounds for Partially Observed Systems}

In this appendix, we derive the values of $\epsilon_t$ and $\delta_t$ for all $t=0,\dots,T$, when an approximate information state is constructed using state quantization for a partially observed system, as described in Subsection \ref{subsection:approx_examples}. 

\begin{theorem} \label{thm_main_ap_b}
Consider a partially observed system with $Y_t = h_t(X_t, N_t)$ for all $t=0,\dots,T$. Let $\mu_t: \mathcal{X}_t \to \hat{\mathcal{X}}_t$ such that $\sup_{x_t \in \mathcal{X}_t} \eta (x_t, \mu_t(x_t)) \leq \gamma_t$ at each $t$. Then, $\hat{\Pi}_t = \nu_t(\Pi_t)$ is an approximate information state with $\epsilon_t = 2L_{d_t} \cdot \gamma_t$ and $\delta_t = 2 \gamma_{t+1} + 2 L_{\bar{f}_t} \cdot L_{h_{t+1}} \cdot L_{f_t} \cdot \gamma_t$ for all $t$, where $\gamma_{T+1} = 0$, and where $L_{d_t}$, $L_{\bar{f}_t}$, $L_{h_{t+1}}$, and $L_{f_t}$ are Lipschitz constants for the respective functions in the subscripts.
\end{theorem}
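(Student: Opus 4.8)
The plan is to follow the architecture of the perfectly observed proof (Theorem~\ref{thm_main_ap_a}), but to propagate the state-quantization error through the extra observation and filtering layers that characterize the partially observed setting. Throughout I would treat the quantized conditional range $\hat{\Pi}_t = \nu_t(\Pi_t)$ as a subset of $\mathcal{X}_t$ and work with the de-quantized set $[[X_t|\hat{\Pi}_t]] := \{x_t \in \mathcal{X}_t \mid \mu_t(x_t) \in \hat{\Pi}_t\}$. The central elementary estimate I would establish first is $\mathcal{H}(\Pi_t, [[X_t|\hat{\Pi}_t]]) \leq 2\gamma_t$: the inclusion $\Pi_t \subseteq [[X_t|\hat{\Pi}_t]]$ is immediate, and every $x_t' \in [[X_t|\hat{\Pi}_t]]$ shares a quantization cell with some $x_t \in \Pi_t$, so $\eta(x_t, x_t') \leq \eta(x_t, \mu_t(x_t)) + \eta(\mu_t(x_t'), x_t') \leq 2\gamma_t$.

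For the cost bound \eqref{ap1}, I would write the two conditional cost-ranges as the images $\{d_t(x_t, u_t) \mid x_t \in \Pi_t\}$ and $\{d_t(x_t, u_t) \mid x_t \in [[X_t|\hat{\Pi}_t]]\}$, and apply Lemma~\ref{lem_prelim} to the Lipschitz map $d_t(\cdot, u_t)$. Combined with the $2\gamma_t$ Hausdorff estimate this gives $\epsilon_t = 2 L_{d_t}\cdot\gamma_t$ at once, exactly as in the perfectly observed case.

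The evolution bound \eqref{ap2} is the substantive part. I would express $\mathcal{K}_{t+1}$ and $\hat{\mathcal{K}}_{t+1}$ as collections of quantized ranges $\nu_{t+1}(\bar{f}_t(\cdot, u_t, \cdot))$ whose inner range argument is $\Pi_t$ versus $[[X_t|\hat{\Pi}_t]]$ and whose observation argument ranges over the respective predicted-observation sets $[[Y_{t+1}|m_t, u_t]]$ and $[[Y_{t+1}|\hat{\sigma}_t(m_t), u_t]]$. Writing these as unions over disturbances and observations lets me apply Lemma~\ref{lem_union_prop} to reduce the set-of-sets Hausdorff comparison to pointwise Lipschitz estimates. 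The key chain then propagates the $2\gamma_t$ state spread forward: $L_{f_t}$ carries it to the next state, $L_{h_{t+1}}$ carries it to the driving observation, and $L_{\bar{f}_t}$ carries it through the filter update, producing the term $2 L_{\bar{f}_t}\cdot L_{h_{t+1}}\cdot L_{f_t}\cdot\gamma_t$; adding the level-$(t+1)$ quantization error, which enters twice via the triangle inequality on $\nu_{t+1}$ as in Theorem~\ref{thm_main_ap_a}, yields $\delta_t = 2\gamma_{t+1} + 2 L_{\bar{f}_t}\cdot L_{h_{t+1}}\cdot L_{f_t}\cdot\gamma_t$.

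I expect the main obstacle to be this final coupling. Unlike the perfectly observed case, where the next state is an immediate Lipschitz image of the current state, here the information state is the conditional range, updated by $\bar{f}_t$ through an observation that is itself a Lipschitz function of the next state. Consequently the quantization error must be pushed simultaneously through the three maps $f_t$, $h_{t+1}$, $\bar{f}_t$, and, more delicately, the mismatch between the two predicted-observation ranges $[[Y_{t+1}|m_t, u_t]]$ and $[[Y_{t+1}|\hat{\sigma}_t(m_t), u_t]]$ feeding $\bar{f}_t$ must be matched observation-by-observation and controlled through the Lipschitz continuity of $\bar{f}_t$, all while keeping each set comparison within reach of Lemma~\ref{lem_union_prop}. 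Once this bookkeeping is organized, the remaining estimates are routine adaptations of the perfectly observed argument.
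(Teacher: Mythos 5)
Your proposal is correct and follows essentially the same route as the paper's proof: the cost bound $\epsilon_t = 2L_{d_t}\gamma_t$ via Lemma~\ref{lem_prelim} applied to the $2\gamma_t$ Hausdorff estimate between $\Pi_t$ and the de-quantized set, and the evolution bound via a triangle inequality that extracts $2\gamma_{t+1}$ from the $\nu_{t+1}$ quantization on each side, followed by the Lipschitz chain $L_{\bar{f}_t}\cdot L_{h_{t+1}}\cdot L_{f_t}$ applied to the $2\gamma_t$ state spread, with Lemma~\ref{lem_union_prop} reducing the set-of-sets comparisons. The only cosmetic difference is bookkeeping: the paper formalizes conditioning on $\nu_t(P_t)$ as a union over consistent ranges $\tilde{P}_t$ with $\nu_t(\tilde{P}_t)=\nu_t(P_t)$, whereas you use the single de-quantized set $\{x_t \mid \mu_t(x_t)\in\hat{\Pi}_t\}$; both yield the same $2\gamma_t$ estimate and the same constants.
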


\begin{proof}
For all $t=0,\dots,T$, let $m_t \in [[M_t]]$, $P_t = [[X_t|m_t]] \in \mathcal{P}_t$, and $\hat{P}_t = \nu_t(P_t) \in \hat{\mathcal{P}}_t$ be the realizations of the memory $M_t$, the conditional range $\Pi_t$ and the approximate information state $\hat{\Pi}_t$, respectively. Note that the conditional range $P_t$ satisfies \eqref{p1} and \eqref{p2} from Definition \ref{def_info_state}. Next, to derive the value of $\epsilon_t$, we write the LHS of \eqref{ap1} using \eqref{p1} as
\begin{align}
    &\big|\sup_{c_t \in [[C_t|m_t,u_t]]} c_t - \sup_{c_t \in [[C_t|\nu_t({P}_t),u_t]]}c_t \big| \nonumber \\
    =&\big|\sup_{x_t \in P_t} d_t(x_t, u_t) - \sup_{\bar{x}_t \in [[X_t|\nu_t({P}_t)]])}d_t(\bar{x}_t,u_t)\big| \nonumber \\
    \leq &L_{d_t} \cdot \mathcal{H}(P_t, [[X_t|\nu_t({P}_t)]]) \nonumber \\
    \leq &L_{d_t} \cdot \big(\mathcal{H}(P_t, \nu_t(P_t)) + \mathcal{H}(\nu_t(P_t), [[X_t|\nu_t({P}_t)]]) \big), \label{ap_b_1}
\end{align}
where, in the equality, we use \eqref{p1}; in the first inequality, we use \eqref{eq_prelim} from Lemma \ref{lem_prelim}; and in the second inequality, we use the triangle inequality for the Hausdorff distance. We can expand the first term in the RHS of \eqref{ap_b_1} as $\mathcal{H}(P_t, \nu_t(P_t)) = \mathcal{H}(P_t, \{\mu_t(x_t) \in \hat{\mathcal{X}}_t\;| \; x_t \in P_t\})$ $= \mathcal{H}\big(\cup_{x_t \in P_t} \{x_t\}, \cup_{x_t \in P_t}\{\mu_t(x_t) \in \hat{\mathcal{X}}_t\}\big) \leq \sup_{x_t \in P_t}$ $\eta(x_t, \mu_t(x_t)) \leq \gamma_t$, where we use \eqref{eq_union_prop} from Lemma \ref{lem_union_prop} in the first inequality. We can also expand the second term in the RHS of \eqref{ap_b_1} as $\mathcal{H}(\nu_t(P_t), [[X_t|\nu_t({P}_t)]]) ) = \mathcal{H}\big(\nu_t(P_t), \{x_t \in \mathcal{X}_t| \inf_{\bar{x}_t \in \nu_t(P_t)} \eta(x_t,\bar{x}_t) \leq \gamma_t \} \big) = \sup_{x_t \in [[X_t|\nu_t(P_t)]]} \inf_{\bar{x}_t \in \nu_t(P_t)} \eta(x_t, \bar{x}_t) \leq \gamma_t$, where the second equality holds by expanding the Hausdorff distance and noting that $\nu_t(P_t) \subseteq [[X_t|\nu_t(P_t)]]$. The proof is complete by substituting the results for both terms in the RHS of \eqref{ap_b_1}.

Next, to derive the value of $\delta_t$, we note that $P_t = \sigma_t(m_t)$. Then, using the triangle inequality in the LHS of \eqref{ap2},
\begin{align}
& \mathcal{H}\big([[\nu_{t+1}(\Pi_{t+1})|m_t,u_t]],[[\nu_{t+1}(\Pi_{t+1})|\nu_t(\sigma_t(m_t)),u_t]]\big) \nonumber \\
\leq & \mathcal{H}\big([[\nu_{t+1}(\Pi_{t+1})|m_t,u_t]],[[\Pi_{t+1}|m_t,u_t]]\big) + \nonumber \\
&\mathcal{H}\big([[\Pi_{t+1}|m_t,u_t]],[[\Pi_{t+1}|\nu_t(\sigma_t(m_t)),u_t]] \big) + \nonumber \\
& \mathcal{H}\big([[\Pi_{t+1}|\nu_t(\sigma_t(m_t)),u_t]],[[\nu_{t+1}(\Pi_{t+1})|\nu_t(\sigma_t(m_t)),u_t]]\big) \nonumber \\
\leq & 2\gamma_{t+1} + \mathcal{H}\big([[\Pi_{t+1}|m_t,u_t]],[[\Pi_{t+1}|\nu_t(\sigma_t(m_t)),u_t]]\big), \label{ap_b_2}
\end{align}
where, in the second inequality, we use the fact that $\mathcal{H}\big(P_{t+1}, \nu_{t+1}(P_{t+1})\big) \leq \gamma_{t+1}$, which was proved above. We can write the second term in the RHS of \eqref{ap_b_2} using \eqref{p2} from Definition \ref{def_info_state} as $\mathcal{H}\big([[\Pi_{t+1}|m_t,u_t]],[[\Pi_{t+1}|\nu_t(\sigma_t(m_t)),u_t]]\big) = \mathcal{H}\big([[\Pi_{t+1}|P_t,u_t]],[[\Pi_{t+1}|\nu_t(P_t),u_t]]\big)$. Furthermore, note that $[[\Pi_{t+1}|\nu_t(P_t), u_t]] = \big\{ \tilde{P}_{t+1} \in [[\Pi_{t+1}|\tilde{P}_t,u_t]]$ $| \tilde{P}_t \in [[\Pi_t|\nu_t(P_t)]] \big\} =  \cup_{\tilde{P}_t \in [[\Pi_t|\nu(P_t)]]} [[\Pi_{t+1}|\tilde{P}_t,u_t]]$. Next, we use \eqref{eq_union_prop} from Lemma \ref{lem_union_prop} to write that
\begin{align}
    & \mathcal{H}\big([[\Pi_{t+1}|P_t,u_t]]),[[\Pi_{t+1}|\nu_t(P_t),u_t]]\big) \nonumber \\
    \leq & \sup_{\tilde{P}_t \in [[\Pi_t|\nu_t(P_t)]]}\mathcal{H}\big([[\Pi_{t+1}|P_t,u_t]]),[[\Pi_{t+1}|\tilde{P}_t,u_t]]\big) \nonumber \\
    \leq & L_{\bar{f}_t} \cdot \sup_{\tilde{P}_t \in [[\Pi_t|\nu_t(P_t)]]}\mathcal{H}\big([[Y_{t+1}|P_t,u_t]]),[[Y_{t+1}|\tilde{P}_t,u_t]]\big) \nonumber \\
    \leq & L_{\bar{f}_t}  \hspace{-2pt} \cdot \hspace{-2pt} L_{h_{t+1}} \cdot  \hspace{-5pt} \sup_{\tilde{P}_t \in [[\Pi_t|\nu_t(P_t)]]} \hspace{-12pt} \mathcal{H}\big([[X_{t+1}|P_t,u_t]]),[[X_{t+1}|\tilde{P}_t,u_t]]\big), \label{ap_b_3}
\end{align}
where in the second inequality, we use the same arguments as in Lemma \ref{lem_alt_approx} and the third inequality can be proven by substituting $Y_{t+1} = h_{t+1}(X_{t+1},V_{t+1})$ into the equation. We can further expand the third term in the RHS of \eqref{ap_b_3} and use \eqref{eq_union_prop} from Lemma \eqref{lem_union_prop} to write that $\sup_{\tilde{P}_t \in [[\Pi_t|\nu_t(P_t)]]} \mathcal{H}\big([[X_{t+1}|P_t,u_t]]),[[X_{t+1}|\tilde{P}_t,u_t]]\big)$
    $\leq \sup_{\tilde{P}_t \in [[\Pi_t|\nu_t(P_t)]], w_t \in \mathcal{W}_t} \mathcal{H}\big( \{f_t(x_t,u_t,w_t)|x_t\in P_t\},$  $\{f_t(x_t,u_t,w_t)|x_t\in \tilde{P}_t\} \big)$
    $\leq L_{f_t} \cdot  \sup_{\tilde{P}_t \in [[\Pi_t|\nu_t(P_t)]]} \mathcal{H}\big(P_t, \tilde{P}_t\big)$
    $\leq L_{f_t} \cdot  \sup_{\tilde{P}_t \in [[\Pi_t|\nu_t(P_t)]]} \big(\mathcal{H}(P_t, \nu_t(P_t)\big) + \mathcal{H}\big(\nu_t(P_t), \tilde{P}_t)\big)$
    $\leq 2 L_{f_t} \cdot \gamma_t,$
where, in the third inequality, we use the triangle inequality and in the fourth inequality we use the fact that for all $\nu_t(\tilde{P}_t) = \nu_t(P_t)$, for all $\tilde{P}_t \in [[\Pi_t|\nu_t(P_t)]]$. 
\end{proof}

\bibliographystyle{ieeetr}

\bibliography{References,Latest_IDS}

\begin{thebibliography}{10}

\bibitem{kim2012cyber}
K.-D. Kim and P.~R. Kumar, ``Cyber--physical systems: A perspective at the
  centennial,'' {\em Proceedings of the IEEE}, vol.~100, no.~Special Centennial
  Issue, pp.~1287--1308, 2012.

\bibitem{Malikopoulos2020}
A.~A. Malikopoulos, L.~E. Beaver, and I.~V. Chremos, ``Optimal time trajectory
  and coordination for connected and automated vehicles,'' {\em Automatica},
  vol.~125, no.~109469, 2021.

\bibitem{Dave2020SocialMedia}
A.~Dave, I.~V. Chremos, and A.~A. Malikopoulos, ``Social media and misleading
  information in a democracy: A mechanism design approach,'' {\em IEEE
  Transactions on Automatic Control}, vol.~67, no.~5, pp.~2633--2639, 2022.

\bibitem{varaiya_book}
P.~R. Kumar and P.~P. Varaiya, {\em Stochastic Systems: Estimation,
  Identification, and Adaptive Control}.
\newblock Englewood Cliffs, NJ: Prentice-Hall, 1986.

\bibitem{Malikopoulos2016b}
A.~A. Malikopoulos, ``A duality framework for stochastic optimal control of
  complex systems,'' {\em IEEE Transactions on Automatic Control}, vol.~61,
  no.~10, pp.~2756--2765, 2016.

\bibitem{ahmadi2020control}
M.~Ahmadi, N.~Jansen, B.~Wu, and U.~Topcu, ``Control theory meets pomdps: A
  hybrid systems approach,'' {\em IEEE Transactions on Automatic Control},
  vol.~66, no.~11, pp.~5191--5204, 2020.

\bibitem{mahajan2012}
A.~Mahajan, N.~C. Martins, M.~C. Rotkowitz, and S.~Y{\"u}ksel, ``Information
  structures in optimal decentralized control,'' in {\em 2012 IEEE 51st IEEE
  Conference on Decision and Control (CDC)}, pp.~1291--1306, IEEE, 2012.

\bibitem{dave2019decentralized}
A.~Dave and A.~A. Malikopoulos, ``Decentralized stochastic control in partially
  nested information structures,'' {\em IFAC-PapersOnLine}, vol.~52, no.~20,
  pp.~97--102, 2019.

\bibitem{Dave2021a}
A.~Dave and A.~A. Malikopoulos, ``A dynamic program for a team of two agents
  with nested information,'' in {\em 2021 IEEE Conference on Decision and
  Control (CDC)}, pp.~3768--3773, IEEE, 2021.

\bibitem{Dave2021nestedaccess}
A.~Dave, N.~Venkatesh, and A.~A. Malikopoulos, ``Decentralized control of two
  agents with nested accessible information,'' in {\em 2022 American Control
  Conference (ACC)}, pp.~3423--3430, IEEE, 2022.

\bibitem{Malikopoulos2021}
A.~A. Malikopoulos, ``On team decision problems with nonclassical information
  structures,'' {\em IEEE Transactions on Automatic Control}, vol.~68, no.~7,
  pp.~3915--3930, 2023.

\bibitem{mishra2021decentralized}
R.~K. Mishra, D.~Vasal, and S.~Vishwanath, ``Decentralized multi-agent
  reinforcement learning with shared actions,'' in {\em 2021 55th Annual
  Conference on Information Sciences and Systems (CISS)}, pp.~1--6, IEEE, 2021.

\bibitem{zhang2021multi}
K.~Zhang, Z.~Yang, and T.~Ba{\c{s}}ar, ``Multi-agent reinforcement learning: A
  selective overview of theories and algorithms,'' {\em Handbook of
  Reinforcement Learning and Control}, pp.~321--384, 2021.

\bibitem{Malikopoulos2022a}
A.~A. Malikopoulos, ``Separation of learning and control for cyber-physical
  systems,'' {\em Automatica}, vol.~151, no.~110912, 2023.

\bibitem{kao2022common}
H.~Kao and V.~Subramanian, ``Common information based approximate state
  representations in multi-agent reinforcement learning,'' in {\em
  International Conference on Artificial Intelligence and Statistics},
  pp.~6947--6967, PMLR, 2022.

\bibitem{Malikopoulos2024}
A.~A. Malikopoulos, ``Combining learning and control in linear systems,'' in
  {\em 22nd European Control Conference (ECC)}, 2024 (to appear) (arxiv:
  2310.14409).

\bibitem{mannor2007bias}
S.~Mannor, D.~Simester, P.~Sun, and J.~N. Tsitsiklis, ``Bias and variance
  approximation in value function estimates,'' {\em Management Science},
  vol.~53, no.~2, pp.~308--322, 2007.

\bibitem{brunke2022safe}
L.~Brunke, M.~Greeff, A.~W. Hall, Z.~Yuan, S.~Zhou, J.~Panerati, and A.~P.
  Schoellig, ``Safe learning in robotics: From learning-based control to safe
  reinforcement learning,'' {\em Annual Review of Control, Robotics, and
  Autonomous Systems}, vol.~5, pp.~411--444, 2022.

\bibitem{bacsar2008h}
T.~Ba{\c{s}}ar and P.~Bernhard, {\em H-infinity optimal control and related
  minimax design problems: a dynamic game approach}.
\newblock Springer Science \& Business Media, 2008.

\bibitem{rasouli2018scalable}
M.~Rasouli, E.~Miehling, and D.~Teneketzis, ``A scalable decomposition method
  for the dynamic defense of cyber networks,'' in {\em Game Theory for Security
  and Risk Management}, pp.~75--98, Springer, 2018.

\bibitem{shoukry2013minimax}
Y.~Shoukry, J.~Araujo, P.~Tabuada, M.~Srivastava, and K.~H. Johansson,
  ``Minimax control for cyber-physical systems under network packet scheduling
  attacks,'' in {\em Proceedings of the 2nd ACM international conference on
  High confidence networked systems}, pp.~93--100, 2013.

\bibitem{giuliani2021state}
M.~Giuliani, J.~Lamontagne, P.~Reed, and A.~Castelletti, ``A state-of-the-art
  review of optimal reservoir control for managing conflicting demands in a
  changing world,'' {\em Water Resources Research}, vol.~57, no.~12,
  p.~e2021WR029927, 2021.

\bibitem{zhu2011robust}
Q.~Zhu and T.~Ba{\c{s}}ar, ``Robust and resilient control design for
  cyber-physical systems with an application to power systems,'' in {\em 2011
  50th IEEE Conference on Decision and Control and European Control
  Conference}, pp.~4066--4071, IEEE, 2011.

\bibitem{aastrom2014control}
K.~J. {\AA}str{\"o}m and P.~R. Kumar, ``Control: A perspective.,'' {\em
  Autom.}, vol.~50, no.~1, pp.~3--43, 2014.

\bibitem{venkatesh2023stochastic}
N.~Venkatesh, A.~Dave, I.~Faros, and A.~A. Malikopoulos, ``Stochastic control
  with distributionally robust constraints for cyber-physical systems
  vulnerable to attacks,'' {\em in review (arXiv:2311.03666)}, 2023.

\bibitem{hadfield2016cooperative}
D.~Hadfield-Menell, S.~J. Russell, P.~Abbeel, and A.~Dragan, ``Cooperative
  inverse reinforcement learning,'' {\em Advances in neural information
  processing systems}, vol.~29, 2016.

\bibitem{faros2023adherence}
I.~Faros, A.~Dave, and A.~A. Malikopoulos, ``A q-learning approach for
  adherence-aware recommendations,'' {\em IEEE Control Systems Letters},
  vol.~7, pp.~3645--3650, 2023.

\bibitem{fatemi2021medical}
M.~Fatemi, T.~W. Killian, J.~Subramanian, and M.~Ghassemi, ``Medical dead-ends
  and learning to identify high-risk states and treatments,'' {\em Advances in
  Neural Information Processing Systems}, vol.~34, pp.~4856--4870, 2021.

\bibitem{kiumarsi2017optimal}
B.~Kiumarsi, K.~G. Vamvoudakis, H.~Modares, and F.~L. Lewis, ``Optimal and
  autonomous control using reinforcement learning: A survey,'' {\em IEEE
  transactions on neural networks and learning systems}, vol.~29, no.~6,
  pp.~2042--2062, 2017.

\bibitem{recht2019tour}
B.~Recht, ``A tour of reinforcement learning: The view from continuous
  control,'' {\em Annual Review of Control, Robotics, and Autonomous Systems},
  vol.~2, pp.~253--279, 2019.

\bibitem{bertsekas2005dynamic}
D.~P. Bertsekas, ``Dynamic programming and suboptimal control: A survey from
  adp to mpc,'' {\em European Journal of Control}, vol.~11, no.~4-5,
  pp.~310--334, 2005.

\bibitem{bertsekas1971control}
D.~P. Bertsekas, {\em Control of uncertain systems with a set-membership
  description of the uncertainty.}
\newblock PhD thesis, Massachusetts Institute of Technology, 1971.

\bibitem{hernandez2012discrete}
O.~Hern{\'a}ndez-Lerma and J.~B. Lasserre, {\em Discrete-time Markov control
  processes: basic optimality criteria}, vol.~30.
\newblock Springer Science \& Business Media, 2012.

\bibitem{moon2015minimaxcontrol}
J.~Moon and T.~Ba{\c{s}}ar, ``Minimax control over unreliable communication
  channels,'' {\em Automatica}, vol.~59, pp.~182--193, 2015.

\bibitem{bertsekas2012dynamic}
D.~Bertsekas, {\em Dynamic programming and optimal control: Volume I}, vol.~1.
\newblock Athena scientific, 2012.

\bibitem{papadimitriou1987complexity}
C.~H. Papadimitriou and J.~N. Tsitsiklis, ``The complexity of markov decision
  processes,'' {\em Mathematics of operations research}, vol.~12, no.~3,
  pp.~441--450, 1987.

\bibitem{bernhard1996separation}
P.~Bernhard, ``A separation theorem for expected value and feared value
  discrete time control,'' {\em ESAIM: Control, Optimisation and Calculus of
  Variations}, vol.~1, pp.~191--206, 1996.

\bibitem{puterman2014markov}
M.~L. Puterman, {\em Markov decision processes: discrete stochastic dynamic
  programming}.
\newblock John Wiley \& Sons, 2014.

\bibitem{Dave2020a}
A.~Dave and A.~A. Malikopoulos, ``Structural results for decentralized
  stochastic control with a word-of-mouth communication,'' in {\em 2020
  American Control Conference (ACC)}, pp.~2796--2801, IEEE, 2020.

\bibitem{subramanian2019approximate}
J.~Subramanian and A.~Mahajan, ``Approximate information state for partially
  observed systems,'' in {\em 2019 IEEE 58th Conference on Decision and Control
  (CDC)}, pp.~1629--1636, IEEE, 2019.

\bibitem{cong2021rethinking}
Y.~Cong, X.~Wang, and X.~Zhou, ``Rethinking the mathematical framework and
  optimality of set-membership filtering,'' {\em IEEE Transactions on Automatic
  Control}, vol.~67, no.~5, pp.~2544--2551, 2021.

\bibitem{bertsekas1973sufficiently}
D.~Bertsekas and I.~Rhodes, ``Sufficiently informative functions and the
  minimax feedback control of uncertain dynamic systems,'' {\em IEEE
  Transactions on Automatic Control}, vol.~18, no.~2, pp.~117--124, 1973.

\bibitem{gagrani2017decentralized}
M.~Gagrani and A.~Nayyar, ``Decentralized minimax control problems with partial
  history sharing,'' in {\em 2017 American Control Conference (ACC)},
  pp.~3373--3379, IEEE, 2017.

\bibitem{Dave2021minimax}
A.~Dave, N.~Venkatesh, and A.~A. Malikopoulos, ``On decentralized minimax
  control with nested subsystems,'' in {\em 2022 American Control Conference
  (ACC)}, pp.~3437--3444, IEEE, 2022.

\bibitem{moitie2002optimal}
R.~R. Moiti{\'e}, M.~Quincampoix, and V.~M. Veliov, ``Optimal control of
  discrete-time uncertain systems with imperfect measurement,'' {\em IEEE
  transactions on automatic control}, vol.~47, no.~11, pp.~1909--1914, 2002.

\bibitem{piccardi1993infinite}
C.~Piccardi, ``Infinite-horizon minimax control with pointwise cost
  functional,'' {\em Journal of Optimization Theory and Applications}, vol.~78,
  no.~2, pp.~317--336, 1993.

\bibitem{bernhard2003minimax}
P.~Bernhard, ``Minimax - or feared value - $\text{$L$}_1$ / $\text{$L$}_\infty$
  control,'' {\em Theoretical computer science}, vol.~293, no.~1, pp.~25--44,
  2003.

\bibitem{gagrani2020worst}
M.~Gagrani, Y.~Ouyang, M.~Rasouli, and A.~Nayyar, ``Worst-case guarantees for
  remote estimation of an uncertain source,'' {\em IEEE Transactions on
  Automatic Control}, vol.~66, no.~4, pp.~1794--1801, 2020.

\bibitem{james1994risk}
M.~R. James, J.~S. Baras, and R.~J. Elliott, ``Risk-sensitive control and
  dynamic games for partially observed discrete-time nonlinear systems,'' {\em
  IEEE transactions on automatic control}, vol.~39, no.~4, pp.~780--792, 1994.

\bibitem{bernhard2000max}
P.~Bernhard, ``Max-plus algebra and mathematical fear in dynamic
  optimization,'' {\em Set-Valued Analysis}, vol.~8, no.~1, pp.~71--84, 2000.

\bibitem{coraluppi1999risk}
S.~P. Coraluppi and S.~I. Marcus, ``Risk-sensitive and minimax control of
  discrete-time, finite-state markov decision processes,'' {\em Automatica},
  vol.~35, no.~2, pp.~301--309, 1999.

\bibitem{Dave2023infhorizon}
A.~Dave, I.~Faros, N.~Venkatesh, and A.~A. Malikopoulos, ``Worst-case control
  and learning using partial observations over an infinite time horizon,'' in
  {\em Proceedings of the 62nd IEEE Conference on Decision and Control (CDC)},
  pp.~6014--6019, IEEE, 2023.

\bibitem{Dave2022approx}
A.~Dave, N.~Venkatesh, and A.~A. Malikopoulos, ``Approximate information states
  for worst-case control of uncertain systems,'' in {\em Proceedings of the
  61th IEEE Conference on Decision and Control (CDC)}, pp.~4945--4950, 2022.

\bibitem{baauerle2017partially}
N.~B{\"a}auerle and U.~Rieder, ``Partially observable risk-sensitive markov
  decision processes,'' {\em Mathematics of Operations Research}, vol.~42,
  no.~4, pp.~1180--1196, 2017.

\bibitem{osogami2015robust}
T.~Osogami, ``Robust partially observable markov decision process,'' in {\em
  International Conference on Machine Learning}, pp.~106--115, PMLR, 2015.

\bibitem{girard2012controller}
A.~Girard, ``Controller synthesis for safety and reachability via approximate
  bisimulation,'' {\em Automatica}, vol.~48, no.~5, pp.~947--953, 2012.

\bibitem{tazaki2008finite}
Y.~Tazaki and J.-i. Imura, ``Finite abstractions of discrete-time linear
  systems and its application to optimal control,'' {\em IFAC Proceedings
  Volumes}, vol.~41, no.~2, pp.~10201--10206, 2008.

\bibitem{de2013symbolic}
F.~de~Roo and M.~Mazo, ``On symbolic optimal control via approximate simulation
  relations,'' in {\em 52nd IEEE Conference on Decision and Control},
  pp.~3205--3210, IEEE, 2013.

\bibitem{reissig2018symbolic}
G.~Reissig and M.~Rungger, ``Symbolic optimal control,'' {\em IEEE Transactions
  on Automatic Control}, vol.~64, no.~6, pp.~2224--2239, 2018.

\bibitem{gallestey1999max}
E.~Gallestey, M.~James, and W.~McEneaney, ``Max-plus methods in partially
  observed h/sub/spl infin//control,'' in {\em Proceedings of the 38th IEEE
  Conference on Decision and Control (Cat. No. 99CH36304)}, vol.~3,
  pp.~3011--3016, IEEE, 1999.

\bibitem{saldi2014asymptotic}
N.~Saldi, T.~Linder, and S.~Y{\"u}ksel, ``Asymptotic optimality and rates of
  convergence of quantized stationary policies in stochastic control,'' {\em
  IEEE Transactions on Automatic Control}, vol.~60, no.~2, pp.~553--558, 2014.

\bibitem{meyn2022control}
S.~Meyn, {\em Control Systems and Reinforcement Learning}.
\newblock Cambridge University Press, 2022.

\bibitem{levine2020offline}
S.~Levine, A.~Kumar, G.~Tucker, and J.~Fu, ``Offline reinforcement learning:
  Tutorial, review, and perspectives on open problems,'' {\em arXiv preprint
  arXiv:2005.01643}, 2020.

\bibitem{malikopoulos2009real}
A.~A. Malikopoulos, P.~Y. Papalambros, and D.~N. Assanis, ``{A Real-Time
  Computational Learning Model for Sequential Decision-Making Problems Under
  Uncertainty},'' {\em Journal of Dynamic Systems, Measurement, and Control},
  vol.~131, 05 2009.
\newblock 041010.

\bibitem{moerland2020model}
T.~M. Moerland, J.~Broekens, and C.~M. Jonker, ``Model-based reinforcement
  learning: A survey,'' {\em arXiv preprint arXiv:2006.16712}, 2020.

\bibitem{ramirez2022model}
J.~Ram{\'\i}rez, W.~Yu, and A.~Perrusqu{\'\i}a, ``Model-free reinforcement
  learning from expert demonstrations: a survey,'' {\em Artificial Intelligence
  Review}, vol.~55, no.~4, pp.~3213--3241, 2022.

\bibitem{morimoto2005robust}
J.~Morimoto and K.~Doya, ``Robust reinforcement learning,'' {\em Neural
  computation}, vol.~17, no.~2, pp.~335--359, 2005.

\bibitem{heger1994consideration}
M.~Heger, ``Consideration of risk in reinforcement learning,'' in {\em Machine
  Learning Proceedings 1994}, pp.~105--111, Elsevier, 1994.

\bibitem{jiang1998minimax}
G.~Jiang, C.-P. Wu, and G.~Cybenko, ``Minimax-based reinforcement learning with
  state aggregation,'' in {\em Proceedings of the 37th IEEE Conference on
  Decision and Control (Cat. No. 98CH36171)}, vol.~2, pp.~1236--1241, IEEE,
  1998.

\bibitem{garcia2015comprehensive}
J.~Garc{\i}a and F.~Fern{\'a}ndez, ``A comprehensive survey on safe
  reinforcement learning,'' {\em Journal of Machine Learning Research},
  vol.~16, no.~1, pp.~1437--1480, 2015.

\bibitem{valadbeigi2019h}
A.~P. Valadbeigi, A.~K. Sedigh, and F.~L. Lewis, ``\textit{H}$_\infty$ static
  output-feedback control design for discrete-time systems using reinforcement
  learning,'' {\em IEEE transactions on neural networks and learning systems},
  vol.~31, no.~2, pp.~396--406, 2019.

\bibitem{chakravorty2003minimax}
S.~Chakravorty and D.~Hyland, ``Minimax reinforcement learning,'' in {\em AIAA
  Guidance, Navigation, and Control Conference and Exhibit}, p.~5718, 2003.

\bibitem{kiumarsi2017h}
B.~Kiumarsi, F.~L. Lewis, and Z.-P. Jiang, ``\textit{H}$_\infty$ control of
  linear discrete-time systems: Off-policy reinforcement learning,'' {\em
  Automatica}, vol.~78, pp.~144--152, 2017.

\bibitem{agarwal2019online}
N.~Agarwal, B.~Bullins, E.~Hazan, S.~Kakade, and K.~Singh, ``Online control
  with adversarial disturbances,'' in {\em International Conference on Machine
  Learning}, pp.~111--119, PMLR, 2019.

\bibitem{gradu2020non}
P.~Gradu, J.~Hallman, and E.~Hazan, ``Non-stochastic control with bandit
  feedback,'' {\em Advances in Neural Information Processing Systems}, vol.~33,
  pp.~10764--10774, 2020.

\bibitem{subramanian2022approximate}
J.~Subramanian, A.~Sinha, R.~Seraj, and A.~Mahajan, ``Approximate information
  state for approximate planning and reinforcement learning in partially
  observed systems,'' {\em Journal of Machine Learning Research}, vol.~23,
  no.~12, pp.~1--83, 2022.

\bibitem{yang2022discrete}
L.~Yang, K.~Zhang, A.~Amice, Y.~Li, and R.~Tedrake, ``Discrete approximate
  information states in partially observable environments,'' in {\em 2022
  American Control Conference (ACC)}, pp.~1406--1413, IEEE, 2022.

\bibitem{killian2020empirical}
T.~W. Killian, H.~Zhang, J.~Subramanian, M.~Fatemi, and M.~Ghassemi, ``An
  empirical study of representation learning for reinforcement learning in
  healthcare,'' in {\em Machine Learning for Health}, pp.~139--160, PMLR, 2020.

\bibitem{nair2013nonstochastic}
G.~N. Nair, ``A nonstochastic information theory for communication and state
  estimation,'' {\em IEEE Transactions on automatic control}, vol.~58, no.~6,
  pp.~1497--1510, 2013.

\bibitem{girard2007approximation}
A.~Girard and G.~J. Pappas, ``Approximation metrics for discrete and continuous
  systems,'' {\em IEEE Transactions on Automatic Control}, vol.~52, no.~5,
  pp.~782--798, 2007.

\bibitem{barnsley2006superfractals}
M.~F. Barnsley, {\em Superfractals}.
\newblock Cambridge University Press, 2006.

\bibitem{didinsky1995design}
G.~Didinsky, {\em Design of minimax controllers for nonlinear systems using
  cost-to-come methods}.
\newblock University of Illinois at Urbana-Champaign, 1995.

\bibitem{bertsekas1975convergence}
D.~Bertsekas, ``Convergence of discretization procedures in dynamic
  programming,'' {\em IEEE Transactions on Automatic Control}, vol.~20, no.~3,
  pp.~415--419, 1975.

\bibitem{karimi2019reducing}
D.~Karimi and S.~E. Salcudean, ``Reducing the hausdorff distance in medical
  image segmentation with convolutional neural networks,'' {\em IEEE
  Transactions on medical imaging}, vol.~39, no.~2, pp.~499--513, 2019.

\bibitem{aydin2021usage}
O.~U. Aydin, A.~A. Taha, A.~Hilbert, A.~A. Khalil, I.~Galinovic, J.~B. Fiebach,
  D.~Frey, and V.~I. Madai, ``On the usage of average hausdorff distance for
  segmentation performance assessment: hidden error when used for ranking,''
  {\em European radiology experimental}, vol.~5, pp.~1--7, 2021.

\bibitem{dave2022additive}
A.~Dave, N.~Venkatesh, and A.~A. Malikopoulos, ``On robust control of partially
  observed uncertain systems with additive costs,'' in {\em Proceedings of the
  2023 American Control Conference (ACC)}, pp.~4639--4644, 2023.

\end{thebibliography}

\end{document}